\newtheorem{theorem}{Theorem}[section]
\newtheorem{lemma}[theorem]{Lemma}
\newtheorem{corollary}[theorem]{Corollary}
\newtheorem{proposition}[theorem]{Proposition}
\newtheorem{definition}{Definition}[section]
\newtheorem{example}[theorem]{Example}
\numberwithin{equation}{section}
\newcommand{\nc}{\newcommand}
\nc{\pq}{\left[^{\boldsymbol{\delta}}_{\boldsymbol{\epsilon}}\right]}
\nc{\mpq}{\left[^{-\boldsymbol{\delta}}_{-\boldsymbol{\epsilon}}\right]}
\nc{\pqt}{\left[^{\tilde{\boldsymbol{\delta}}}_{\tilde{\boldsymbol{\varepsilon}}}
\right]}
\nc{\de}{\left[^{\delta}_{\varepsilon}\right]}
\DeclareMathOperator{\Der}{Der}
\begin{document}

..
\dedicatory{To Professor Emma Previato on the occasion of her 65th birthday}

\title[Multi-variable sigma-functions: old and new results]{Multi-variable sigma-functions: old and new results}
\author{V.M.Buchstaber}
\address{Steklov Mathematical Institute, Moscow}
\email{buchstab@mi-ras.ru}
\author{V.Z. Enolski}
\address{National University of Kyiv-Mohyla Academy}
\email{venolski@gmail.com,enolsky@ukma.edu.ua}
\author{D.V.Leykin}
\address{Institute of Magnetism NASU}
\email{dmitry.leykin@gmail.com}

\maketitle

\begin{abstract}
  We consider multi-variable sigma function of a genus $g$
  hyperelliptic curve as a function of two group of variables -
  jacobian variables and parameters of the curve.  In the theta-functional
  representation of sigma-function, the second group arises as periods
  of first and second kind differentials of the curve. We develop
  representation of periods in terms of theta-constants. For the first
  kind periods, generalizations of Rosenhain type formulae are obtained,
  whilst for the second kind periods theta-constant expressions are
  presented which are explicitly related to
  the fixed co-homology basis.\\
  We describe a method of constructing differentiation operators for
  hyperelliptic analogues of $\zeta$- and $\wp$-functions on the
  parameters of the hyperelliptic curve.  To demonstrate this method,
  we gave the detailed construction of these operators in the cases of
  genus 1 and 2.
\end{abstract}

\section{Introduction}
Our note belongs to an area in which Emma Previato took active part in
the development. Since the time of first publication of the present
authors \cite{bel97} she has inspired them, and given them a lot of
suggestions and advices.

The area under consideration is the construction of Abelian functions
in terms of multi-variable $\sigma$-functions.  Similarly to the
Weierstrass elliptic function, the multi-variable sigma keeps the same
main property - it remains form-invariant at the action of the
symplectic group.  Abelian functions appear as logarithmic derivatives
like the $\zeta, \wp$-functions of the Weierstrass theory and similarly to
the standard theta-functional approach which lead to the Krichever formula
for KP solutions, \cite{kr1977}.  But a fundamental difference between
sigma and theta-functional theories is the following.  They are both
constructed by the curve and given as series in Jacobian variables,
but in the first case the expansion is purely algebraic with respect
to the model of the curve since its coefficients are polynomials in
parameters of the defining curve equation, whilst in the second case
coefficients are transcendental, being built in terms of Riemann
matrix periods which are complete Abelian integrals.

In many publications, in particular see
\cite{oni05,eemop07,bel12,ny12,ehkklp12,an13,bef13,mp14,nak16,nak18}
and references therein, it was demonstrated that multi-variable
$\wp$-functions represent a language which is very suitable to speak
about completely integrable systems of KP type. In particular, very
recently, using the fact that sigma is an entire function in the parameters
of the curve (in contrast with theta-function), families of degenerate
solutions for solitonic equations were obtained \cite{bl18,ben18}.  In
recent papers \cite{bm17} and \cite{bm18}, an algebraic construction
of a wide class of polynomial Hamiltonian integrable systems was
given, and those of them whose solutions are given by hyperelliptic
$\wp$-functions were indicated.

Revival of interest in multi-variable $\sigma$-functions is in many
respects guided by H.Baker's exposition of the theory of Abelian
functions, which go back to K.Weierstrass and F.Klein and is well
documented and developed in his remarkable monographs
\cite{bak998,bak903}. The heart of his exposition is the representation of
fundamental bi-differential of the hyperelliptic curve in algebraic form,
in contrast with the representation as a double differential
of Riemann theta-function developed by Fay in the monograph,
\cite{fay973}. Recent investigations have demonstrated that Baker's
approach can be extended beyond hyperelliptic curves to wider classes of
algebraic curves.

The multi-variable $\sigma$-function of algebraic curve $\mathcal{C}$
is known to be represented in terms of $\theta$-function of the curve
as a function of two groups of variables - the Jacobian of the curve,
$\mathrm{Jac}(\mathcal{C})$ and the Riemann matrix $\tau$.  In the
vast amount of recent publications, properties of the
$\sigma$-function as a function of the first group of variables were
discussed, whilst modular part of variables and relevant objects like
$\theta$-constant representations of complete Abelian integrals are
considered separately. In this paper we deals with $\sigma$ as a
function over both group of variables.

Due to the pure modular part of the $\sigma$-variables, we consider
problem of expression of complete integrals of first and second kind
in terms of theta-constants.  A revival of interest in this
classically known material accords to many recent publications
reconsidering such problems such as the Schottky problem \cite{fgs17},
the Thomae \cite{eg06,ef08,ekz18} and Weber \cite{nr17} formulae, and the
theory of invariants and its applications \cite{ksv05,ef16}.

The paper is organized as follows. In the Section 2 we consider the
hyperelliptic genus $g$ curve and the complete co-homology basis of
$2g$ meromorphic differentials, with $g$ of them chosen as holomorphic
ones.  We discuss expressions for periods of these differentials in
terms of $\theta$-constants with half integer
characteristics. Theta-constants representations of periods of
holomorphic integrals is known from the Rosenhain memoir
\cite{ros851}, where the case of genus two was elaborated.  We discus
this case and generalize the Rosenhain expressions to higher genera
hyperelliptic curves. Theta constant representations of second kind
periods is known after F.Klein \cite{klein888}, who presented closed
formula in terms of derivated even theta-constants for the
non-hyperelliptic genus three curve. We re-derive this formula for
higher genera hyperelliptic curves.

Section 3 is devoted to a classically known problem, which was
resolved in the case of elliptic curves by Frobenius and Stickelberger
\cite{fs882}. The general method of the solution of this problem for a
wide class, the so-called $(n, s)$ - curves, has been developed in
\cite{bl08} and represents extensions of the Weierstrass's method for
the derivation of system of differential equations defining the
sigma-function. All stages of derivation are given in details and the
main result is that the sigma-function is completely defined as the
solution of a system of heat conductivity equations in a nonholonomic
frame.  We also consider there another widely known problem - the
description of the dependence of the solutions on initial data. This
problem is formulated as a description of the dependence of the
integrals of motion, which levels are given as half of the curve
parameters, from the remaining half of the parameters.  The
differential formulae obtained permit us to present effective
solutions of this problem for Abelian functions of the hyperelliptic
curve.  It is noteworthy that because integrals of motion can be
expressed in terms of second kind periods in this place, the results
of Section 2 are required. The results obtained in Section 3 are
exemplified in details by curves of genera one and two.  All
consideration is based on explicit uniformization of the space of
universal bundles of the hyperelliptic Jacobian.

\section{Modular representation of periods of hyperelliptic co-homologies}
Modular invariance of the Weierstrass elliptic $\sigma$-function, $\sigma=\sigma(u;g_2,g_3)$
follows from its defining in in \cite{wei885} in terms of recursive series in terms of
variables $(u,g_2,g_3)$. Alternatively $\sigma$-function can be represented
in terms of Jacobi $\theta$-function and its modular invariance follows from transformation
properties of $\theta$-functions. The last representation involves complete elliptic integrals
of first and second kind and their representations in terms of $\theta$-constants are classically known.
In this section we are studying generalizations of these representations to hyperelliptic curves
of higher genera realized in the form
\begin{align} y^2 = P_{2g+1}(x)=   (x-e_1)
\cdots (x-e_{2g+1}) \label{HCurve} \end{align}
Here  $P_{2g+1}(x)$ - monic polynomial of degree $2g+1$, $e_i\in \mathbb{C}$ - branch points  and the curve supposed
to be non-degenerate, i.e. $e_i\neq e_j$.

\subsection{\underline{Problems and methods}}
Representations of complete elliptic integrals of first and second kind in terms of Jacobi $\theta$-constants are classically known. In particular, if elliptic curve is given in Legendre form\footnote{Here and below we punctually follows notations of elliptic functions theory fixed in \cite{be955} }
\begin{equation}
y^2=(1-x^2)(1- k^2x^2)
\end{equation}
where $k$ is Jacobian modulus, then complete elliptic integrals of the first kind $K=K(k)$
is represented as
\begin{align}    K=  \int_{0}^1 \frac{\mathrm{d} x}{ \sqrt{ (1-x^2)(1-k^2 x^2) }}= \frac{\pi}{2} \vartheta_3^2(0;\tau), \label{FirstKind}
 \end{align}
and  $\vartheta_3=\vartheta_3(0|\tau)$ and $\tau=\imath \frac{K'}{K}$, $K'=K(k'),  k^2+{k'}^2=1$.

Further, for elliptic curve realized as Weierstrass cubic
\begin{equation} y^2= 4 x^3-g_2 x - g_3=  4 (x-e_1)(x-e_2)(x-e_3) \label{WCubic}   \end{equation}
recall standard notations for periods of first and second kind elliptic integrals
\begin{align}  \begin{split}
  2\omega &= \oint_{\mathfrak{a}} \frac{\mathrm{d}x}{y}, \quad  2\eta = -\oint_{\mathfrak{a}} \frac{x\mathrm{d}x}{y}\\
  2\omega' &= \oint_{\mathfrak{b}} \frac{\mathrm{d}x}{y}, \quad  2\eta' = -\oint_{\mathfrak{b}} \frac{x\mathrm{d}x}{y}
\end{split} \hskip1.5cm    \tau = \frac{\omega'}{\omega}
\end{align}
and Legendre relation for them
 \begin{equation}  \quad\omega\eta'-\eta\omega' = - \frac{\imath \pi}{2}  \label{LegendreRel}    \end{equation}
Then the following Weierstrass relation is valid
\begin{align} \eta=-  \frac{1}{12\omega} \left(\frac{\vartheta_2''(0)}{\vartheta_2(0)}
+ \frac{\vartheta_3''(0)}{\vartheta_3(0)}+ \frac{\vartheta_4''(0)}{\vartheta_4(0)}\right)
\label{SecondKind}
\end{align}
In this section we are discussing generalization of these relations to higher genera hyperelliptic curves
realized as in (\ref{HCurve}).

\subsection{\underline{Definitions and main theorems}}
In this subsection we reproduce  H.Baker \cite{bak907} notations.
Let $\mathcal{C}$ be genus $g$ non-degenerate
hyperelliptic  curve realised as double cover of Riemann sphere,
\begin{align}
y^2= 4 \prod_{j=1}^{2g+1}(x-e_j)\equiv 4x^{2g+1}+ \sum_{i=0}\lambda_i x^i,\quad e_i\neq e_j,\; \lambda_i
\in \mathbb{C}
\label{hyperelliptic}
\end{align}
Let $({\mathfrak{a}};{\mathfrak{b}})
=( \mathfrak{a}_1,\ldots,\mathfrak{a}_g;  \mathfrak{b}_1,\ldots,\mathfrak{b}_g )  $ be canonic homology basis.
Introduce co-homology basis ({\em Baker co-homology basis})
\begin{align} \begin{split} \mathrm{d}{u} (x,y)=(  \mathrm{d}u_1 (x,y), \ldots,
 \mathrm{d}u_g (x,y))^T,   \mathrm{d}{r} (x,y)=(  \mathrm{d}r_1 (x,y),    \ldots,
    \mathrm{d}r_g (x,y))^T\\
\mathrm{d}u_i(x,y)=\frac{x^{i-1}}{y}\mathrm{d}x,\quad \mathrm{d}r_j(x,y)=\sum_{k=j}^{2g+1-j}(k+1-j)
\frac{x^k}{4y}\mathrm{d}x,\quad i,j=1,\ldots,g\end{split}\label{bakerbasis}
\end{align}
satisfying to the generalized Legendre relation,
\begin{align}
\mathfrak{M}^TJ\mathfrak{M}=-\frac{\imath \pi}{2} J,\qquad \mathfrak{M}=\left(\begin{array}{cc} \omega&\omega'\\
                        \eta& \eta' \end{array}\right), \quad J = \left( \begin{array}{cc}  0_g& 1_g \\ -1_g&0_g
\end{array}  \right)
\end{align}
where $g\times g$ period matrices $\omega,\omega', \eta, \eta'$ are defined as
\begin{align}
2\omega=\left( \oint_{\mathfrak{a}_j}\mathrm{d}u_i \right),\quad
2\omega'=\left( \oint_{\mathfrak{b}_j}\mathrm{d}u_i \right), \quad
2\eta=-\left( \oint_{\mathfrak{a}_j}\mathrm{d}r_i \right),\quad
2\eta'=-\left( \oint_{\mathfrak{b}_j}\mathrm{d}r_i \right)
\end{align}
We also denote $\mathrm{d}v=(\mathrm{d}v_1,\ldots,\mathrm{d}v_g )^T =  (2\omega)^{-1} \mathrm{d}u$ vector of 
normalized holomorphic differentials.

Define Riemann matrix $\tau= \omega^{-1}\omega'$ belonging to Siegel half-space $\mathcal{S}_g =\{ \tau^T=\tau,\;  \mathrm{Im} \tau>0 \}  $. Define Jacobi variety of
the curve $\mathrm{Jac}(\mathcal{C})=\mathbb{C}^g/ 1_g\oplus \tau  $. Canonic Riemann $\theta$-function
is defined on $\mathrm{Jac}(\mathcal{C})\times \mathcal{S}_g$ by Fourier series
\begin{equation}
\theta({z};\tau) = \sum_{\mathbb{n} \in \mathbb{Z}^n} \mathrm{e}^{ \imath \pi {n}^T\tau {n}
+ 2\imath \pi {z}^T {n}  }\label{thetacan}
\end{equation}
We will also use $\theta$-functions with half-integer characteristics $[\varepsilon] =
\left[\begin{array}{c} {\varepsilon'}^T\\ {\varepsilon''} \end{array}  \right]$,
$\varepsilon_i', \varepsilon_j'' = 0$ or $1$ defined as
 \begin{equation}
\theta[\varepsilon]({z};\tau) = \sum_{\mathbb{n} \in \mathbb{Z}^n} \mathrm{e}^{ \imath \pi
( {n+\varepsilon'}/2)^T
\tau( {n+\varepsilon'}/2)
+ 2\imath \pi ({z+\varepsilon''}/2)^T( n+\varepsilon'/2)  }\label{thetachar}
\end{equation}
Characteristic is even or odd whenever  $ {\varepsilon'}^T\varepsilon'' = 0 $ (mod 2) or  1 (mod 2) and $\theta[\varepsilon](z;\tau)$ as function of $z$ inherits parity of the characteristic.

Derivatives of $\theta$-functions by arguments $z_i$ will be denoted as
\[ \theta_i[\varepsilon](z;\tau) = \frac{\partial}{\partial z_i}  \theta[\varepsilon](z;\tau),
\quad \theta_{i,j}[\varepsilon](z;\tau) = \frac{\partial^2}{\partial z_i\partial z_j}  \theta[\varepsilon](z;\tau) , \quad\text{etc.}  \]

Fundamental bi-differential $\Omega(P,Q)$ is uniquely definite on
the product $(P,Q)\in \mathcal{C}\times \mathcal{C}$ by following conditions:

{\bf i} $\Omega$ is symmetric, $\Omega(P,Q)=\Omega(Q,P)$

{\bf ii} $\Omega$ is normalized by the condition
\begin{align}
\oint_{\mathfrak{a}_i} \Omega(P,Q)=0, \quad i=1,\ldots,g
\end{align}

{\bf iii} Let $P=(x,y)$ and $Q=(z,w)$ have local coordinates $\xi_1=\xi(P)$, $\xi_2=\xi(Q)$ in the vicinity
of point $R$, $\xi(R)=0$, then $\Omega(P,Q)$ expands to power series as
\begin{equation}
\Omega(P,Q)= \frac{\mathrm{d}\xi_1 \mathrm{d}\xi_2}{( \xi_1-\xi_2)^2 } + \; \text{homorphic 2-form}
\end{equation}

Fundamental bi-differential can be expressed in terms of $\theta$-function \cite{fay973}
\begin{equation}
\Omega(P,Q)= \mathrm{d}_x\mathrm{d}_z\theta\left( \int_{Q}^P\mathrm{d}{v} + {e}\right), \quad P=(x,y), Q=(z,w)
\end{equation}
where $\mathrm{d}v$ is normalized holomorphic differential and ${e}$ any non-singular point of the $\theta$-divisor $(\theta)$,
i.e. $\theta({e})=0$, but not all
$\theta$-derivatives, $ \partial_{z_i}\theta({z})\vert_{{z}={e}}  $, $i=1,\ldots,g$
vanish.

In the case of hyperelliptic curve $\Omega(P,Q)$ can be alternatively constructed as
\begin{equation}
\Omega(P,Q) =\frac12 \frac{\partial}{\partial z} \frac{ y+w }{ y(x-z)} \mathrm{d}x\mathrm{d}z
+ \mathrm{d}{r}(P)^T \mathrm{d}{u} (Q) + 2\mathrm{d}{u}^T (P)\varkappa
\mathrm{d}{u} (Q)\label{omega1}
\end{equation}
where first two terms are given as rational functions of coordinates $P,Q$ and  necessarily symmetric matrix
$\varkappa^T=\varkappa$,  $\varkappa=\eta(2\omega)^{-1}$ is introduced
to satisfy the normalization condition ${\bf ii}$. In shorter form (\ref{omega1}) cab be rewritten as
\begin{equation}
\Omega(P,Q) =\frac{2 yw +F(x,z)}{ 4 (x-z)^2  yw } \mathrm{d}x\mathrm{d}z + 2\mathrm{d}{u}^T (P)\varkappa
\mathrm{d}{u} (Q)\label{omega1}
\end{equation}
where $F(x,z)$ is so-called Kleinian 2-polar, given as
\begin{equation}
F(x,z)=\sum_{k=0}^g x^kz^k  \left( 2\lambda_{2k}+\lambda_{2k+1}(x+z)  \right) \label{polar}
\end{equation}
Recently algebraic representation for $\Omega(P,Q)$ similar to (\ref{omega1}) found in
\cite{suz17}, \cite{eyn18} for wide class on algebraic curves, included $(n,s)$-curves \cite{bel999}.

Main relation lying in the base of the theory is Riemann formula represented third kind Abelian integral
as $\theta$-quotient written in terms of  described above realization of the fundamental differential $\Omega(P,Q)$.

\begin{theorem} (Riemann) Let  $P'=(x',y')$ and $P''=(x'',y'')$ are two arbitrary distinct points
on $\mathcal{C}$ and let $\mathcal{D}'=\{ P_1'+\ldots+P_{g}'\}$ and  $\mathcal{D}''=\{ P_1''+\ldots+P_{g}''\}$
are two non-special divisors of degree $g$. Then the following relation is valid
\begin{align}\begin{split}
&\int_{P''}^{P'} \sum_{j=1}^g\int_{P_j'}^{P_j''} \left\{  \frac{2yy_i+F(x,x_i)}{4(x-x_i)^2}\frac{\mathrm{d}x}{y}\frac{\mathrm{d}x_i}{y_i} + 2 \mathrm{d}{u}(x,y)\varkappa \mathrm{d} {u}(x_i,y_i)   \right\}  \\
&=\mathrm{ln}
\left(\frac{  \theta(\mathcal{A}(P')-\mathcal{A}(\mathcal{D}') +{K}_{\infty} )}
           { \theta(\mathcal{A}(P')-\mathcal{A}(\mathcal{D}'') +{K}_{\infty} ) }       \right)
-\mathrm{ln}
\left(\frac{\theta(\mathcal{A}(P'')-\mathcal{A}(\mathcal{D}') +{K}_{\infty} )}
           {\theta(\mathcal{A}(P'')-\mathcal{A}(\mathcal{D}'') +{K}_{\infty} ) } \right)
\end{split} \label{Riemann1}
\end{align}
where $\mathcal{A}(P)= \int_{\infty}^{P} \mathrm{d}{v}$ Abel map with base point $\infty$,
${K}_{\infty}$ - vector of Riemann constants with bases point $\infty$ which is a half-period.
\end{theorem}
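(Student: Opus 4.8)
The plan is to read the double integral as an iterated integral of the fundamental bidifferential, and to recognise each side, as a function of the outer endpoint $P'$, as a primitive of one and the same normalized differential of the third kind; the identity then follows by integrating and fixing the additive constant.

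First I would observe that the inner integrand in (\ref{Riemann1}) is precisely $\Omega(P,P_j)$ written in the algebraic form (\ref{omega1}), with $P=(x,y)$ and $P_j=(x_i,y_i)$; the term $2\,\mathrm{d}u^T(P)\varkappa\,\mathrm{d}u(P_j)$ is exactly the piece enforcing the $\mathfrak{a}$-normalization {\bf ii}. Hence, writing $\eta(P):=\sum_{j=1}^g\int_{P_j'}^{P_j''}\Omega(P,\cdot)$, the left-hand side equals $\int_{P''}^{P'}\eta$. Since integrating $\Omega(P,Q)$ in $Q$ from $P_j'$ to $P_j''$ produces a third-kind differential in $P$ whose diagonal double pole integrates to logarithmic endpoints, $\eta$ has simple poles exactly on $\mathcal D''$ and $\mathcal D'$ with residues $\pm1$, and vanishing $\mathfrak a$-periods by {\bf ii}.

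Next I would treat the right-hand side, whose $P'$-differential is $\mathrm{d}_{P'}\ln\theta(\mathcal A(P')-\mathcal A(\mathcal D')+K_\infty)-\mathrm{d}_{P'}\ln\theta(\mathcal A(P')-\mathcal A(\mathcal D'')+K_\infty)$. By the Riemann vanishing theorem, for a non-special degree-$g$ divisor $\mathcal D$ the section $P\mapsto\theta(\mathcal A(P)-\mathcal A(\mathcal D)+K_\infty)$ has divisor exactly $\mathcal D$; thus each logarithmic derivative is a third-kind differential with simple poles of residue $+1$ on the respective divisor, with vanishing $\mathfrak a$-periods because $\theta$ is invariant under the integer shifts picked up by $\mathcal A$ along the $\mathfrak a$-cycles. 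Consequently this differential shares with $\eta$ its polar divisor, its residues $\pm1$ on $\mathcal D'\cup\mathcal D''$ (matched up to orientation, see below) and its vanishing $\mathfrak a$-periods; uniqueness of the normalized third-kind differential forces the two to coincide (here the non-speciality of $\mathcal D',\mathcal D''$ is essential). Integrating from $P''$ to $P'$ and fixing the constant by the vanishing of both sides at $P'=P''$ — and symmetrically in $P''$ — yields (\ref{Riemann1}). Equivalently, one may integrate the exact mixed differential $\Omega=\mathrm d_P\mathrm d_Q\ln\theta(\mathcal A(P)-\mathcal A(Q)+e)$ directly in both variables, giving
\[
\int_{P''}^{P'}\int_{P_j'}^{P_j''}\Omega(P,Q)=\ln\frac{\theta(\mathcal A(P')-\mathcal A(P_j'')+e)\,\theta(\mathcal A(P'')-\mathcal A(P_j')+e)}{\theta(\mathcal A(P')-\mathcal A(P_j')+e)\,\theta(\mathcal A(P'')-\mathcal A(P_j'')+e)},
\]
and then collapse the resulting product over $j$ of two-point $\theta$-factors into the divisor-$\theta$-factors on the right by the Riemann vanishing theorem together with Fay's product formula \cite{fay973}.

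The main obstacle is the bookkeeping in this last identification: pinning down the vector of Riemann constants $K_\infty$ attached to the base point $\infty$, and checking that all divisor-independent prefactors and half-differential weights cancel in the cross-ratio so that precisely $+K_\infty$ enters each $\theta$-argument with the correct overall sign. This requires fixing once and for all the orientation of the homology basis, the direction of the Abel map $\mathcal A(P)=\int_\infty^P\mathrm dv$ and of the chains $[P_j',P_j'']$, and tracking the $2\pi\imath$- and period-ambiguities of the multivalued logarithm, modulo which both sides of (\ref{Riemann1}) are in any case defined. The analytic content — that $\eta$ is normalized and that the $\theta$-quotient has the asserted zeros and poles — is comparatively routine once the Riemann vanishing theorem is in hand.
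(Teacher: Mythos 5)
The paper does not prove this statement at all: it is quoted as Riemann's classical theorem (in Baker's algebraic realization of the fundamental bi-differential) and used as the starting point of the theory, so there is no in-paper proof to compare yours against. Judged on its own, your sketch is the standard and essentially correct argument: recognize the integrand as $\Omega(P,Q)$ in the form (\ref{omega1}), integrate in $Q$ over the chains $[P_j',P_j'']$ to produce a third-kind differential with vanishing $\mathfrak{a}$-periods, identify the $P'$-differential of the right-hand side as the same normalized third-kind differential via the Riemann vanishing theorem for non-special divisors, invoke uniqueness, and integrate. Your alternative route through $\Omega=\mathrm{d}_P\mathrm{d}_Q\ln\theta$ plus Fay's addition formula is also viable but strictly heavier, since collapsing the product of two-point factors into the divisor factors is itself a nontrivial identity.

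One piece of the ``bookkeeping'' you defer is not merely cosmetic and should be carried out: with the orientations as literally written, the left-hand differential $\sum_j\int_{P_j'}^{P_j''}\Omega(P,\cdot)$ has residue $+1$ on $\mathcal{D}''$ and $-1$ on $\mathcal{D}'$, whereas the $P'$-differential of the stated right-hand side has residue $+1$ on $\mathcal{D}'$ and $-1$ on $\mathcal{D}''$; so the two normalized third-kind differentials you are matching are negatives of one another unless one reverses the chains $[P_j',P_j'']$ or inverts the $\theta$-quotients. This is a convention/typo-level discrepancy in the statement rather than a flaw in your method, but a complete proof must resolve it explicitly rather than leave it ``up to orientation.'' The same goes for the claim that the $\mathfrak{a}$-periods of $\mathrm{d}\ln\theta(\mathcal{A}(P)-\mathcal{A}(\mathcal{D})+K_\infty)$ vanish: quasi-periodicity only gives vanishing modulo $2\pi\imath\mathbb{Z}$, and you need a short continuity (or argument-principle) remark to pin them to zero before uniqueness of the normalized differential can be applied.
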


Introduce multi-variable fundamental $\sigma$-function,
\begin{equation}
\sigma({u}) = C \theta[{K}_{\infty}]( (2\omega)^{-1}{u})
\mathrm{e}^{ {u}^T\varkappa{u}   },
\end{equation}
where $[{K}_{\infty}]$ is characteristic of the vector of Riemann constants,
${u}= \int_{\infty}^{P_1} \mathrm{d}{u} + \ldots +  \int_{\infty}^{P_g} \mathrm{d}{u}   $
with non-special divisor $P_1+\ldots+P_g$. The constant $C$  is chosen so that expansion $\sigma(\boldsymbol{u})$
near ${u}\sim 0$ starts with a Schur-Weierstrass polynomial \cite{bel999}.
The whole expression is proved to be invariant under the action of symplectic group $\mathrm{Sp}(2g,\mathbb{Z})$.
Klein-Weierstrass multi-variable $\wp$-functions are introduced as logarithmic derivatives,
\begin{align}
\wp_{i,j}({u})=-\frac{\partial^2}{\partial u_i\partial u_j}, \quad \wp_{i,j,k}
({u})=-\frac{\partial^3}{\partial u_i\partial u_j\partial u_k},\quad \text{etc.} \quad
i,j,k = 1,\ldots g
\end{align}

\begin{corollary} For $r\neq s \in \{1,\ldots, g\}$ the following formula is valid
\begin{equation}
\sum_{i,j=1}^g\wp_{i,j} \left( \sum_{k=1}^g \int_{\infty}^{(x_k,y_k)} \mathrm{d}{u}   \right)x_s^{i-1} x_r^{j-1}
= \frac{F(x_s,x_r)-2y_sy_r}{ 4(x_s-x_r)^2}
\end{equation}
\end{corollary}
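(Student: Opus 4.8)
The plan is to reduce the stated identity to a second-derivative identity for $\ln\sigma$ and then to read that identity off Riemann's formula (the preceding theorem). First I would apply the chain rule. Writing $u=\sum_{k=1}^g\int_\infty^{(x_k,y_k)}\mathrm{d}u$ and treating $x_1,\dots,x_g$ as independent coordinates (with $y_k$ fixed by the sheet), one has $\partial u_i/\partial x_s=x_s^{i-1}/y_s$ from $\mathrm{d}u_i=x^{i-1}\mathrm{d}x/y$, while the mixed derivative $\partial^2 u_i/\partial x_s\partial x_r$ vanishes for $r\neq s$. Hence, using $\wp_{i,j}=-\partial_{u_i}\partial_{u_j}\ln\sigma$,
\[
\frac{\partial^2}{\partial x_s\partial x_r}\ln\sigma(u)=-\frac{1}{y_sy_r}\sum_{i,j=1}^g\wp_{i,j}(u)\,x_s^{i-1}x_r^{j-1},
\]
so the claim is equivalent to the two-point identity
\[
\frac{\partial^2}{\partial x_s\partial x_r}\ln\sigma(u)=\frac{2y_sy_r-F(x_s,x_r)}{4\,y_sy_r\,(x_s-x_r)^2}.
\]

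Next I would produce this from Riemann's formula by passing from $\theta$ to $\sigma$ via $\sigma(v)=C\theta[K_\infty]((2\omega)^{-1}v)\mathrm{e}^{v^T\varkappa v}$. Writing $\mathbf{u}(P)=\int_\infty^P\mathrm{d}u$ and setting $a=\mathbf{u}(P'),b=\mathbf{u}(P''),c=\mathbf{u}(\mathcal{D}'),d=\mathbf{u}(\mathcal{D}'')$, a short bilinear computation shows that the four quadratic exponents attached to the four $\theta$-quotients collapse to $2(a-b)^T\varkappa(d-c)$, which is exactly the value of the double integral of the second-kind term $2\mathrm{d}u^T\varkappa\mathrm{d}u$ on the left of Riemann's formula. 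Thus the $\varkappa$-contributions cancel and one is left with the purely algebraic kernel $\tfrac{2yw+F(x,z)}{4(x-z)^2}\tfrac{\mathrm{d}x}{y}\tfrac{\mathrm{d}z}{w}$ equated to a difference of logarithms of $\sigma$. Differentiating this $\sigma$-form once with respect to the outer endpoint and once with respect to an endpoint of one divisor collapses the double integral to its integrand and turns the right-hand side into $\sum_{i,j}\wp_{i,j}\,\mathrm{d}u_i\,\mathrm{d}u_j$; cancelling the common factor $\tfrac{\mathrm{d}x}{y}\tfrac{\mathrm{d}z}{w}$ then yields a Klein-type formula of the required shape.

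The main obstacle is bookkeeping rather than analysis. One must track the vector of Riemann constants $K_\infty$ and the linear exponential factors coming from the characteristic shift through the two differentiations, and pin down the overall sign. The most delicate point, however, is to match the argument of $\wp$ generated by the differentiation, which is naturally a \emph{difference} of Abel images such as $\mathbf{u}(P')-\mathbf{u}(\mathcal{D}'')$, with the positive divisor sum $u=\sum_k\int_\infty^{(x_k,y_k)}\mathrm{d}u$ of the statement. Here the hyperelliptic involution $(x,y)\mapsto(x,-y)$, under which $\int_\infty^{(x,y)}\mathrm{d}u=-\int_\infty^{(x,-y)}\mathrm{d}u$, is essential: it converts the sign of a contribution in the Abel image into a change of sheet, and since $F(x_s,x_r)$ is independent of the $y$'s while the cross term $2y_sy_r$ is not, it is exactly this involution that fixes the relative sign in the numerator $F(x_s,x_r)-2y_sy_r$, with the evenness $\wp_{i,j}(-u)=\wp_{i,j}(u)$ invoked along the way. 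A coincidence (Jacobi-inversion) argument is then needed to bring the configuration onto the correct $g$-point stratum so that the remaining auxiliary points drop out and the argument reduces to $u$ itself.
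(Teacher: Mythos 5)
The paper states this as an immediate corollary of Theorem 2.1 and supplies no proof, so there is nothing to compare line by line; your route (differentiate Riemann's formula once in the outer endpoint and once in a divisor point, pass from $\theta$ to $\sigma$ so that the quadratic $\varkappa$-exponents absorb the $2\,\mathrm{d}u^T\varkappa\,\mathrm{d}u$ part of $\Omega$, then use the involution and evenness of $\wp_{i,j}$ to land on $u$) is exactly the standard derivation that the word ``Corollary'' is pointing at. Your opening reduction is correct: for $r\neq s$ the chain rule does give
\[
\frac{\partial^2}{\partial x_s\partial x_r}\ln\sigma(u)=-\frac{1}{y_sy_r}\sum_{i,j}\wp_{i,j}(u)\,x_s^{i-1}x_r^{j-1},
\]
and your bilinear identity for the four quadratic exponents (alternating sum equal to $2(a-b)^T\varkappa(d-c)$, matching the double integral of the $\varkappa$-term) is the right mechanism for eliminating $\varkappa$.

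The one genuine gap is the step you defer to ``a coincidence (Jacobi-inversion) argument.'' This is the only non-mechanical point, and the obvious coincidences fail: differentiating in $P'$ and in $P_j''$ produces $\wp_{i,j}$ evaluated at $\mathbf{u}(P')-\mathbf{u}(\mathcal{D}'')=\mathbf{u}(P')+\mathbf{u}(\iota\mathcal{D}'')$, the Abel image of a degree $g+1$ configuration; letting $P'$ collide with the conjugate of a point of $\iota\mathcal{D}''$ drops the degree to $g-1$ and puts the argument on the theta divisor, where $\sigma$ vanishes and $\wp_{i,j}$ is singular, while letting $P'\to\infty$ kills the factors $x'^{\,i-1}/y'$ and yields $0=0$. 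The specialization has to use that $\infty$ is a Weierstrass point with $\int_\infty^\infty\mathrm{d}u=0$: take $P'=(x_s,y_s)$ and $\mathcal{D}''=\{(x_k,-y_k)\}_{k\neq s}+\infty$, so that $\mathbf{u}(P')-\mathbf{u}(\mathcal{D}'')=\sum_{k=1}^g\int_\infty^{(x_k,y_k)}\mathrm{d}u=u$ exactly and the differentiation point $(x_r,-y_r)$ still lies in $\mathcal{D}''$; the sign flip $y_r\to-y_r$ in the kernel $\frac{2yw+F}{4(x-z)^2}$ is then what produces the numerator $F(x_s,x_r)-2y_sy_r$. Without pinning this down (together with the residual sign conventions in $\sigma=C\theta[K_\infty]e^{u^T\varkappa u}$, which as printed in the paper do not quite make the $\varkappa$-terms cancel and must be adjusted), the sketch does not yet close.
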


\begin{corollary} Jacobi problem of inversion of the Abel map $\mathcal{D} \rightarrow \mathcal{A}(  \mathcal{D})$
with $\mathcal{D}= (x_1,y_1)+ \ldots+ (x_g,y_g)$ is resolved as
\begin{align}\begin{split}
&x^g-\wp_{g,g}({u})x^{g-1} - \ldots -\wp_{g,1}({u})=0\\
&y_k= \wp_{g,g,g}({u})x_k^{g-1}+ \ldots +  \wp_{g,g,1}({u})
, \quad k=1,\ldots,g\end{split}
\end{align}
\end{corollary}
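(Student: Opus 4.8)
The statement bundles two claims: (A) the abscissae $x_1,\ldots,x_g$ of the divisor are exactly the roots of $\mathcal{R}(X):=X^g-\wp_{g,g}(u)X^{g-1}-\ldots-\wp_{g,1}(u)$, and (B) the ordinates are recovered linearly as $y_k=\sum_{j=1}^g\wp_{g,g,j}(u)x_k^{j-1}$. The plan is to extract (A) from the preceding Corollary and then to deduce (B) from (A) by a single differentiation along the flow generated by $u_g$. Throughout I take the $x_k$ pairwise distinct, the confluent configurations following by continuity.

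First I would prove (A). Fix a point $(x_s,y_s)$ of the divisor and put $v_m:=\sum_{i=1}^g\wp_{i,m}(u)x_s^{i-1}$; by the symmetry $\wp_{i,j}=\wp_{j,i}$ the scalar $v_g=\sum_{j=1}^g\wp_{g,j}(u)x_s^{j-1}$ is precisely the quantity that (A) asserts to equal $x_s^g$. Assemble these into the polynomial $p(X):=\sum_{m=1}^g v_mX^{m-1}$ of degree $g-1$. The preceding Corollary, applied to the $g-1$ pairs $(s,r)$ with $r\neq s$, reads
\[
p(x_r)=\frac{F(x_s,x_r)-2y_sy_r}{4(x_s-x_r)^2},
\]
i.e. $p$ agrees at the $g-1$ abscissae $x_r$ with the two-point Kleinian function $h(X,Y):=\dfrac{F(x_s,X)-2y_sY}{4(x_s-X)^2}$ evaluated along the divisor. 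These $g-1$ conditions fix $p$ only up to an additive multiple of $\prod_{r\neq s}(X-x_r)$, so the one outstanding datum is exactly the leading coefficient $v_g$. To identify it I would expand $h$ at the branch point over $x=\infty$ in the local parameter $\xi$, with $X=\xi^{-2}$ and $Y=2\xi^{-(2g+1)}(1+O(\xi))$: since $F(x_s,X)$ has leading term $4x_s^gX^{g+1}$ while $2y_sY$ contributes only half-integer powers of $X$, the integer-power part of $h$ has leading coefficient $x_s^g$. This is the value $v_g$ must take; granting $v_g=x_s^g$ we obtain $\mathcal{R}(x_s)=0$, and since $s$ was arbitrary all $g$ abscissae are roots of $\mathcal{R}$, which is (A). Equivalently $\mathcal{R}(X)=\prod_{k=1}^g(X-x_k)$.

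Then (B) follows by differentiating (A) along $u_g$. From the logarithmic-derivative definition one has $\wp_{g,g,j}=\partial_{u_g}\wp_{g,j}$, so applying $\partial_{u_g}$ to $\prod_{k}(X-x_k)=X^g-\sum_j\wp_{g,j}(u)X^{j-1}$ and evaluating at $X=x_k$ gives
\[
\sum_{j=1}^g\wp_{g,g,j}(u)\,x_k^{j-1}=\bigl(\partial_{u_g}x_k\bigr)\prod_{l\neq k}(x_k-x_l).
\]
It remains to compute $\partial_{u_g}x_k$. Differentiating $u=\sum_k\int_{\infty}^{(x_k,y_k)}\mathrm{d}u$ yields the linear system $\mathrm{d}u_i=\sum_k\frac{x_k^{i-1}}{y_k}\mathrm{d}x_k$, whose coefficient matrix is a Vandermonde scaled by $\mathrm{diag}(1/y_k)$; inverting it gives $\partial_{u_g}x_k=y_k/\prod_{l\neq k}(x_k-x_l)$. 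The right-hand side then collapses to $y_k$, and (B) follows.

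I expect the main obstacle to be the single normalization $v_g=x_s^g$ underlying (A). The first Corollary supplies only $g-1$ interpolation conditions, one short of determining $p$, so this last scalar is not forced by the Corollary alone: it is the genus-$g$ incarnation of the elementary identity $\wp(u)=x_1$ in genus one. One clean way to isolate it is to assemble all pairs into the matrix identity with entries $\sum_{i,j}\wp_{i,j}(u)x_s^{i-1}x_r^{j-1}$ and observe that only the diagonal (confluent) Kleinian values are missing; alternatively one performs the controlled degeneration $x_r\to\infty$ in the first Corollary. Either route requires handling the half-integer powers of $X$ that appear at the Weierstrass point over $x=\infty$, and this is the delicate technical point of the argument.
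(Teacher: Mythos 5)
The paper states this Corollary without proof (it is a classical result of Baker's theory, ultimately a consequence of the Riemann theorem above it), so there is no in-paper argument to compare against; your proposal has to be judged on its own. Your step (B) is correct and complete given (A): the identities $\wp_{g,g,j}=\partial_{u_g}\wp_{g,j}$, the differentiation of $\prod_k(X-x_k)=X^g-\sum_j\wp_{g,j}(u)X^{j-1}$ along $u_g$, and the inversion of the Jacobian matrix $\partial u_i/\partial x_k=x_k^{i-1}/y_k$ (a Vandermonde matrix scaled by $\mathrm{diag}(1/y_k)$, giving $\partial x_k/\partial u_g=y_k/\prod_{l\neq k}(x_k-x_l)$) all check out.

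The gap is in (A), and it is genuine, as you yourself suspect. The preceding Corollary supplies one equation per unordered pair $\{r,s\}$, i.e.\ $\binom{g}{2}$ linear conditions on the $\binom{g+1}{2}$ unknowns $\wp_{i,j}(u)$; the deficit is exactly $g$, which is exactly the content of (A) (one relation $\sum_j\wp_{g,j}(u)x_s^{j-1}=x_s^g$ per $s$). So (A) is logically independent of that Corollary and cannot be extracted from it by any amount of interpolation. Your proposed normalization --- expanding $h(X,Y)=\bigl(F(x_s,X)-2y_sY\bigr)/\bigl(4(x_s-X)^2\bigr)$ at the branch point over $x=\infty$ and reading off the coefficient $x_s^g$ of $X^{g-1}$ --- computes the right number, but it is not licensed: the Corollary only asserts $p(x_r)=h(x_r,y_r)$ at the $g-1$ points of the divisor distinct from $(x_s,y_s)$, not that $p$ coincides with the integer-power part of $h$ as a function of a point moving on the curve, so there is nothing to expand at infinity. (Note also that the paper itself, in the proof of the ``Conjectural Proposition'' on $\wp$-values at half-periods, has to \emph{inject} the relation $\wp_{k,g}(\Omega_{\mathcal{I}_0})=(-1)^{k+1}s_k(\mathcal{I}_0)$ as an extra known fact alongside the Klein equations --- confirming that the inversion formulas are not consequences of the pairwise Klein relations.) The standard repair is to use the one-variable-point form of the Klein formula, $\sum_{i,j}\wp_{i,j}\bigl(\mathcal{A}(x,y)-\mathcal{A}(\mathcal{D})\bigr)x^{i-1}x_r^{j-1}=\bigl(F(x,x_r)-2yy_r\bigr)/\bigl(4(x-x_r)^2\bigr)$ for an \emph{arbitrary} point $(x,y)$ of the curve, which does follow from the Riemann theorem; letting $(x,y)\to\infty$ in the local parameter $x=\xi^{-2}$, $y=2\xi^{-(2g+1)}(1+O(\xi))$ then legitimately yields the leading coefficient and hence (A). With that substitution your argument closes; without it, (A) remains unproved.
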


\subsection{\underline{$\wp$-values at non-singular even half-periods}}
In this section we present generalization of Weierstrass formulae
\begin{equation}
\wp(\omega)=e_1,\; \wp(\omega+\omega')=e_2,\; \wp(\omega')=e_3
\end{equation}
to the case of genus $g$ hyperelliptic curve (\ref{hyperelliptic}). To do that introduce partitions
\begin{align}\begin{split} \{1,\ldots, 2g+1\} = \mathcal{I}_0\cup \mathcal{J}_0, \quad   \mathcal{I}_0\cap \mathcal{J}_0=\emptyset\\
\mathcal{I}_0=\{ i_1,\ldots, i_g \}, \quad \mathcal{J}_0=\{ j_1,\ldots, j_{g+1} \} \end{split}
 \end{align}
Then any non-singular even half-period $\Omega_{\mathcal{I}}$   is given as
\begin{align}
\Omega_{\mathcal{I}_0} = \int_{\infty}^{( e_{i_1},0  )}\mathrm{d} {u}
+\ldots+  \int_{\infty}^{( e_{i_g},0  )}\mathrm{d} {u},\quad  \mathcal{I}_0= \{i_1,\ldots, i_g\}\subset \{1,\ldots,2g+1\}
 \end{align}

Denote elementary symmetric functions  $s_n(\mathcal{I}_0)$, $S_{n}(\mathcal{J}_0)$  of order $n$
built on branch points $\{ e_{i_k} \}$,
$i_k\in \mathcal{I}_0$, $\{ e_{j_k} \}$, $j_k\in \mathcal{J}_0$ correspondingly.  In particular,
\begin{align}\begin{split}
s_1(\mathcal{I}_0)&=e_{i_1}+\ldots+ e_{i_g}, \hskip1.65cm S_1(\mathcal{J}_0)=e_{j_1}+\ldots+ e_{j_{g+1}}\\
s_2(\mathcal{I}_0)&=e_{i_1}e_{i_2}+\ldots+e_{i_{g-1}} e_{i_g},
\hskip0.5cm S_2(\mathcal{J}_0)=e_{j_1}e_{j_2}+\ldots+e_{j_{g}} e_{i_{g+1}}\\
&\vdots \hskip5.4cm\vdots\\
s_g(\mathcal{I}_0)&=e_{i_1}\cdots e_{i_g} \hskip 2.62cm
S_{g+1}(\mathcal{J}_0)=e_{j_1}\cdots e_{j_{g+1}}
\end{split}
\end{align}
Because of symmetry, $\wp_{p,q}({\Omega}_{\mathcal{I}_0})=\wp_{q,p}({\Omega}_{\mathcal{I}_0})$ enough to
find these quantities for $ p\leq q \in \{1,\ldots,g\}  $. The following is valid

\begin{proposition}(Conjectural Proposition ) Let even non-singular half-period ${\Omega}_{\mathcal{I}_0}$ associate to a partition
$\mathcal{I}_0\cup \mathcal{J}_0=\{1,\ldots,2g+1\}$. Then for all $k,j\geq k, k,j = 1\ldots,g$
the following formula is valid

\begin{align}\begin{split}
&\wp_{k,j}({\Omega}_{\mathcal{I}_0})\\&=(-1)^{k+j}\sum_{n=1}^k n \left( s_{g-k+n}(\mathcal{I}_0)S_{g-j-n+1}(\mathcal{J}_0)
+s_{g-j-n}(\mathcal{I}_0)S_{g+n-k+1}(\mathcal{J}_0) \right),
 \end{split} \label{formula}
\end{align}
\end{proposition}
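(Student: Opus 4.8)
\emph{Proof strategy.} The plan is to specialise the two Corollaries above to the divisor that represents the half-period $\Omega_{\mathcal{I}_0}$ and then to translate the resulting rational expressions in the branch points into the symmetric functions $s_n(\mathcal{I}_0)$ and $S_n(\mathcal{J}_0)$. First I would fix the divisor: since $\Omega_{\mathcal{I}_0}=\sum_{k=1}^g\int_\infty^{(e_{i_k},0)}\mathrm{d}u$, the Jacobi inversion Corollary identifies its divisor as the branch-point divisor $(e_{i_1},0)+\cdots+(e_{i_g},0)$, so the $x$-coordinates $e_{i_1},\dots,e_{i_g}$ are exactly the roots of
\[
x^g-\wp_{g,g}(\Omega_{\mathcal{I}_0})x^{g-1}-\cdots-\wp_{g,1}(\Omega_{\mathcal{I}_0})=\prod_{k=1}^g(x-e_{i_k})=\sum_{m=0}^g(-1)^m s_m(\mathcal{I}_0)\,x^{g-m}.
\]
Matching coefficients already settles the row $k=g$, namely $\wp_{g,\,g-m+1}(\Omega_{\mathcal{I}_0})=(-1)^{m+1}s_m(\mathcal{I}_0)$, which one checks reproduces (\ref{formula}) when $k=g$. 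Moreover, reading the same Corollary for the $y$-coordinates (all zero here) forces $\wp_{g,g,n}(\Omega_{\mathcal{I}_0})=0$ for every $n$ by a Vandermonde argument in the distinct nodes $e_{i_k}$; these vanishing third derivatives are a convenient bookkeeping device for the later steps.

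Next I would obtain the off-diagonal data from the generating-function Corollary. Inserting the branch-point divisor and using $y_s=y_r=0$ gives, for any two distinct $i_s,i_r\in\mathcal{I}_0$,
\[
\sum_{k,j=1}^g\wp_{k,j}(\Omega_{\mathcal{I}_0})\,e_{i_s}^{\,k-1}e_{i_r}^{\,j-1}=\frac{F(e_{i_s},e_{i_r})}{4(e_{i_s}-e_{i_r})^2}.
\]
Hence the symmetric polynomial $R(X,Z)=\sum_{k,j}\wp_{k,j}(\Omega_{\mathcal{I}_0})X^{k-1}Z^{j-1}$, of bidegree $(g-1,g-1)$, is known on the whole off-diagonal grid $\{e_{i_1},\dots,e_{i_g}\}^{2}$. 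Together with the row $k=g$ found above this fixes $R$ completely: for $k,j\le g-1$ the entries $\wp_{k,j}(\Omega_{\mathcal{I}_0})$ solve a Vandermonde-type linear system in the distinct nodes $e_{i_k}$, nonsingular for distinct nodes, so the entries are uniquely determined.

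The analytic core is a closed evaluation of the Klein polar at two branch points. I would prove the lemma
\[
\frac{F(e_a,e_b)}{4(e_a-e_b)^2}=\sum_{m=0}^{g-1}(e_ae_b)^{\,g-1-m}\,\sigma^{(a,b)}_{2m+1},
\]
where $\sigma^{(a,b)}_{\ell}$ is the elementary symmetric function of degree $\ell$ in the $2g-1$ branch points other than $e_a,e_b$. This should follow from the polynomial factorisation $F(x,z)^2-64\prod_{l=1}^{2g+1}(x-e_l)\prod_{l=1}^{2g+1}(z-e_l)=(x-z)^2H(x,z)$, whose double zero along $x=z$ comes from $F(x,x)=8\prod_l(x-e_l)$ and $F_z(x,x)=4\frac{d}{dx}\prod_l(x-e_l)$, specialised to $x=e_a,\ z=e_b$; alternatively it drops out of the even/odd splitting $\tfrac14F(x,z)=2A(xz)+(x+z)B(xz)$ with $\prod_l(x-e_l)=A(x^2)+xB(x^2)$. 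As a check, the lemma returns the genus-one value $e_c$ and the genus-two value $e_ae_b(e_c+e_d+e_e)+e_ce_de_e$.

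The last and hardest step is purely combinatorial. Deconvolving $\prod_l(t-e_l)$ by $(t-e_{i_s})(t-e_{i_r})$ expresses each $\sigma^{(i_s,i_r)}_{2m+1}$ through $s_n(\mathcal{I}_0),\,S_n(\mathcal{J}_0)$ and the two nodes $e_{i_s},e_{i_r}$, after which the interpolation of the second step must collapse onto the single closed form (\ref{formula}). Here I expect the main obstacle: the quantities $\sigma^{(i_s,i_r)}_{\ell}$ mix the remaining points of $\mathcal{I}_0$ with all of $\mathcal{J}_0$, whereas (\ref{formula}) keeps $\mathcal{I}_0$ and $\mathcal{J}_0$ strictly separated, so the separation is visible only after summing the contributions of several pairs $(i_s,i_r)$. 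For $g=2$ the set $\mathcal{I}_0\setminus\{i_s,i_r\}$ is empty and the separation is automatic, which makes that case immediate; for $g\ge3$ I would attempt an induction on $g$, removing one point of $\mathcal{J}_0$ and comparing generating functions, with the matching of the weighted sum $\sum_{n=1}^k n(\cdots)$ as the crux. Absent a clean induction, the identity is at least checkable symbolically for small $g$, which is presumably why it is recorded here only as a conjecture.
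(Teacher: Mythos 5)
Your proposal follows essentially the same route as the paper: both reduce the problem to the Klein linear system $\sum_{i,j}\wp_{i,j}(\Omega_{\mathcal{I}_0})e_{i_r}^{i-1}e_{i_s}^{j-1}=F(e_{i_r},e_{i_s})/4(e_{i_r}-e_{i_s})^2$, pin down the row $\wp_{k,g}=(-1)^{k+1}s_k(\mathcal{I}_0)$ via Jacobi inversion, expand the polar at two branch points in elementary symmetric functions of the remaining $2g-1$ points, and solve. You also leave open exactly the same step the paper does — the general-$g$ combinatorial identification of the solution with the closed form — which the authors handle only by Cramer's rule for $g\leq 5$ and numerical checks beyond, hence the label ``Conjectural Proposition''; your honest flagging of this as the crux is consistent with their treatment.
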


\begin{proof}Klein formula written for even non-singular half period ${\Omega}_{\mathcal{I}_0}$ leads to
linear system of equations with respect to Kleinian two-index symbols $ \wp_{i,j}({\Omega}_{\mathcal{I}_0}) $
\begin{align}
\sum_{i=1}^g\sum_{j=1}^g \wp_{i,j}({\Omega}_{\mathcal{I}_0})   e_{i_r}^{i-1}e_{i_s}^{j-1} =\frac{F(e_{i_r}, e_{i_s})}
{4(e_{i_r} - e_{i_s})^2}\quad  \; i_r, i_s \in \mathcal{I}_0\label{Kleineq}
\end{align}

To solve these equation we note that
\begin{equation}
\wp_{k,g}({\Omega}_{\mathcal{I}_0})= (-1)^{k+1}s_k(\mathcal{I}_0), \quad k=1,\ldots,g
\end{equation}
Also note that $F(e_{i_r}, e_{i_s})$ is divisible by $(e_{i_r}- e_{i_s})^2$ and
\begin{equation}
\frac{F(e_{i_r}, e_{i_s})}{4(e_{i_r} - e_{i_s})^2}= e_{i_r}^{g-1} e_{i_s}^{g-1}\mathfrak{S}_1+
e_{i_r}^{g-2} e_{i_s}^{g-2}\mathfrak{S}_2+\ldots + \mathfrak{S}_{2g-1}\label{JIP1}
\end{equation}
where $\mathfrak{G}_k$ are order $k$ elementary symmetric functions of elements $e_i$
$ i\in \{1, \ldots, 2g+1 \} -  \{ i_r,i_s  \}  $

Let us analyse equations (\ref{Kleineq}) for small genera, $g\leq 5$. One can see that plugging
to the equation (\ref{JIP1}) to (\ref{Kleineq}) we get non-homogeneous linear equations
solvable by Kramer rule and the solutions can be presented in the form (\ref{formula}).

Now suppose that (\ref{formula}) is valid for higher $g>5$ were computer power
 is insufficient
to check that by means of computer algebra. But that's possible to check  (\ref{formula}) for arbitrary big genera
numerically giving to branch points $e_i, i=1,\ldots, i=2g+1$ certain numeric values. Many checking confirmed
 (\ref{formula}) .  \end{proof}

\subsection{\underline{Modular representation of $\varkappa$ matrix}}
Quantities  $\wp_{i,j}({\Omega}_{\mathcal{I}_0})$ are expressed in terms of even $\theta$-constants as follows
\begin{align}
 \wp_{i,j}({\Omega}_{\mathcal{I}_0})=-2\varkappa_{i,j}
- \frac{1}{\theta[\varepsilon_{\mathcal{I}_0}]({0})}
\partial_{{U}_i,{U_j} }^2 \theta[\varepsilon_{\mathcal{I}_0}]({0}), \quad \forall \mathcal{I}_0,\; i,j=1,\ldots, g.
\end{align}
Here $[\varepsilon_{\mathcal{I}_0}]$ is characteristic of the vector $\left[{\Omega}_{\mathcal{I}_0}+{K}_{\infty}  \right]$, where
${K}_{\infty}$ is vector of Riemann constants with base point $\infty$ and
$\partial_{{U}}$ is directional derivative along vector $\boldsymbol{U}_i$, that is $i$th column vector of inverse matrix of $\mathfrak{a}$-periods,
$\mathcal{A}^{-1} = ( {U}_1,\ldots, {U}_g )$. The same formula is valid for all possible partitions
$\mathcal{I}_0\cup \mathcal{J}_0$, there are $N_g$ of that, that is number of non-singular even characteristics,
\begin{equation}
N_g= \left( \begin{array}{c}  2g+1\\ g  \end{array}  \right)
\end{equation}
Therefore one can write

\begin{align}
\begin{split}
\varkappa_{i,j} &=\frac{1}{8N_g} \Lambda_{i,j}- \frac{1}{2 N_g}
\sum_{  \text{All even non-singular} \;\;  [\varepsilon]  }  \frac{\partial^2_{{U}_i,{U_j}}
 \theta[\varepsilon_{\mathcal{I}_0}]({0})}{\theta[\varepsilon_{\mathcal{I}_0}]({0})}\\
\end{split}
\end{align}
where
\begin{align}
\Lambda_{i,j}=-4\sum_{  \text{All partitions} \; \mathcal{I}_0  } \wp_{i,j}({\Omega}_{\mathcal{I}_0})
\end{align}
Denote by $\Lambda_g$ symmetric matrix
\begin{equation} \Lambda_g = ( \Lambda_{i,j} )_{i,j=1,\ldots,g} \label{Lambda1}  \end{equation}

\begin{proposition}  Entries $\Lambda_{k,j}$ at $k\leq j$ to the symmetric matrix $\Lambda$ are given by the formula
\begin{align}\begin{split}
\Lambda_{k,j}&= \lambda_{k+j}  \frac{ \left( \begin{array}{c} 2g+1\\g \end{array}  \right)  }
{ \left( \begin{array}{c} 2g+1\\2g+1-k-j \end{array}  \right)  }
\sum_{n=1}^k n \left[  \left( \begin{array}{c} g\\ g-k+n \end{array}  \right) \left( \begin{array}{c} g+1\\ g-j-n+1 \end{array}  \right)  \right. \\&\hskip 5.5cm + \left.\left( \begin{array}{c} g \\g-j-n \end{array}  \right)
  \left( \begin{array}{c} g+1\\ g-k+n+1 \end{array}  \right)  \right]\end{split}\label{Lambda2}
\end{align}
\end{proposition}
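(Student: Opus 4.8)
The plan is to feed the closed form (\ref{formula}) for $\wp_{k,j}(\Omega_{\mathcal{I}_0})$ into the definition $\Lambda_{k,j}=-4\sum_{\mathcal{I}_0}\wp_{k,j}(\Omega_{\mathcal{I}_0})$, where the sum runs over all $N_g=\binom{2g+1}{g}$ partitions $\mathcal{I}_0\cup\mathcal{J}_0$, and to evaluate the partition sum termwise. Interchanging the two finite summations (over $\mathcal{I}_0$ and over $n$), the whole computation collapses onto one building block: for fixed degrees $a,b$ I must evaluate $\Sigma(a,b):=\sum_{\mathcal{I}_0}s_a(\mathcal{I}_0)S_b(\mathcal{J}_0)$ with $|\mathcal{I}_0|=g$, $|\mathcal{J}_0|=g+1$. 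The two summands of (\ref{formula}) contribute $\Sigma(g-k+n,\,g-j-n+1)$ and $\Sigma(g-j-n,\,g+n-k+1)$, and in both cases $a+b=2g+1-k-j$, so every term of $\Lambda_{k,j}$ is homogeneous of that degree in the branch points.

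The key step is a counting lemma: $\Sigma(a,b)$ is a scalar multiple of the full elementary symmetric function $s_{a+b}(e_1,\ldots,e_{2g+1})$. Since $\mathcal{I}_0$ and $\mathcal{J}_0$ are disjoint, each monomial of $s_a(\mathcal{I}_0)S_b(\mathcal{J}_0)$ is a product of $a+b$ \emph{distinct} branch points; as the sum ranges over all partitions it is invariant under every permutation of $\{e_1,\ldots,e_{2g+1}\}$, so each squarefree degree-$(a+b)$ monomial occurs with one common coefficient. To pin that coefficient I fix the monomial $e_1\cdots e_{a+b}$: because the partition is complementary, the factor taken from $s_a(\mathcal{I}_0)$ is forced to be exactly the indices of $\{1,\ldots,a+b\}$ that land in $\mathcal{I}_0$, so the multiplicity is the number of partitions placing precisely $a$ of these $a+b$ indices into $\mathcal{I}_0$. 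This gives
\[
\Sigma(a,b)=\binom{a+b}{a}\binom{2g+1-a-b}{g-a}\,s_{a+b}(e_1,\ldots,e_{2g+1}).
\]

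Next I convert the full symmetric function into a curve coefficient. Vieta's formulas for (\ref{hyperelliptic}) yield $s_m(e_1,\ldots,e_{2g+1})=\tfrac{(-1)^m}{4}\lambda_{2g+1-m}$, hence $s_{2g+1-k-j}=-\tfrac{(-1)^{k+j}}{4}\lambda_{k+j}$. Collecting the overall factor $-4$ in the definition of $\Lambda_{k,j}$, the factor $(-1)^{k+j}$ in (\ref{formula}), and this last sign, all signs cancel and the squares of $(-1)^{k+j}$ reduce to $1$, leaving
\[
\Lambda_{k,j}=\lambda_{k+j}\sum_{n=1}^k n\left[\binom{2g+1-k-j}{g-k+n}\binom{k+j}{k-n}+\binom{2g+1-k-j}{g-j-n}\binom{k+j}{j+n}\right].
\]

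Finally, to reach the stated shape (\ref{Lambda2}) I verify the purely combinatorial identity
\[
\binom{2g+1-k-j}{g-k+n}\binom{k+j}{k-n}=\frac{\binom{2g+1}{g}}{\binom{2g+1}{k+j}}\binom{g}{g-k+n}\binom{g+1}{g-j-n+1},
\]
together with its $k\leftrightarrow j$ analogue for the second summand; expanding both sides into factorials, each collapses to the same ratio $(k+j)!\,(2g+1-k-j)!$ divided by $(g-k+n)!\,(k-n)!\,(g-j-n+1)!\,(j+n)!$, so the identity is immediate, and since $\binom{2g+1}{2g+1-k-j}=\binom{2g+1}{k+j}$ the common prefactor is exactly the one in (\ref{Lambda2}). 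The only genuine labour is this bookkeeping, and the single point demanding care is the counting lemma, where the complementarity of $\mathcal{I}_0$ and $\mathcal{J}_0$ forces the selection and thereby fixes the multiplicity. I note that the argument is purely algebraic and independent of $g$, so the proposition holds unconditionally once the input formula (\ref{formula}) is granted.
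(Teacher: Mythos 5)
Your proof is correct and follows the same route the paper intends: the paper's own proof is just the one-line instruction to ``execute the summation in (\ref{formula})'' over all partitions, and your counting lemma $\Sigma(a,b)=\binom{a+b}{a}\binom{2g+1-a-b}{g-a}\,s_{a+b}(e)$, the Vieta conversion $s_m=\tfrac{(-1)^m}{4}\lambda_{2g+1-m}$, and the final factorial identity are exactly the bookkeeping that instruction hides. In fact your write-up is more complete than the paper's, which merely asserts proportionality to $\lambda_{k+j}$ with an integer coefficient without exhibiting it.
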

\begin{proof}Execute summation in (\ref{formula}) and
find that each $\Lambda_{k,j}$ proportional
to $\lambda_{k+j}$ with integer coefficient.
\end{proof}
Matrix $\Lambda_g$ exhibits interesting properties of
sum of anti-diagonal elements implemented at derivations in \cite{eil16},
\begin{align}
\sum_{i,j, \;i+j=k}\Lambda_{g;i,j}=\lambda_k \frac{N_g}{4g+2}
\left[ \frac12 k(2g+2-k)+\frac14(2g+1)( (-1)^k-1 ) \right]
\end{align}

Lower genera examples of matrix $\Lambda$ were given in
\cite{ehkklp12}, \cite{eil16}, but method implemented there unable to
get expressions for $\Lambda$ at big genera.

\begin{example} At $g=6$ we get matrix
\begin{align}
\Lambda_6= \left(  \begin{array}{cccccc} 792\lambda_2&330\lambda_3&120\lambda_4&36\lambda_5&8\lambda_6&\lambda_7\\
                                       330\lambda_3&1080\lambda_4&492\lambda_5&184\lambda_6&51\lambda_7&8\lambda_8\\
                                       120\lambda_4&492\lambda_5&1200\lambda_6&542\lambda_7&184\lambda_8&36\lambda_9\\
                                       36\lambda_5&184\lambda_6&542\lambda_7&1200\lambda_8&492\lambda_9&120\lambda_{10}\\
                                       8\lambda_6&51\lambda_7&184\lambda_8&492\lambda_9&1080\lambda_{10}&330\lambda_{11}\\
                                       \lambda_7&8\lambda_8&36\lambda_9&120\lambda_{10}&330\lambda_{11}&792\lambda_{12}
\end{array}    \right)
\end{align}
\end{example}

Collecting all these together we get the following

\begin{proposition}  $\varkappa$-matrix defining multi-variate $\sigma$-function
admits the following modular form representation

\begin{align}
\varkappa=\frac{1}{8 N_g}\Lambda_g-\frac{1}{2N_g} {(2\omega)^{-1}}^T . \left[
\sum_{N_g\;\text{even}\;[\varepsilon]}   \frac{1}{\theta[\varepsilon]}
\left(\begin{array}{ccc}
\theta_{1,1}[\varepsilon]&\cdots&\theta_{1,g}[\varepsilon]\\
\vdots&\cdots&\vdots\\
\theta_{1,g}[\varepsilon]&\cdots&\theta_{g,g}[\varepsilon]\end{array}\right)\right].(2\omega)^{-1}
\label{formula2}
\end{align}
{\em where   $2\omega$ is matrix of $\mathfrak{a}$-periods of holomorhphic differentials and
$ \theta_{i,j}[\varepsilon]= \partial^2_{z_i,z_j}\theta[\varepsilon]({z})_{{z}=0} $.}
\end{proposition}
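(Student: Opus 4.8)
The plan is to assemble the matrix identity (\ref{formula2}) from the per-partition scalar relation already recorded at the start of this subsection,
\[
\wp_{i,j}(\Omega_{\mathcal{I}_0}) = -2\varkappa_{i,j} - \frac{1}{\theta[\varepsilon_{\mathcal{I}_0}](0)}\,\partial^2_{U_i,U_j}\theta[\varepsilon_{\mathcal{I}_0}](0),
\]
valid for every admissible partition $\mathcal{I}_0\cup\mathcal{J}_0=\{1,\dots,2g+1\}$. First I would sum this relation over all $N_g$ non-singular even characteristics. Since $\varkappa_{i,j}$ is independent of the partition, the term $-2\varkappa_{i,j}$ contributes $-2N_g\varkappa_{i,j}$; and by the definition $\Lambda_{i,j}=-4\sum_{\mathcal{I}_0}\wp_{i,j}(\Omega_{\mathcal{I}_0})$ the summed left-hand side equals $-\tfrac14\Lambda_{i,j}$. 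Solving the resulting scalar equation for $\varkappa_{i,j}$ reproduces exactly the intermediate componentwise formula displayed just above the proposition.

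The one substantive step is then to convert the directional second derivatives $\partial^2_{U_i,U_j}\theta[\varepsilon]$ into the plain derivatives $\theta_{k,l}[\varepsilon]=\partial^2_{z_k,z_l}\theta[\varepsilon](0)$. Here I would use that $U_i$ is by definition the $i$-th column of $\mathcal{A}^{-1}=(2\omega)^{-1}$, so that the chain rule gives $\partial_{U_i}=\sum_{k=1}^g\big((2\omega)^{-1}\big)_{k,i}\,\partial_{z_k}$ and hence
\[
\partial^2_{U_i,U_j}\theta[\varepsilon](0)=\sum_{k,l=1}^g\big((2\omega)^{-1}\big)_{k,i}\,\theta_{k,l}[\varepsilon]\,\big((2\omega)^{-1}\big)_{l,j}=\Big[\,{(2\omega)^{-1}}^T\,\Theta[\varepsilon]\,(2\omega)^{-1}\,\Big]_{i,j},
\]
where $\Theta[\varepsilon]=(\theta_{k,l}[\varepsilon])_{k,l=1}^g$ denotes the Hessian of $\theta[\varepsilon]$ at the origin. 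Because the conjugating factors $(2\omega)^{-1}$ do not depend on the characteristic, they pull out of the sum over $[\varepsilon]$; collecting the $g^2$ scalar equations into one matrix equation and inserting the explicit form of $\Lambda_g$ furnished by the preceding Proposition yields (\ref{formula2}).

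I expect the only delicate point to be the transpose bookkeeping in this last passage: one must verify that the right factor is $(2\omega)^{-1}$ and the left factor its transpose, in accordance with the convention that the $U_i$ are the columns (rather than the rows) of $(2\omega)^{-1}$. A convenient internal check is that both sides of (\ref{formula2}) must come out symmetric: symmetry of $\varkappa$ is assumed, symmetry of $\Lambda_g$ was established above, and the conjugated sum is symmetric because each Hessian $\Theta[\varepsilon]$ is symmetric; this confirms that no spurious transpose has been introduced. Everything else in the argument — the summation over partitions and the chain rule — is routine once the identification $\mathcal{A}=2\omega$ is fixed.
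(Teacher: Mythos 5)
Your proposal is correct and is essentially the paper's own argument: the paper states the componentwise identity $\varkappa_{i,j}=\tfrac{1}{8N_g}\Lambda_{i,j}-\tfrac{1}{2N_g}\sum_{[\varepsilon]}\theta[\varepsilon]^{-1}\partial^2_{U_i,U_j}\theta[\varepsilon]$ immediately before the Proposition and obtains (\ref{formula2}) by "collecting all these together," which is precisely your summation over the $N_g$ partitions followed by the chain-rule conversion of the directional derivatives $\partial_{U_i}$ into $\left[{(2\omega)^{-1}}^T\,\Theta[\varepsilon]\,(2\omega)^{-1}\right]_{i,j}$. Your explicit transpose bookkeeping and the symmetry check are a sensible elaboration of a step the paper leaves implicit, but they do not constitute a different route.
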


Note that modular form representation of period matrices $\eta, \eta'$ follows from the above formula,

\begin{equation}
\eta=2\varkappa\omega, \qquad \eta'=2\varkappa \omega'-\imath\pi{(2\omega)^T}^{-1}\label{etamodular}
\end{equation}

\begin{example} At $g=2$ for the curve $y^2=4x^5+\lambda_4x^4+\ldots+\lambda_0$
the following representation of $\varkappa$-matrix is valid
\begin{align}
\varkappa= \frac{1}{80}\left( \begin{array}{cc} 4\lambda_2&\lambda_3\\
\lambda_3&4\lambda_4 \end{array} \right)-\frac{1}{20}\sum_{10\;\text{even}\;[\varepsilon]}\frac{1}{\theta[\varepsilon]}
\left(\begin{array}{cc} \partial^2_{{U}_1^2}\theta[\varepsilon]&   \partial^2_{{U}_1,{U}_2}
\theta[\varepsilon]\\
 \partial^2_{{U}_1,{U}_2}
\theta[\varepsilon]& \partial^2_{{U}_2^2}\theta[\varepsilon]\end{array}\right)
\end{align}
with $\varkappa=\eta(2\omega)^{-1}$, $\mathcal{A}^{-1}=(2\omega)^{-1}=({U}_1,{U}_2 ) $ and
directional derivatives $\partial_{{U}_i}$, $i=1,2$.
\end{example}
Representation of $\varkappa$ matrix of genus 2 and 3 hyperelliptic curves in terms of directional derivatives
of non-singular odd constant was found in \cite{eee13}

\subsection{Co-homologies of Baker and Klein}
Calculations of $\varkappa$-matrix for the hyperelliptic curve (\ref{hyperelliptic}) were done
in co-homology basis introduced by H.Baker (\ref{bakerbasis}). When holomorphic differentials, $ \mathrm{d}{u} (x,y) $    are chosen meromorphic  differentials,          $ \mathrm{d}{r} (x,y)$ can be find from the symmetry condition $\mathbf{I}$. One can check that symmetry condition also fulfilled if meromorphic differentials will be changed as
\begin{equation}
 \mathrm{d}{r}(x,y) \rightarrow \mathrm{d}{r}(x,y)+M\mathrm{d}{u}(x,y),
\end{equation}
where $M$ is arbitrary constant symmetric matrix $M^T=M$. One can then choose
\begin{equation}
M=-\frac{1}{8 N_g}\Lambda_g
\end{equation}
Then $\varkappa$ will change to
\begin{align}
\varkappa=
-\frac1{2}\frac1{N_g}\sum_{N_g\;\text{even}\;\left[\varepsilon_{\mathcal{I}_0}\right]} \frac{1}{\theta[\varepsilon_{\mathcal{I}_0}]({0})} \left(\partial_{{U}_i}\partial_{{U_j}} \theta[\varepsilon_{\mathcal{I}_0}]({0})\right)_{i,j=1,\ldots,g}  .
\label{Kleinian}
\end{align}

Following \cite{eil16} introduce co-homology basis of Klein
\begin{align}
\mathrm{d}{u} (x,y), \quad  \mathrm{d}{r} (x,y)
-\frac{1}{8N_g}\Lambda_g\mathrm{d}{u} (x,y)\label{Kleinbasis}
\end{align}
with constant matrix, $\Lambda_g=\Lambda_g({\lambda})$ given by (\ref{Lambda1},\ref{Lambda2}). Therefore we proved

\begin{proposition} $\varkappa$-matrix is represented in the modular form (\ref{Kleinian}) in the co-homology
basis (\ref{Kleinbasis}).\end{proposition}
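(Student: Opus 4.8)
The plan is to read the claim off the Baker-basis formula (\ref{formula2}) together with the behaviour of $\varkappa$ under the admissible gauge change $\mathrm{d}{r}\to\mathrm{d}{r}+M\mathrm{d}{u}$ of the second-kind differentials. Formula (\ref{formula2}) already displays $\varkappa$ in the Baker basis as a sum of a purely algebraic piece $\tfrac{1}{8N_g}\Lambda_g$ and the theta-constant (``modular'') piece. Since the co-homology basis (\ref{Kleinbasis}) is precisely the Baker basis with the replacement $\mathrm{d}{r}\to\mathrm{d}{r}-\tfrac{1}{8N_g}\Lambda_g\,\mathrm{d}{u}$, it suffices to establish two facts: that this replacement is legitimate, i.e. it keeps the fundamental bi-differential symmetric and normalized, and that its only effect on $\varkappa$ is to delete the algebraic term, leaving exactly (\ref{Kleinian}).

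First I would check admissibility. Adding $M\mathrm{d}{u}$ to $\mathrm{d}{r}$ alters the representation (\ref{omega1}) of $\Omega(P,Q)$ by the holomorphic term $\mathrm{d}{u}(P)^TM\,\mathrm{d}{u}(Q)$, which respects condition $\mathbf{i}$ (symmetry under $P\leftrightarrow Q$) exactly when $M^T=M$; holomorphy is automatic, so the singular part $\mathbf{iii}$ is untouched and only the normalization $\mathbf{ii}$ must be re-imposed. Next I would compute the induced change in $\varkappa=\eta(2\omega)^{-1}$. Because $2\eta=-\big(\oint_{\mathfrak{a}_j}\mathrm{d}{r}_i\big)$ while the holomorphic periods $2\omega=\big(\oint_{\mathfrak{a}_j}\mathrm{d}{u}_i\big)$ are unaffected, the gauge shift sends $2\eta\mapsto 2\eta-M(2\omega)$, and hence translates $\varkappa$ by a constant symmetric multiple of $M$, while leaving $2\omega$, the Riemann matrix $\tau=\omega^{-1}\omega'$, and therefore the entire theta-constant sum in (\ref{formula2}) inert. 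The translation coefficient is pinned down by matching the re-normalized bi-differential, and it is exactly the value for which $M=-\tfrac{1}{8N_g}\Lambda_g$ annihilates the $\tfrac{1}{8N_g}\Lambda_g$ term.

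Finally I would identify the surviving modular piece of (\ref{formula2}) with the right-hand side of (\ref{Kleinian}). Since $\mathcal{A}=2\omega$, the conjugation $(2\omega)^{-T}\big[\textstyle\sum_{[\varepsilon]}\theta[\varepsilon]^{-1}(\theta_{i,j}[\varepsilon])\big](2\omega)^{-1}$ is simply the passage from the $z$-derivatives $\theta_{i,j}[\varepsilon]$ to the directional derivatives $\partial_{U_i}\partial_{U_j}\theta[\varepsilon]$ along the columns $U_i$ of $\mathcal{A}^{-1}$, so the two expressions agree termwise over the $N_g$ even non-singular characteristics. I expect the main obstacle to be the careful bookkeeping of constants and signs in the second step: one must verify that the re-normalization forced by condition $\mathbf{ii}$ contributes nothing to the theta-constant sum, and that the translation coefficient is matched to the chosen $M$ so that the algebraic term is removed completely and (\ref{Kleinian}) is reproduced with the stated prefactor $-\tfrac{1}{2N_g}$.
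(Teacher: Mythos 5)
Your proposal follows essentially the same route as the paper: the paper likewise observes that adding $M\,\mathrm{d}u$ with $M^{T}=M$ to $\mathrm{d}r$ preserves the symmetry of $\Omega(P,Q)$, chooses $M=-\tfrac{1}{8N_g}\Lambda_g$, and asserts that the resulting shift of $\varkappa$ deletes the algebraic term of (\ref{formula2}) while leaving the theta-constant sum inert, which is exactly your two-step argument. Your treatment is in fact more explicit than the paper's one-line assertion, and the constant-matching issue you flag at the end is genuine (with $2\eta=-\bigl(\oint_{\mathfrak a_j}\mathrm{d}r_i\bigr)$ the gauge change gives $\varkappa\mapsto\varkappa-\tfrac12 M$, so the prefactor of $M$ must be reconciled with the factor $2$ multiplying the $\varkappa$-term in (\ref{omega1})), but this bookkeeping subtlety is present in the paper itself and does not indicate a gap in your approach.
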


Formula (\ref{Kleinian})  first appears in F.Klein (\cite{klein886}, \cite{klein888}),
it was recently revisited in a more general context by Korotkin and Shramchenko (\cite{ksh12}) who
extended representation for $\varkappa$ to non-hyperelliptic curves. Correspondence of this representation
to the co-homology basis to the best knowledge of the authors was not earlier discussed.

Rewrite formula (\ref{Kleinian}) in equivalent form,
\begin{align}
\omega^T\eta=-\frac{1}{4N_g}
\sum_{N_g\;\text{even}\;[\varepsilon]}   \frac{1}{\theta[\varepsilon]}
\left(\begin{array}{ccc}
\theta_{1,1}[\varepsilon]&\cdots&\theta_{1,g}[\varepsilon]\\
\vdots&\cdots&\vdots\\
\theta_{1,g}[\varepsilon]&\cdots&\theta_{g,g}[\varepsilon]\end{array}\right)\label{omegaeta}
\end{align}
where $\omega,\eta$ are half-periods of holomorphic and meromorphic differentials in Kleinian basis.

\begin{example} For the Weierstrass cubic $y^2=4x^3-g_2x-g_3$ (\ref{omegaeta}) represents
Weierstrass relation (\ref{SecondKind}).\end{example}

\begin{example} At $g=2$ (\ref{omegaeta}) can be written in the form
\begin{align}
\omega^T\eta = -\frac{\imath \pi}{10} \left(\begin{array}{cc}
\partial_{\tau_{1,1}} & \partial_{\tau_{1,2}}\\
\partial_{\tau_{1,2}} & \partial_{\tau_{2,2}}
 \end{array}  \right)\; \mathrm{ln}\; \chi_{5}
\end{align}
where $\chi_5$ is relative invariant of weight 5,
\begin{align}
\chi_{5} = \prod_{ 10\;\text{even}\; [\varepsilon] } \theta[\varepsilon]
\end{align}
\end{example}
\begin{example} Worth to mention how equations of KdV flows looks in both bases.
 For example at $g=2$ and
curve $y^2=4x^5+\lambda_4x^4+\ldots+\lambda_0$ in Baker basis
we got \cite{bel997}
\begin{align}\begin{split}
\wp_{2222}&=6\wp_{2,2}^2+4\wp_{1,2}+\lambda_4\wp_{2,2}+\frac12\lambda_3   \\
\wp_{1222}&=6\wp_{2,2}\wp_{1,2}-2\wp_{1,1}+\lambda_4\wp_{1,2} \end{split}
\end{align}
In Kleinian basis the same equations change only in linear in $\wp_{i,j}$-terms
\begin{align}\begin{split}
\wp_{2222}&=6\wp_{2,2}^2+4\wp_{1,2}-47\lambda_4\wp_{2,2}+92\lambda_4^2-\frac72\lambda_3\\
\wp_{1222}&=6\wp_{2,2}\wp_{1,2}-2\wp_{1,1}-23\lambda_4\wp_{1,2}-6\lambda_3\wp_{2,2}+23\lambda_3\lambda_4+8\lambda_2 \end{split}
\end{align}
\end{example}

\subsection{Rosenhain modular form representaion of first kind periods}

Rosenhain \cite{ros851}  was the first who introduced $\theta$-functions with characteristics at $g=2$.
There are 10 even and 6 odd characteristics in that case. Let us denote each from these characteristics as
\[  \varepsilon_j = \left[\begin{array}{c} {\varepsilon_j'}^T\\ {\varepsilon''_j}^T \end{array}   \right],
\quad j=1,\ldots 10  \]
where $\varepsilon'_j$ and $\varepsilon''_j$ are column 2-vectors with entries equal to 0 or 1.

Rosenhain fixed the hyperelliptic genus two curve in the form
\[ y^2=x(x-1)(x-a_1)(x-a_2)(x-a_3)   \]
and presented  without proof expression
\begin{align}
\mathcal{A}^{-1} =\frac{1}{2\pi^2 Q^2 } \left( \begin{array}{rr}   -P\theta_2[\delta_2]&Q\theta_2[\delta_1]\\
 P\theta_1[\delta_2]&-Q\theta_1[\delta_1]  \end{array}    \right) \label{RosenhainFormula}
\end{align}
with
\begin{align*}
P&=\theta[\alpha_1]\theta[\alpha_2]\theta[\alpha_3],\quad
Q=\theta[\beta_1]\theta[\beta_2]\theta[\beta_3]
\end{align*}
and 6 even characteristics $[\alpha_{1,2,3}], [\beta_{1,2,3}]$ and two odd $[\delta_{1,2}]$ which looks chaotic.
One of first proofs can be found in H.Weber \cite{web859}; these formulae are implemented in Bolza dissertation
\cite{bol885} and \cite{bol886}. Our derivation of these formulae are based on the {\em Second Thomae relation}
\cite{tho870}, see \cite{er08} and \cite{eil18}. To proceed we give the following definitions.

\begin{definition} A triplet of characteristics $[\varepsilon_1]$, $[\varepsilon_2,]$,
$[\varepsilon_3]$ is called azygetic if

\[ \mathrm{exp}\; \imath \pi  \left\{
 \displaystyle{\sum_{j=1}^3{\varepsilon_j'}^T \varepsilon''_j + \sum_{i=1}^3{\varepsilon_i'}^T
\sum_{i=1}^3\varepsilon''_i } \right\} =- 1  \]
\end{definition}

\begin{definition} A sequence of $2g + 2$ characteristics
$[\varepsilon_1],\ldots,[\varepsilon_{2g+2} ] $ is called
a {\em  special fundamental system} if the first $g$ characteristics are odd, the
remaining are even and any triple of characteristics in it is azygetic.
\end{definition}

\begin{theorem}(Conjectural Riemann-Jacobi derivative formula)
Let $g$ odd $[\varepsilon_1], \ldots,  [\varepsilon_g]$ and $g+2$ even $ [\varepsilon_{g+1}], \ldots, [\varepsilon_{2g+2}]$ characteristics create a special fundamental system. Then the following equality is valid
\begin{equation}\mathrm{Det}\left.  \frac{ \partial( \theta[\varepsilon_1]({v}),\ldots,
			\theta[\varepsilon_g]({v}))  }
		{\partial (v_1,\ldots, v_g)}\right|_{{v}=0}= \pm    \prod_{k=1\ldots g+2}\theta[\varepsilon_{g+k}]({0})
\label{RiemannJacobi}  \end{equation}
\end{theorem}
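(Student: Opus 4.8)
The plan is to reduce both sides of (\ref{RiemannJacobi}) to explicit expressions in the branch points $e_1,\ldots,e_{2g+1}$ (together with the branch point at $\infty$) by means of the two Thomae theorems, and then to verify the resulting purely algebraic identity using the azygetic structure of the special fundamental system. This is consistent with the paper's stated method of deriving period formulae from the Second Thomae relation \cite{tho870,er08,eil18}, and it exhibits (\ref{RiemannJacobi}) as the determinantal compatibility condition underlying the Rosenhain-type representation (\ref{RosenhainFormula}) of $\mathcal{A}^{-1}=(2\omega)^{-1}$.

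First I would use the second Thomae relation to linearize the left-hand side. Each nonsingular odd characteristic $[\varepsilon_i]$, $i=1,\ldots,g$, corresponds to a distinguished subset $\mathcal{T}_i$ of the branch points, and the second Thomae theorem expresses the normalized gradient $(\theta_1[\varepsilon_i]({0}),\ldots,\theta_g[\varepsilon_i]({0}))$ as a single scalar $c_i$ — a product of quarter-powers of differences $(e_a-e_b)$ with $a,b\in\mathcal{T}_i$ and a power of $\det(2\omega)$ — times the image under a fixed transform built from the period matrix $(2\omega)$ of the column of elementary symmetric functions of the points in $\mathcal{T}_i$. Stacking these $g$ gradients as the rows of the Jacobian matrix, the determinant on the left factors as $\left(\prod_{i=1}^g c_i\right)\det(2\omega)^{-1}\det V$, where $V$ is the symmetric-function matrix of the subsets $\mathcal{T}_1,\ldots,\mathcal{T}_g$; because the $\mathcal{T}_i$ are forced to interlock by the special fundamental system, $\det V$ collapses to a signed product of branch-point differences of Vandermonde type.

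Next I would treat the right-hand side with the classical (first) Thomae formula, which writes each even constant $\theta[\varepsilon_{g+k}]({0})$ as a fourth root of unity times $(\det(2\omega)/\pi^g)^{1/2}$ times the fourth root of a product of differences $(e_a-e_b)$ determined by the branch-point subset attached to $[\varepsilon_{g+k}]$. Multiplying over $k=1,\ldots,g+2$ gathers a clean power of $\det(2\omega)$ and of $\pi$ together with a product of branch-point differences. Matching the left- and right-hand expressions then amounts to checking that the powers of $\det(2\omega)$ and $\pi$ agree and — the substantive point — that the two products of differences $(e_a-e_b)$ coincide. The genus-one case reproduces the classical Jacobi identity $\theta_1'(0)=-\pi\,\theta_2(0)\theta_3(0)\theta_4(0)$ and fixes the normalization and the sign convention.

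The hard part will be the combinatorial matching of the branch-point-difference products for general $g$: one must show that the azygeticity of every triple in the special fundamental system is exactly the condition guaranteeing that the subsets $\mathcal{T}_i$ feeding $\det V$ and the subsets attached to the even $[\varepsilon_{g+k}]$ interlock so that each difference $(e_a-e_b)$ occurs with the same total multiplicity on both sides, and to pin down the global sign $\pm$. This bookkeeping is elementary for each fixed genus but resists a uniform argument, which is why — exactly as for the earlier Conjectural Proposition — I would establish the identity rigorously for small $g$ through this Thomae reduction and confirm it numerically for larger $g$, leaving the uniform combinatorial identity as the remaining obstacle.
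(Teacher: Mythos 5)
The paper does not actually prove this theorem: its ``proof'' consists of citing Frobenius, Igusa and Fay for the cases $g\le 5$ and leaving the general case open, which is why the statement is labelled \emph{Conjectural}. Your Thomae-based reduction is therefore a genuinely different route, but it does not close that gap, and it has two concrete problems. The first is that the substantive step --- showing that azygeticity of every triple in the special fundamental system forces the two products of branch-point differences to coincide factor by factor, with a consistent global choice of fourth roots --- is precisely what you defer to case-by-case verification for small $g$ plus numerics. That step \emph{is} the theorem; without a uniform argument your proposal leaves the statement in exactly the same conjectural state as the paper. The difficulty is compounded by the fact that the first Thomae formula only determines $\theta[\varepsilon_{g+k}](0)^4$, and the second Thomae relation determines the gradient data up to roots of unity, so ``matching'' the two sides requires controlling these roots of unity coherently across all $2g+2$ characteristics simultaneously; this is the delicate point that the cited proofs of Igusa and Fay had to confront by entirely different (modular-form and prime-form) methods.

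The second problem is one of scope. The Riemann--Jacobi derivative formula, as established in the sources the paper cites, is an identity between modular forms valid for every $\tau$ in the Siegel half-space $\mathcal{S}_g$, whereas a reduction through the Thomae formulae can only ever establish it for period matrices of hyperelliptic curves. For $g\ge 3$ the hyperelliptic locus is a proper subvariety of $\mathcal{S}_g$, and an identity of modular forms that holds on a proper subvariety need not hold identically. For the applications in this paper (Rosenhain-type formulae for hyperelliptic curves) that restriction is harmless, but it means your argument, even if the combinatorial matching were completed, would prove a strictly weaker statement than the theorem as formulated and as used in the literature it cites. You should either restate the claim as an identity on the hyperelliptic locus or acknowledge that the Thomae route cannot reach the general case.
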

\begin{proof} (\ref{RiemannJacobi})  proved up to $g=5$ \cite{fro885}, \cite{igu980}, \cite{fay979}
\end{proof}

\begin{example} Jacobi derivative formula for elliptic curve
\begin{align*}
\vartheta'_1(0)=\pi \vartheta_2(0)\vartheta_3(0)\vartheta_4(0)
\end{align*}
\end{example}

\begin{example} Rosenhain derivative formula for genus two curve is given without proof in the memoir
\cite{ros851}, namely, let $[\delta_1]$ and $[\delta_2]$ are any two odd characteristics from all 6 odd, then
\begin{align}
 \theta_1[\delta_1] \theta_2[\delta_2]-\theta_2[\delta_1] \theta_1[\delta_2]
= \pi^2\theta[\gamma_1]\theta[\gamma_2]\theta[\gamma_3]\theta[\gamma_4]
\label{RosenhainDerivative}\end{align}
where 4 even characteristics $[\gamma_1], \ldots, [\gamma_4]$ are given as
$[\gamma_i]=[\delta_1]+[\delta_2]+[\delta_i], \; 3\leq i \leq 6$. There are 15 Rosenhain derivative formulae.
\end{example}

The following geometric interpretation of the special fundamental system  can be given in the case of hyperelliptic curve.
Consider genus two curve,
\[\mathcal{C}:\quad   y^2= (x-e_1) \cdots ( x-e_{6})   \]
Denote associated homology basis as
$( \mathfrak{a}_1, \mathfrak{a}_2;\mathfrak{b}_1,\mathfrak{b}_2 )$.
Denote characteristics of
Abelian images of branch points with base point $e_6$ as $ \mathfrak{A}_k$, $k=1,\ldots,6$.
These are half-periods given by their characteristics, $[\mathfrak{A}_k]$ with
\begin{align}
\mathfrak{A}_k=
\int_{(e_6,0)}^{(e_k,0)} {u}= \frac12 \tau {\varepsilon'}_k
+ \frac12 {\varepsilon''}_k, \quad k=1,\ldots,6
\end{align}

For the homology basis drawn on the Figure we have


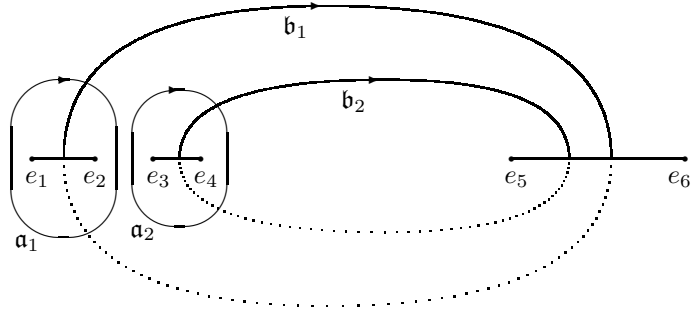
\begin{figure}
\begin{center}
\unitlength 0.7mm \linethickness{0.6pt}
\begin{picture}(150.00,80.00)
\put(9.,33.){\line(1,0){12.}} \put(9.,33.){\circle*{1}}
\put(21.,33.){\circle*{1}} \put(10.,29.){\makebox(0,0)[cc]{$e_1$}}
\put(21.,29.){\makebox(0,0)[cc]{$e_2$}}
\put(15.,33.){\oval(20,30.)}
\put(8.,17.){\makebox(0,0)[cc]{$\mathfrak{ a}_1$}}
\put(15.,48.){\vector(1,0){1.0}}
\put(32.,33.){\line(1,0){9.}} \put(32.,33.){\circle*{1}}
\put(41.,33.){\circle*{1}} \put(33.,29.){\makebox(0,0)[cc]{$e_3$}}
\put(42.,29.){\makebox(0,0)[cc]{$e_4$}}
\put(37.,33.){\oval(18.,26.)}
\put(30.,19.){\makebox(0,0)[cc]{$\mathfrak{a}_2$}}
\put(36.,46.){\vector(1,0){1.0}}
\put(100.,33.00) {\line(1,0){33.}} \put(100.,33.){\circle*{1}}
\put(133.,33.){\circle*{1}}
\put(101.,29.){\makebox(0,0)[cc]{$e_{5}$}}
\put(132.,29.){\makebox(0,0)[cc]{$e_{6}$}}
\put(59.,58.){\makebox(0,0)[cc]{$\mathfrak{b}_1$}}
\put(63.,62.){\vector(1,0){1.0}}
\bezier{484}(15.,33.00)(15.,62.)(65.,62.)
\bezier{816}(65.00,62.)(119.00,62.00)(119.00,33.00)
\bezier{35}(15.,33.00)(15.,5.)(65.,5.)
\bezier{35}(65.00,5.)(119.00,5.00)(119.00,33.00)
\put(70.,44.){\makebox(0,0)[cc]{$\mathfrak{b}_2$}}
\put(74.00,48.){\vector(1,0){1.0}}
\bezier{384}(37.,33.00)(37.,48.)(76.00,48.)
\bezier{516}(76.00,48.)(111.00,48.00)(111.00,33.00)
\bezier{30}(37.,33.00)(37.,19.)(76.00,19.)
\bezier{30}(76.00,19.)(111.00,19.00)(111.00,33.00)
\end{picture}
\end{center}
\caption{Homology basis on the Riemann surface of the curve
$\mathcal{C}$ with real branching points $e_1 < e_2 <\ldots <
e_{6}$ (upper sheet).  The cuts are drawn from $e_{2i-1}$
to $e_{2i}$, $i=1,2,3$.  The $ \mathfrak{b}$--cycles are completed on the
lower sheet (dotted lines).}
\end{figure}

\begin{align*} [{\mathfrak A}_1]= \frac{1}{2}
\left[\begin{array}{cc}1&0\\
                       0&0\end{array}\right],\quad
[{\mathfrak A}_2]= \frac{1}{2}
\left[\begin{array}{cc}1&0\\
                       1&0\end{array}\right],\quad
[{\mathfrak A}_3]= \frac{1}{2}
\left[\begin{array}{cc}0&1\\
                       1&0\end{array}\right],
\end{align*}
\begin{align*}
[{\mathfrak A}_4]= \frac{1}{2}
\left[\begin{array}{cc}0&1\\
                       1&1\end{array}\right],\quad
[{\mathfrak A}_5]= \frac{1}{2}
\left[\begin{array}{cc}0&0\\
                       1&1\end{array}\right],\quad
[{\mathfrak A}_6]= \frac{1}{2}
\left[\begin{array}{cc}0&0\\
                       0&0\end{array}\right]
\end{align*}
\[ {\mathfrak{A}}_k= \int_{\infty}^{e_k}{u}=   \frac12 \tau.
{\varepsilon}_k
+ \frac12 {\varepsilon'}_k , \quad [ {\mathfrak{A}}_k ] =
 \left[ \begin{array}{c}
 {\varepsilon}_k^T\\   {\varepsilon'}_k^T \end{array} \right]     \]
 One can see that the set of characteristics $[\varepsilon_k] = [ \mathfrak{A}_k]$ of Abelian images of branch point
contains two odd $[\varepsilon_2]$ and $[\varepsilon_4]$  and remaining four characteristics are even.
One can check that the whole set of these $6=2g+2$  characteristics is azygetic and therefore the set constitutes
special fundamental system. Hence one can write Rosenhain derivative formula
\begin{equation}
\theta_1[\varepsilon_2]\theta_2[\varepsilon_4]-\theta_2[\varepsilon_2]\theta_1[\varepsilon_4]=\pm \pi^2
\theta[\varepsilon_1]\theta[\varepsilon_3]\theta[\varepsilon_5]\theta[\varepsilon_6]
\end{equation}
In this way we gave geometric interpretation of Rosenhain derivative formula and associated
set of characteristics in the case when one from even characteristic is zero. The same structure
is observed for higher genera hyperelliptic curves even for $g>5$.

\begin{proposition}The characteristics entering to the Rosenhain formula are described as follows.
Take any of 15 Rosenhain derivative formula, say,
\[  \theta_1[p] \theta_2[q]-\theta_2[p] \theta_1[q]
= \pi^2\theta[\gamma_1]\theta[\gamma_2]\theta[\gamma_3]\theta[\gamma_4]   \]
Then 10 even characteristics can be grouped as
\[ \underbrace{ [\gamma_1],\ldots,[\gamma_4]}_4, \; \underbrace{ [\alpha_1],[\alpha_2],[\alpha_3] }_{[\alpha_1]+ [\alpha_2]+ [\alpha_3]=[p]},\; \;
\underbrace{ [\beta_1],[\beta_2],[\beta_3] }_{ [\beta_1]+ [\beta_2]+ [\beta_3]=[q]},   \]
Then matrix of $\mathfrak{a}$-periods
\begin{align*}
\mathcal{A} &= \frac{2Q}{PR} \left( \begin{array}{rr}
Q\theta_1[q]&Q \theta_2[q]\\
P\theta_1[p]& P\theta_2[p]
     \end{array}  \right)\end{align*}
with
\begin{align}
P=\prod_{j=1}^3\theta[\alpha_j], \;\;Q=\prod_{j=1}^3\theta[\beta_j],\; \;R=\prod_{j=1}^4\theta[\gamma_j]\label{PQR}
\end{align}
\end{proposition}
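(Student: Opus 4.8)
The plan is to read the asserted formula for $\mathcal{A}$ off Rosenhain's formula (\ref{RosenhainFormula}) for $\mathcal{A}^{-1}$ by inverting a $2\times 2$ matrix, converting the determinant into the product $R$ via the Rosenhain derivative formula (\ref{RosenhainDerivative}); the genuinely new content, and the harder part, is the combinatorial description of how the ten even characteristics distribute into the quadruple $[\gamma_1],\dots,[\gamma_4]$ and the two triples with prescribed sums $[\alpha_1]+[\alpha_2]+[\alpha_3]=[p]$, $[\beta_1]+[\beta_2]+[\beta_3]=[q]$.

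First, the algebraic step. Write $[q]=[\delta_1]$, $[p]=[\delta_2]$ for the two odd characteristics in (\ref{RosenhainFormula}) and abbreviate $\mathcal{A}^{-1}=(2\pi^2Q^2)^{-1}M$. Then $\mathcal{A}=2\pi^2Q^2(\det M)^{-1}\,\mathrm{adj}\,M$, and a direct expansion gives
\[
\det M = PQ\bigl(\theta_1[\delta_1]\theta_2[\delta_2]-\theta_2[\delta_1]\theta_1[\delta_2]\bigr),
\]
whose bracket is exactly the left-hand side of (\ref{RosenhainDerivative}). Substituting $\theta_1[\delta_1]\theta_2[\delta_2]-\theta_2[\delta_1]\theta_1[\delta_2]=\pm\pi^2R$ with $R=\prod_{j=1}^4\theta[\gamma_j]$ collapses the scalar factor $2\pi^2Q^2/\det M$ to $2Q/(PR)$, while $\mathrm{adj}\,M$ produces the matrix with rows $(Q\theta_1[q],Q\theta_2[q])$ and $(P\theta_1[p],P\theta_2[p])$; the residual overall sign is absorbed into the $\pm$ of (\ref{RosenhainDerivative}) together with the sign conventions for the odd-characteristic derivatives, giving precisely the stated $\mathcal{A}$.

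Second, the characteristic bookkeeping. Here I would work in the $\mathbb{F}_2$-vector space of the $16$ half-integer characteristics, with parity governed by the quadratic form ${\varepsilon'}^T\varepsilon''\bmod 2$ and azygeticity as in the two Definitions above. Fixing the odd pair $[p],[q]$, the four $[\gamma_i]$ are forced: they are the translates $[\gamma_i]=[p]+[q]+[\delta_i]$ as $[\delta_i]$ ranges over the four remaining odd characteristics, and one verifies each such sum is even and that $\{[p],[q],[\gamma_1],\dots,[\gamma_4]\}$ is a special fundamental system, so that (\ref{RosenhainDerivative}) applies. This removes four of the ten even characteristics; the remaining six must split into two azygetic triples, and the claim is that one sums to $[p]$ and the other to $[q]$. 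I would prove this by using the six Abelian images $[\mathfrak{A}_k]$ of the branch points as a reference frame (as in the genus-two figure above), translating the subset/branch-point description of half-periods into the $\mathbb{F}_2$-language, and then checking the two sum relations — verifying a single representative of the $\binom{6}{2}=15$ unordered pairs $([p],[q])$ and propagating it by the transitive action of $\mathrm{Sp}(4,\mathbb{F}_2)$ on special fundamental systems.

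The main obstacle is this last step: that the six leftover even characteristics split \emph{canonically} into triples summing to $[p]$ and $[q]$ is not a formal consequence of the derivative formula but reflects the G\"opel/Rosenhain group structure of the $2$-torsion, and the bulk of the work lies in pinning down that structure rather than in the matrix manipulation. Once the triples are identified and matched to $P=\prod_{j}\theta[\alpha_j]$ and $Q=\prod_{j}\theta[\beta_j]$ as in (\ref{RosenhainFormula}) and (\ref{PQR}), the two halves combine to yield the Proposition.
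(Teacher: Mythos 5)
The paper states this Proposition without any proof: it is given bare, with the derivation deferred to the cited works of Enolski--Richter and Eilers and explicitly said to rest on the \emph{Second Thomae relation}, which expresses the directional derivatives $\partial_{U_i}\theta[\delta]$ of the odd theta constants directly in terms of branch-point differences, even theta constants and $\det\mathcal{A}$. That route produces the entries of $\mathcal{A}^{-1}$ together with the correct grouping of the ten even characteristics in one stroke. Your route is genuinely different: you take Rosenhain's formula (\ref{RosenhainFormula}) as an input and invert the $2\times2$ matrix, converting the determinant into $\pm\pi^2 R$ via (\ref{RosenhainDerivative}). That algebra is correct (I checked: $\det M = PQ(\theta_1[\delta_1]\theta_2[\delta_2]-\theta_2[\delta_1]\theta_1[\delta_2])$ and the adjugate gives exactly the displayed rows), and your identification $[\gamma_i]=[p]+[q]+[\delta_i]$ agrees with the paper's example of the derivative formula.

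The genuine gap is in what you call the ``matching''. The Proposition's real content is not that $\mathcal{A}$ is the inverse of $\mathcal{A}^{-1}$, but \emph{which} even characteristics constitute $P$ and $Q$ in (\ref{RosenhainFormula}) --- a formula the paper itself notes Rosenhain presented without proof and with a choice of characteristics that ``looks chaotic''. Your $\mathbb{F}_2$-bookkeeping can plausibly establish the purely combinatorial fact that the six leftover even characteristics admit a distinguished splitting into two triples with sums $[p]$ and $[q]$ (though even there the subset/complement calculus is delicate: several triples of $3$-subsets sum to a given odd characteristic, so ``canonical'' needs an argument). But no amount of $\mathrm{Sp}(4,\mathbb{F}_2)$-transitivity or combinatorics can show that the triple summing to $[p]$ is precisely the one whose product multiplies $\theta_1[p],\theta_2[p]$ in the second row, with overall prefactor $1/(2\pi^2Q^2)$ --- that is an analytic identity about the actual values of the period matrix, and it is exactly what the Thomae-type relations are invoked for. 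As written, your proposal assumes the analytically hard half of the statement and proves only the bookkeeping around it.
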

Note, that the 15 curves are given as
\begin{align} \mathcal{C}_{p,q}:   \quad y^2=x(x-1)(x-a_1)(x-a_2)(x-a_3) \label{Bolza}   \end{align}
where branch points are computed by  {\em Bolza formulae} \cite{bol886},
\begin{align} e[\delta_j]=  - \frac{\partial_{{U}_1} \theta[\delta_j] }{\partial_{\boldsymbol{U}_2} \theta[\delta_j]}  , \quad
j=1,\ldots,6
\label{bolza1}
  \end{align}
where $\partial_{{U}_i}$ directional derivative along vector ${U}_i$, $i=1,2$, $\mathcal{A}^{-1}=({U}_1,{U}_2)$ and
\[e[p]=0,\; e[q]=\infty,\      \]
All 15 curves  $\mathcal{C}_{p,q}$ are M\"obius equivalent.

\subsection{ \underline{Generalization of the Rosenhain formula to higher
genera hyperelliptic curve} }

Generalization of the Rosenhain formula to higher
genera hyperelliptic curve was found in \cite{er08} and developed further in \cite{eil18}.

\begin{align}\begin{split} y^2&=\phi(x) \psi(x)\\ \phi(x)&= \prod_{k=1}^g(x-e_{2k}), \;
\psi(x)= \prod_{k=1}^{g+1}(x-e_{2k-1}) \label{curve-g}  \end{split}   \end{align}

Denote $ R=\prod_{k=1}^{g+2}\theta[\gamma_k]$
monomial in left had side of  Riemann-Jacobi formula (\ref{RiemannJacobi})
\begin{proposition}Let genus $g$ hyperelliptic curve is given in (\ref{curve-g}). Then winding vectors
$(U_1,\ldots,U_g)=\mathcal{A}^{-1}$ are given by the formula
\begin{align}
{U}_m=\frac{\epsilon}{ 2\pi^g R} \mathrm{Cofactor}\left(  \left.  \frac{ \partial( \theta[\varepsilon_1]({v}),\ldots,
			\theta[\varepsilon_g]({v}))  }
		{\partial (v_1,\ldots, v_g)}\right|_{{v}=0}
 \right)
\left( \begin{array}{c}  s_{m-1}^{2} \sqrt[4]{\chi_1}\\ \vdots
\\s_{m-1}^{2g} \sqrt[4]{ \chi_g}\end{array}  \right)
\end{align}
Here $s_k^i$ -order $k$ symmetric function  of elements $\{e_{2}, \ldots e_{2g}  \}  /  \{ e_{2i} \}$ and
\[  \chi_{i} = \frac{\psi(e_{2i})}{\phi'(e_{2i})} , \quad i=1,\ldots, g \]
$s_k^i, \chi_i$ are expressible in $\theta$-constants via Thomae formulae \cite{tho870}.
\end{proposition}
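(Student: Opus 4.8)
The plan is to reduce the statement to a single identity for the directional derivatives of the odd theta-functions and then invert a linear system. Write $T=\left(\theta_i[\varepsilon_j](0)\right)_{i,j=1}^g$ for the Jacobian matrix appearing in (\ref{RiemannJacobi}), whose $j$-th column is the gradient of $\theta[\varepsilon_j]$ at the origin in the normalized coordinates $v$. Since $\mathrm{d}v=\mathcal{A}^{-1}\mathrm{d}u$, the column $U_m$ of $\mathcal{A}^{-1}$ is exactly the vector along which we differentiate, so that
\[
\partial_{U_m}\theta[\varepsilon_j](0)=\sum_{i=1}^g(U_m)_i\,\theta_i[\varepsilon_j](0)=\left(T^{T}U_m\right)_j .
\]
Using the algebraic identity $T^{T}\,\mathrm{Cofactor}(T)=\det(T)\,1_g$ together with the Riemann--Jacobi derivative formula (\ref{RiemannJacobi}), which gives $\det T=\pm R$, one sees that the asserted formula $U_m=\frac{\epsilon}{2\pi^gR}\mathrm{Cofactor}(T)\,w_m$, with $w_m$ the right-hand column whose $j$-th entry is $s^{(j)}_{m-1}\sqrt[4]{\chi_j}$, is equivalent to the pointwise identity $\partial_{U_m}\theta[\varepsilon_j](0)=\frac{\epsilon}{2\pi^g}\,s^{(j)}_{m-1}\sqrt[4]{\chi_j}$ for all $m,j=1,\dots,g$. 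The whole proposition therefore rests on this identity.

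The geometric input is the behaviour of an odd theta-gradient at a branch-point half-period. For the curve (\ref{curve-g}) the odd characteristic $[\varepsilon_j]$ of the special fundamental system is the characteristic of the half-period $\mathcal{A}(\mathcal{D}_j)+K_\infty$, where $\mathcal{D}_j$ is the divisor of the $g-1$ branch points $\{e_2,\dots,e_{2g}\}\setminus\{e_{2j}\}$, i.e. the roots of $\phi$ other than $e_{2j}$. Because $[\varepsilon_j]$ is odd and non-singular, the theta-divisor is smooth there and its tangent hyperplane is spanned by the images of the holomorphic differentials at the points of $\mathcal{D}_j$; equivalently the gradient annihilates the normalized differentials evaluated at those points. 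Transporting this orthogonality through $\mathrm{d}v=\mathcal{A}^{-1}\mathrm{d}u$ and using the Baker basis (\ref{bakerbasis}), where $\mathrm{d}u_l=x^{l-1}\mathrm{d}x/y$ takes the value proportional to $e_{2i}^{\,l-1}$ at $e_{2i}$, the tangency conditions become
\[
\sum_{l=1}^g \partial_{U_l}\theta[\varepsilon_j](0)\,e_{2i}^{\,l-1}=0,\qquad i\neq j .
\]
Hence the degree-$(g-1)$ polynomial $\sum_l \partial_{U_l}\theta[\varepsilon_j](0)\,x^{l-1}$ has exactly the $g-1$ roots $e_{2i}$, $i\neq j$, so it is a scalar multiple of $\prod_{i\neq j}(x-e_{2i})$. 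Reading off its coefficients produces, up to one $j$-dependent scalar, precisely the elementary symmetric functions $s^{(j)}_\bullet$ of $\mathcal{D}_j$; with the ordering convention of (\ref{bakerbasis}) these assemble into the vectors $w_m$ of the statement.

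It remains to fix the $j$-dependent scalar — the leading coefficient of this polynomial — and the universal constant $\epsilon/2\pi^g$, and this is where the Thomae machinery enters: the absolute size of the odd theta-gradient is invisible to the tangency conditions and must be supplied by Thomae's formula and its derivative, the \emph{second Thomae relation} used for the Rosenhain formula, which evaluate such gradients as fourth roots of products of branch-point differences. The combination surviving for the characteristic $[\varepsilon_j]$ is exactly $\chi_j=\psi(e_{2j})/\phi'(e_{2j})$, producing the factor $\sqrt[4]{\chi_j}$, while the overall constant is pinned down by a single normalization, most cheaply by specializing to $g=1$, where the identity collapses to the Jacobi derivative formula $\vartheta_1'(0)=\pi\vartheta_2(0)\vartheta_3(0)\vartheta_4(0)$ and fixes both the power $2\pi^g$ and the sign $\epsilon$. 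Assembling the symmetric-function shape with this normalization gives the displayed identity, and inverting via $\mathrm{Cofactor}(T)/\det T$ together with (\ref{RiemannJacobi}) yields the formula for $U_m$. I expect this last step to be the main obstacle: proving that the leading scalar is the universal constant times $\sqrt[4]{\chi_j}$, uniformly in $m$ and $j$, is the precise content of the second Thomae relation and requires controlling the fourth-root branches and the global sign consistently across all $g$ odd characteristics of the special fundamental system, whereas the tangency argument delivers the symmetric-function structure essentially for free.
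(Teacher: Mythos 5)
Your proposal is correct and follows essentially the route the paper itself relies on: the paper states this proposition without an explicit proof, attributing it to \cite{er08} and \cite{eil18} and to the \emph{Second Thomae relation}, and that is exactly your argument — reduce the cofactor formula via the Riemann--Jacobi identity to the evaluation of $\partial_{U_m}\theta[\varepsilon_j](0)$, obtain the symmetric-function structure from the vanishing of the odd theta-gradient along the holomorphic differentials at the branch points supporting $[\varepsilon_j]$, and fix the leading coefficient $\sqrt[4]{\chi_j}$ and the universal constant by the second Thomae formula. The only caveat is a bookkeeping one (the index $m-1$ versus $g-m$ in the symmetric functions, i.e.\ the ordering of the rows of $\mathcal{A}^{-1}$ relative to the Baker basis), which you correctly flag as a convention rather than a mathematical gap.
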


\subsection{\underline{Applications of the above results}}

Typical answer of multi-gap integration includes $\theta$-function
$  \theta({U}x +{V}t+{W};\tau )$
where winding vectors ${U}, {V}$ expressed
in terms of complete holomorphic integrals and constant ${W}$ defined by initial data.
Rosenhain formulae and their generalization, express  $U,V$ in terms of
$\theta$-constants and parameters the equation defining $\mathcal{C}$. In this way
the problem of {\em effectivization of finite gap solutions} \cite{dub81}
can be solved in that way at least for hyperelliptic curves.

Other application of the Rosenhain formula (\ref{RosenhainFormula}) presented in \cite{be994}
where two-gap Lam\'e and Treibich-Verdier potentials were obtained by the reduction
to elliptic functions of general Its-Matveev  representation \cite{im975}  of finite-gap potential
to the Schr\"odinger equation in terms of multi-variable $\theta$-functions.

Another application relevant to a computer algebra problem. In the case when, say in Maple, periods of
holomorphic differentials are computed them periods of second kind differentials can be obtained by
Rosenhain formula (\ref{RosenhainFormula}) and its generalization.


\section{Sigma-functions and the problem of differentiatiion of Abelian functions.}


\subsection{\underline{Problems and methods}}

Consider the curve
\begin{equation} \label{F-1}
V_\lambda = \left\{(x,y)\in\mathbb{C}^2\, :  y^2 = \mathcal{C}(x;\lambda) = x^{2g+1}+\sum_{k=2}^{2g+1}
\lambda_{2 k} x^{2g - k + 1} \right\}
\end{equation}
where $g\geqslant 1$ and $\lambda=(\lambda_4,\ldots,\lambda_{4g+2})\in \mathbb{C}^{2g}$ are the parameters.
Set $\mathcal{D} = \{ \lambda\in \mathbb{C}^{2g}\,: \mathcal{C}(x;\lambda)\; \text{has multiple roots}\}$
and $\mathcal{B} = \mathbb{C}^{2g}\setminus\mathcal{D}$.
For any $\lambda\in \mathcal{B}$ we obtain the affine part of a smooth projective hyperelliptic curve
$\overline{V}_\lambda$ of genus~$g$ and the Jacobian variety $Jac(\overline{V}_\lambda) = \mathbb{C}^{g}/\Gamma_g$,
where $\Gamma_g \subset \mathbb{C}^{g}$  is a lattice of rank $2g$ generated by the periods of the holomorphic
differential on cycles of the curve $V_\lambda$.

In the general case, an \emph{Abelian function} is
a meromorphic function on a complex Abelian torus $T^g=\mathbb{C}^g\!/\Gamma$, where
$\Gamma\subset\mathbb{C}^g$ is a lattice of rank $2g$. In other words, a meromorphic function $f$ on
$\mathbb{C}^g$ is Abelian iff $f(u)=f(u+\omega)$ for all
$u=(u_1,\ldots,u_g)\in\mathbb{C}^{g}$ and $\omega\in\Gamma$. Abelian functions on $T^g$ form a field  $\mathcal{F} = \mathcal{F}_g$ such that:

(1) let $f\in\mathcal{F}$, then $\partial_{u_i} f\in \mathcal{F}$, $i=1,\dots,g$;

(2) let $f_1,\dots,f_{g+1}$  be any  nonconstant functions from $\mathcal{F}$, then there exists a polynomial
$P$ such that $P(f_1,\dots, f_{g+1})(u)=0$ for all $u\in T^g$;

(3) let $f\in\mathcal{F}$ be a nonconstant function, then any  $h\in\mathcal{F}$ can be expressed rationally
in terms of $(f,\partial_{u_1}f,\dots,\partial_{u_g}f)$;

(4) there exists an entire function $\vartheta\colon\mathbb{C}^g\to\mathbb{C}$ such that
$\partial_{u_i,u_j}\log\vartheta \in \mathcal{F}$, $i,j=1,\dots,g$.

For example, any elliptic function $f\in \mathcal{F}_1$ is a rational function in the Weierstrass functions
$\wp(u; g_2, g_3)$ and $\partial_u\wp(u; g_2, g_3)$, where $g_2$ and $g_3$ are parameters of elliptic curve
\[
V = \{ (x,y) \in \mathbb{C}^2\;|\; y^2 = 4 x^3 - g_2 x - g_3 \}.
\]
It is easy to see that the function $\frac{\partial}{\partial g_2}\wp(u; g_2, g_3)$ will no longer be elliptic.
This is due to the fact that the period lattice $\Gamma$ is a function of the parameters $g_2$ and $g_3$.
In \cite{fs882} Frobenius and Stickelberger described all the differential operators $L$ in the variables
$u,\;g_2$ and $g_3$, such that $Lf\in \mathcal{F}_1$ for any function $f\in \mathcal{F}_1$ (see below Section 7.3).

In \cite{bl07, bl08} the classical problem of differentiation of Abelian functions over parameters
for families of $(n,s)$-curves was solved.
In the case of hyperelliptic curves this problem was solved more explicitly.

All genus $2$ curves are hyperelliptic. We denote by $\pi\colon \mathcal{U}_g \to \mathcal{B}_g$
the universal  bundle of  Jacobian varieties $Jac(\overline{V}_\lambda)$ of hyperelliptic curves.
Let us consider the mapping $\varphi\colon \mathcal{B}_g\times \mathbb{C}^g \to \mathcal{U}_g$, which defines
the projection $\lambda\times \mathbb{C}^g \to \mathbb{C}^g/\Gamma_g(\lambda)$ for any $\lambda\in \mathcal{B}_g$.
Let us fix the coordinates $(\lambda;u)$ in $\mathcal{B}_g\times \mathbb{C}^g\subset \mathbb{C}^{2g}\times \mathbb{C}^g$
where $u = (u_1,\ldots,u_{2g-1})$. Thus, using the mapping $\varphi$, we fixed in $\mathcal{U}_g$ the structure
of the space of the bundle whose fibers $J_\lambda$ are principally polarized Abelian varieties.

We denote by $F = F_g$ the field of functions on $\mathcal{U}_g$ such that for any $f\in F$ the function $\varphi^*(f)$
is meromorphic, and its restriction to the fiber $J_\lambda$ is an Abelian function for any point $\lambda \in \mathcal{B}_g$.

Below, we will identify the field $F$ with its image in the field of meromorphic functions on $\mathcal{B}\times \mathbb{C}^g$.

The following {\bf Problem I}:

{\it Describe the Lie algebra of differentiations of the field of meromorphic functions on $\mathcal{B}_g\times \mathbb{C}^g$,
generated by the operators $L$, such that $Lf\in F$ for any function $f\in F$}

was solved in \cite{bl07, bl08}.

From the differential geometric point of view, Problem I is closely related to {\bf Problem II}:

{\it Describe the connection of the bundle $\pi\colon \mathcal{U}_g \to \mathcal{B}_g$.}

The solution of Problem II leads to an important class of solutions of well-known equations of mathematical physics.
In the case $g=1$, the solution is called the Frobenius-Stikelberger connection (see \cite{dubr96}) and
leads to solutions of Chazy equation.

The space $\mathcal{U}_g$ is a rational variety, more precisely, there is a birational isomorphism $\varphi\colon \mathbb{C}^{3g} \to \mathcal{U}_g$.
This fact was discovered by B.~A.~Dubrovin and S.~P.~Novikov  in \cite{dn974}. In \cite{dn974},
a fiber of the universal bundle is considered as a level surface of the integrals of motion of $g$th
stationary flow of KdV system, that is, it is defined in $\mathbb{C}^{3g}$ by a system of $2g$ algebraic equations.
The degree of the system grows with the growth of genus. In \cite{bel97}, \cite{bel997}  the coordinates in $\mathbb{C}^{3g}$
were introduced such that a fiber is defined by $2g$ equations of degree not greater than~$3$.

The Dubrovin-Novikov  coordinates and the coordinates from
\cite{bel97}, \cite{bel997} are the same for the universal space of genus $1$ curves. But already in the case of genus 2,
these coordinates differ (see \cite{bl08}).

The integrals of motion of KdV systems are exactly the coefficients $\lambda_{2g+4},\ldots,\lambda_{4g+2}$ of hyperelliptic
curve $V_\lambda$, in which the coefficients $\lambda_{4},\ldots,\lambda_{2g+2}$ are free parameters (see \eqref{F-1}).
Choosing a point $z\in \mathbb{C}^{3g}$ such that the point $\varphi(z)\in \mathcal{B}_g$ is defined,
one can calculate the values of the coefficients $(\lambda_4,\ldots,\lambda_{4g+2}) = \pi(z)\in \mathcal{B}_g$ substituting
this point in these integrals. Thus, the solution of the Problem I of differentiation of hyperelliptic functions led to
the solution of another well-known {\bf Problem III}:

{\it Describe the dependence of the solutions of $g$-th stationary flow of KdV system on the variation of the coefficients
$\lambda_4,\ldots,\lambda_{4g+2}$ of hyperelliptic curve, that is, from variation of values of the integrals of motion and parameters.}

In \cite{go89} it was obtained results on Problem III, which use the fact that for a hierarchy KdV the action of polynomial
vector fields on the spectral plane is given by the shift of the branch points of the hyperelliptic curve along this fields
(see \cite{ss14}). The deformations of the potential corresponding to this action are exactly the action
of the nonisospectral symmetries of the hierarchy KdV.

Let us describe a different approach to Problem III, developed in our works.
In \cite{bl02} it was introduced the concept of a polynomial Lie algebra over a ring of polynomials $A$.
For brevity, we shall call them Lie A-algebras.
In \cite{bl07, bl08} it was considered the ring of polynomials $\mathcal{P}$ in the field $F$.
This ring is generated by all logarithmic derivatives of order $k\geqslant 2$ from the hyperelliptic sigma function $\sigma(u;\lambda)$.
It was constructed the Lie $\mathcal{P}$-algebra $\mathcal{L}=\mathcal{L}_g$ with generators $L_{2k-1},\; k=1,\ldots,g$
and $L_{2l},\; l=0,\ldots,2g-1$.
The fields $L_{2k-1}$ define isospectral symmetries, and the fields $L_{2l}$ define nonisospectral symmetries of the hierarchy KdV.
The Lie algebra $\mathcal{L}$ is isomorphic to the Lie algebra of differentiation of the ring $\mathcal{P}$  and, consequently,
allows to solve the Problem I (see property (3) of Abelian functions). The generators $L_{2k-1},\; k=1,\ldots,g$, coincide
with the operators $\partial_{u_{2k-1}}$, and, consequently, commute. Thus, in the Lie $\mathcal{P}$-algebra $\mathcal{L}$
it is defined the Lie $\mathcal{P}$-subalgebra $\mathcal{L}^*$ generated by the operators $L_{2k-1},\; k=1,\ldots,g$.
The generators $L_{2l},\; l=0,\ldots,2g-1,$ are such that the Lie $\mathcal{P}$-algebra $\mathcal{L}^*$ is an ideal
in the Lie $\mathcal{P}$-algebra $\mathcal{L}$.

The construction of $L_{2k},\; k=0,\ldots,2g-1$, is based on the following fundamental fact (see \cite{bl04}):

The entire function $\psi(u;\lambda)$, satisfying the system of heat equations in a nonholonomic frame
\[
\ell_{2i}\psi = H_{2i}\psi,\; i=0,\ldots,2g-1,
\]
under certain initial conditions (see \cite{bl04}) coincides with the hyperelliptic sigma-function $\sigma(u;\lambda)$.
Here $\ell_{2i}$ are polinomial linear first-order differential operators in the variables $\lambda=(\lambda_4,\ldots,\lambda_{4g+2})$
and $H_{2i}$ are linear second-order differential operators in the variables $u=(u_1,\ldots,u_{2g-1})$.
The methods for constructing these operators are described in \cite{bl04}.

The following fact was used  essentially in constructing the operators $\ell_{2i}$:

The Lie $\mathbb{C}[\lambda]$-algebra $\mathcal{L}_\lambda$ with generators $\ell_{2i},\; i=0,\ldots,2g-1$
is isomorphic to an infinite-dimensional Lie algebra $Vect_{\mathcal{B}}$ of vector fields
on $\mathbb{C}^{2g}$, that are tangent to the discriminant variety $\Delta$.
We recall that the Lie algebra $Vect_{\mathcal{B}}$ is essentially used in singularity theory and its applications
(see \cite{ar90}).

In the Lie $\mathcal{P}$-algebra $\mathcal{L}$ we can choose the generators $L_{2k},\; k=0,\ldots,2g-1$ such that
for any polynomial $P(\lambda)\in \mathbb{C}[\lambda]$ and any $k$ the formula $L_{2k}\pi^*P(\lambda) = \pi^*(\ell_{2k}P(\lambda))$
holds, where $\pi^*$ is the ring homomorphism induced by the projection $\pi \colon \mathcal{U}_g \to \mathcal{B}_g$.

Section 3 describes the development of an approach to solving the Problem I. This approach uses:
\begin{enumerate}
  \item[(a)] the graded set of multiplicative generators of the polynomial ring $\mathcal{P}=\mathcal{P}_g$;
  \item[(b)] the description of all algebraic relations between these generators;
  \item[(c)] the description of the birational isomorphism $J \colon \mathcal{U}_g \to \mathbb{C}^{3g}$ in terms of graded polynomial rings;
  \item[(d)] the description of the polynomial projection $\pi \colon \mathbb{C}^{3g} \to \mathbb{C}^{2g}$, where $\mathbb{C}^{2g}$
  is a space in  coordinates $\lambda = (\lambda_4,\ldots,\lambda_{4g+2})$, such that for any $\lambda \in \mathcal{B}$ the space
  $J^{-1}\pi^{-1}(\lambda)$  is the Jacobian variety $Jac(V_\lambda)$;
  \item[(e)] the construction of linear differential operators of first order
  \[
  \widehat{H}_{2k} = \sum_{i=1}^q h_{2(k-i)+1}(\lambda;u)\partial_{u_{2i-1}}, \; q=\min(k,g),
  \]
   such that $L_{2k} = \ell_{2k}-\widehat{H}_{2k},\; k=0,\ldots,2g-1$.
\end{enumerate}
Here:

- $h_{2(k-i)+1}(\lambda;u)$ are meromorphic functions on $\mathcal{B}_g\times \mathbb{C}^{g}$;

- $h_{2(k-i)+1}(\lambda;u)$ are homogeneous functions of degree $2(k-i)+1$ in $\lambda = (\lambda_4,\ldots,\lambda_{4g+2})$,\;
$\deg\lambda_{2k}=2k$, and $u = (u_1,\ldots,u_{2g-1}),\; \deg u_{2k-1}=1-2k$;

- $\partial_{u_{2l-1}}h_{2(k-i)+1}$ are homogeneous polynomials of degree $2(k+l-i)$ in the ring $\mathcal{P}_g$.

This approach was proposed in \cite{buch16} and found the application in \cite{bun17}.
A detailed construction of the Lie algebra $\mathcal{L}_2$ is given in \cite{buch16}, and the Lie algebra $\mathcal{L}_3$ in \cite{bun17}.

General methods and results (see Section 3.2) will be demonstrated in cases $g=1$ (see Section 3.3) and $g=2$ (see Section 3.4).

\subsection{\underline{Hyperelliptic functions of genus $g\geqslant 1$}}

For brevity, Abelian functions on the Jacobian varieties of \eqref{F-1}  will be called {\it hyperelliptic functions} of genus~$g$.
In the theory and applications of these functions, that are based on the sigma-function
$\sigma(u;\lambda)$ (see \cite{bak998, bel97, bel997, bel12}), the grading plays an important role.
Below, the variables $u=(u_1,u_3,\ldots,u_{2g-1})$, parameters $\lambda=(\lambda_4,\ldots,\lambda_{4g+2})$ and functions
 are indexed in a way that clearly indicates their grading. Note that our new notations for
the variables differ from the ones in \cite{bel97, bel997, bel12} as follows
\[
u_i \longleftrightarrow u_{2(g-i)+1},\, i=1,\ldots,g.
\]

Let
\[
\omega = \left( (2k_1-1)\cdot j_1,\ldots,(2k_s-1)\cdot j_s \right)
\]
where $1\leqslant s\leqslant g$,\; $j_q> 0,\; q=1,\ldots,s$ and $j_1+\ldots+j_s\geqslant 2$.
We draw attention to the fact that the symbol ``$\cdot$'' in the two-component expression $(2k_q-1)\cdot j_q$ is not
a multiplication symbol.
Set
\begin{equation} \label{f-2}
\wp_\omega(u;\lambda)=
-\partial^{j_1}_{u_{2k_1-1}} \cdots \partial^{j_s}_{u_{2k_s-1}}\,\ln\sigma(u;\lambda).
\end{equation}
Thus
\[
\deg \wp_\omega = (2k_1-1)j_1+\cdots+(2k_s-1)j_s.
\]

Note that our $\omega$ differ from the ones in \cite{buch16, bun17}.

Say that a multi-index $\omega$ is given in normal form if $1 \leqslant k_1 < \ldots < k_s$. According to formula \eqref{f-2},
we can always bring the multi-index $\omega$ to a normal form using the identifications:
\begin{align*}
  \left( (2k_p-1)\cdot j_p,(2k_q-1)\cdot j_q  \right) &= \left( (2k_q-1)\cdot j_q,(2k_p-1)\cdot j_p  \right), \\
  \left( (2k_p-1)\cdot j_p,(2k_q-1)\cdot j_q  \right) &= (2k_p-1) \cdot (j_p+j_q),\; \text { if }\, k_p=k_q.
\end{align*}

In \cite{bel12} (see also \cite{bel97, bel997}) it was proved that for $1\leqslant i \leqslant k \leqslant g$ all algebraic relations
between hyperelliptic functions of genus $g$ follow from the relations, which in our graded notations have the form
\begin{equation} \label{f-3}
\wp_{1\cdot 3,(2i-1)\cdot 1} = 6\left(\wp_{1\cdot 2}\wp_{1\cdot 1,(2i-1)\cdot 1} + \wp_{1\cdot 1,(2i+1)\cdot 1}\right) -
2\left(\wp_{3\cdot 1,(2i-1)\cdot 1} - \lambda_{2i+2} \delta_{i,1}\right).
\end{equation}
Here and below, $\delta_{i,k}$ is the Kronecker symbol, $\deg \delta_{i,k}=0$.
\begin{multline}\label{f-4}
\wp_{1\cdot 2,(2i-1)\cdot 1}\wp_{1\cdot 2,(2k-1)\cdot 1} = 4\left(\wp_{1\cdot 2}\wp_{1\cdot 1,(2i-1)\cdot 1}\wp_{1\cdot 1,(2k-1)\cdot 1} + \wp_{1\cdot 1,(2k-1)\cdot 1}\wp_{1\cdot 1,(2i+1)\cdot 1}\right. +\\
\left.+\wp_{1\cdot 1,(2i-1)\cdot 1}\wp_{1\cdot 1,(2k+1)\cdot 1} + \wp_{(2k+1)\cdot 1,(2i+1)\cdot 1} \right) - 2\left(\wp_{1\cdot 1,(2i-1)\cdot 1}\wp_{3\cdot 1,(2k-1)\cdot 1} \right. + \\
\left. + \wp_{1\cdot 1,(2k-1)\cdot 1}\wp_{3\cdot 1,(2i-1)\cdot 1} +\wp_{(2k-1)\cdot 1,(2i+3)\cdot 1} +\wp_{(2i-1)\cdot 1,(2k+3)\cdot 1} \right) + \\
+ 2 \left(\lambda_{2i+2}\wp_{1\cdot 1,(2k-1)\cdot 1}\delta_{i,1} + \lambda_{2k+2}\wp_{1\cdot 1,(2i-1)\cdot 1}\delta_{k,1} \right)  +
2\lambda_{2(i+j+1)} (2\delta_{i,k} + \delta_{k,i-1} + \delta_{i,k-1}).
\end{multline}

\begin{corollary}\label{ex-1}
For all $g\geqslant 1$, we have the formulas:
\begin{enumerate}
  \item [1.] Setting $i=1$ in \eqref{f-3}, we obtain
\begin{equation}\label{ex-1-1}
\wp_{1\cdot 4} = 6\wp_{1\cdot 2}^2 + 4\wp_{1\cdot 1,3\cdot 1} + 2\lambda_4.
\end{equation}
  \item [2.] Setting $i=2$ in \eqref{f-3}, we obtain
\begin{equation}\label{ex-1-2}
\wp_{1\cdot 3,3\cdot 1} = 6(\wp_{1\cdot 2} \wp_{1\cdot 1,3\cdot 1} + \wp_{1\cdot 1,5\cdot 1}) - 2\wp_{3\cdot 2}.
\end{equation}
  \item [3.] Setting $i=k=1$ in \eqref{f-4}, we obtain
\begin{equation}\label{ex-1-3}
\wp_{1\cdot 3}^2 = 4\left[ \wp_{1\cdot 2}^3 + (\wp_{1\cdot 1,3\cdot 1} + \lambda_4)\wp_{1\cdot 2} + (\wp_{3\cdot 2} -
\wp_{1\cdot 1,5\cdot 1} + \lambda_6) \right].
\end{equation}
\end{enumerate}
\end{corollary}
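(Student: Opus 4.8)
The plan is to obtain each of the three identities by direct specialization of the master relations \eqref{f-3} and \eqref{f-4}, so that the only real work is the bookkeeping of multi-indices in normal form. Recall from \eqref{f-2} that a multi-index records repeated differentiations by the same variable additively, and that its components commute; concretely I will repeatedly use the reductions $\wp_{1\cdot 1,1\cdot 1}=\wp_{1\cdot 2}$, $\wp_{1\cdot 3,1\cdot 1}=\wp_{1\cdot 4}$, $\wp_{3\cdot 1,3\cdot 1}=\wp_{3\cdot 2}$, together with the symmetry reduction $\wp_{3\cdot 1,(2i-1)\cdot 1}=\wp_{(2i-1)\cdot 1,3\cdot 1}$ (in particular $\wp_{3\cdot 1,1\cdot 1}=\wp_{1\cdot 1,3\cdot 1}$). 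All of these follow at once from the identifications displayed just before the corollary.

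For the first part I would set $i=1$ in \eqref{f-3}. The left-hand side $\wp_{1\cdot 3,1\cdot 1}$ collapses to $\wp_{1\cdot 4}$; on the right the factor $\wp_{1\cdot 1,1\cdot 1}$ becomes $\wp_{1\cdot 2}$, the Kronecker term contributes $2\lambda_4$ since $\delta_{1,1}=1$, and the reduction $\wp_{3\cdot 1,1\cdot 1}=\wp_{1\cdot 1,3\cdot 1}$ turns the combination $6\wp_{1\cdot 1,3\cdot 1}-2\wp_{1\cdot 1,3\cdot 1}$ into $4\wp_{1\cdot 1,3\cdot 1}$, yielding exactly \eqref{ex-1-1}. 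For the second part I set $i=2$ in \eqref{f-3}: now $(2i-1)=3$ and $(2i+1)=5$, the Kronecker term vanishes because $\delta_{2,1}=0$, and $\wp_{3\cdot 1,3\cdot 1}$ reduces to $\wp_{3\cdot 2}$, which gives \eqref{ex-1-2} with no further simplification needed.

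The third part is the only one requiring genuine collection of like terms. Setting $i=k=1$ in \eqref{f-4}, the left side $\wp_{1\cdot 2,1\cdot 1}^2$ becomes $\wp_{1\cdot 3}^2$, while on the right each index reduction is as above: the first bracket (multiplied by $4$) contributes $4(\wp_{1\cdot 2}^3+2\wp_{1\cdot 2}\wp_{1\cdot 1,3\cdot 1}+\wp_{3\cdot 2})$, the subtracted bracket contributes $-4\wp_{1\cdot 2}\wp_{1\cdot 1,3\cdot 1}-4\wp_{1\cdot 1,5\cdot 1}$, and the two Kronecker groups (with $\delta_{1,1}=1$ and $\delta_{1,0}=0$) give $4\lambda_4\wp_{1\cdot 2}+4\lambda_6$. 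Summing these and merging the two $\wp_{1\cdot 2}\wp_{1\cdot 1,3\cdot 1}$ contributions into a single term produces \eqref{ex-1-3} after factoring out $4$. The one thing to watch throughout is the consistent use of normal form, in particular that mixed symbols such as $\wp_{3\cdot 1,1\cdot 1}$ and $\wp_{1\cdot 1,3\cdot 1}$ denote the same object; a single mis-reduction there would spoil the cancellations. Once that is handled, the entire corollary is a routine substitution with no conceptual obstacle.
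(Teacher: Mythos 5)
Your proposal is correct and is exactly the argument the paper intends: the corollary is stated as an immediate specialization of \eqref{f-3} and \eqref{f-4}, and your index bookkeeping (the reductions $\wp_{1\cdot 3,1\cdot 1}=\wp_{1\cdot 4}$, $\wp_{3\cdot 1,1\cdot 1}=\wp_{1\cdot 1,3\cdot 1}$, $\wp_{3\cdot 1,3\cdot 1}=\wp_{3\cdot 2}$, the Kronecker evaluations, and the merge $8-4=4$ of the $\wp_{1\cdot 2}\wp_{1\cdot 1,3\cdot 1}$ terms) reproduces all three identities accurately. No gaps.
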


\begin{theorem}\label{T-7.1}
{\rm 1.} For any $\omega = \left( (2k_1-1)\cdot j_1,\ldots,(2k_s-1)\cdot j_s \right)$ the hyperelliptic function
$\wp_\omega(u;\lambda)$ is a polynomial from $3g$ functions $\wp_{1\cdot j,(2k-1)\cdot 1},\;1\leqslant j \leqslant 3,\;
1\leqslant k\leqslant g$.

Note that if $k=1$, we have $\wp_{1\cdot j,1\cdot 1} = \wp_{1\cdot (j+1)}$.

{\rm 2.} Set $W_{\wp} = \{ \wp_{1\cdot j,(2k-1)\cdot 1},\;1\leqslant j \leqslant 3,\; 1\leqslant k\leqslant g \}$.
The projection of the universal bundle  $\pi_g\colon \mathcal{U}_g \to \mathcal{B}_g \subset \mathbb{C}^{2g}$
is given by the polynomials $\lambda_{2k}(W_{\wp}),\; k=2, \ldots,2g+1$ of degree at most 3 from the functions $\wp_{1\cdot j,(2k-1)\cdot 1}$.
\end{theorem}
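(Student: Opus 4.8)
The proof rests on the relation families \eqref{f-3} and \eqref{f-4}, which I will solve triangularly; their validity as identities among hyperelliptic functions (and, by \cite{bel12}, their completeness) is all that is needed. \emph{Reduction of Part 1.} Every $\wp_\omega$ with $\mathrm{ord}\,\omega=j_1+\dots+j_s\geqslant 2$ arises from a two-index function $\wp_{(2a-1)\cdot 1,(2b-1)\cdot 1}$ by applying $\mathrm{ord}\,\omega-2$ single derivatives $\partial_{u_{2l-1}}$; hence, writing $R=\mathbb{C}[\lambda][W_\wp]$, Part 1 follows once I show (I) every two-index $\wp_{(2a-1)\cdot 1,(2b-1)\cdot 1}\in R$ and (II) $R$ is closed under each $\partial_{u_{2l-1}}$. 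Because $\partial_{u_{2l-1}}$ annihilates $\lambda$, the chain rule reduces (II) to $\partial_{u_{2l-1}}w\in R$ for the $3g$ generators $w$; and since $\partial_{u_{2l-1}}\wp_{1\cdot 1,(2k-1)\cdot 1}=\partial_{u_1}\wp_{(2k-1)\cdot 1,(2l-1)\cdot 1}$ by commuting partials, (I) together with closure under $\partial_{u_1}$ alone gives (II) for all $l$. Thus Part 1 collapses to (I) and the single membership $\wp_{1\cdot 4,(2k-1)\cdot 1}\in R$.

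\emph{Proof of (I).} I induct on $m=\min(a,b)$. For $m=1$ the function is the generator $\wp_{1\cdot 1,(2b-1)\cdot 1}$; for $m=2$, solving \eqref{f-3} for $\wp_{3\cdot 1,(2i-1)\cdot 1}$ expresses $\wp_{u_3u_{2i-1}}$ through $W_\wp$ and $\lambda$ (the summand $\wp_{1\cdot 1,(2i+1)\cdot 1}$ being a generator, or zero for $i=g$). For $m=a\geqslant 3$, $a\leqslant b\leqslant g$, I apply \eqref{f-4} with $i=a-2$, $k=b$: its three general two-index terms then sit at the index pairs $(a-1,b+1)$, $(a-2,b+2)$ and $(b,a)$, of which the first two have smaller index $<a$ (known by induction, or vanishing when an index exceeds $g$) while the third is precisely the target, entering linearly with coefficient $-2$. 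Every other term of \eqref{f-4} is a product of generators, a $\lambda$-multiple of a generator, or one of the $\wp_{u_3\,\cdot}$ already handled. Solving for the target closes the induction; the index choice makes the recursion triangular, so no genuine linear system is solved.

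\emph{Closure under $\partial_{u_1}$.} Differentiating \eqref{ex-1-1} by $u_{2k-1}$ gives $\wp_{1\cdot 4,(2k-1)\cdot 1}=12\,\wp_{1\cdot 2}\wp_{1\cdot 2,(2k-1)\cdot 1}+4\,\wp_{1\cdot 1,3\cdot 1,(2k-1)\cdot 1}$, and differentiating the \eqref{f-3}-expression for $\wp_{3\cdot 1,(2k-1)\cdot 1}$ by $u_1$ yields a second relation between $\wp_{1\cdot 1,3\cdot 1,(2k-1)\cdot 1}$ and the same $\wp_{1\cdot 4,(2k-1)\cdot 1}$. In this $2\times 2$ pair the $\wp_{1\cdot 4}$-contributions cancel, leaving $\wp_{1\cdot 1,3\cdot 1,(2k-1)\cdot 1}=\wp_{1\cdot 3}\wp_{1\cdot 1,(2k-1)\cdot 1}-\wp_{1\cdot 2}\wp_{1\cdot 2,(2k-1)\cdot 1}+\wp_{1\cdot 2,(2k+1)\cdot 1}\in R$, whence $\wp_{1\cdot 4,(2k-1)\cdot 1}\in R$. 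This proves $\partial_{u_1}(R)\subseteq R$ and, with (I), all of Part 1.

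\emph{Part 2 and the main obstacle.} Each parameter enters \eqref{f-3}, \eqref{f-4} linearly: $\lambda_4$ via \eqref{ex-1-1}, and $\lambda_{2(i+k+1)}$ via the term $2\lambda_{2(i+k+1)}(2\delta_{i,k}+\delta_{k,i-1}+\delta_{i,k-1})$ of \eqref{f-4}, which realises every index $2k$, $k=2,\dots,2g+1$, with nonzero coefficient for a suitable $(i,k)$. Solving for the $\lambda$'s and substituting the expressions from (I) and the closure step writes each $\lambda_{2k}$ as a polynomial in $W_\wp$; this is the projection $\pi_g$. The cubic bound holds because the only degree-$3$ contributions are the triple products $\wp_{1\cdot 2}\wp_{1\cdot 1,\cdot}\wp_{1\cdot 1,\cdot}$ and the products $\lambda_4\cdot\wp_{1\cdot 2}$, while the two-index functions and lower parameters always occur linearly, so no product of two factors each of degree $\geqslant 2$ can form. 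I expect the main difficulty to lie exactly here: verifying, through all boundary cases where indices approach $g$ and terms of \eqref{f-4} drop out, that each $\lambda_{2k}$ is reached once with an invertible coefficient and that the degree-$3$ ceiling is never breached, together with the parallel check in (I) that the vanishing of out-of-range $\wp$'s leaves the triangular order intact.
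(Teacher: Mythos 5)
Your proposal is correct and rests on the same two pillars as the paper's own treatment --- the relation families \eqref{f-3}, \eqref{f-4} together with repeated application of $\partial_{u_1}$ --- but it goes further than the paper actually does: the paper only illustrates the method on Example \ref{ex-2} and then carries it out in full for $g=1$ and $g=2$ (the latter inside the proof of Theorem \ref{T-7.13}), whereas you give a genuine induction for arbitrary $g$. Your two additions are worth keeping: first, the triangular recursion for the two-index functions, obtained by reading \eqref{f-4} at $(i,k)=(a-2,b)$ so that the target $\wp_{(2a-1)\cdot 1,(2b-1)\cdot 1}$ enters linearly with coefficient $-2$ while the remaining general two-index terms sit at minimal index $a-1$ and $a-2$; second, the closure identity $\wp_{1\cdot 1,3\cdot 1,(2k-1)\cdot 1}=\wp_{1\cdot 3}\wp_{1\cdot 1,(2k-1)\cdot 1}-\wp_{1\cdot 2}\wp_{1\cdot 2,(2k-1)\cdot 1}+\wp_{1\cdot 2,(2k+1)\cdot 1}$, which checks out against the $g=2$ formulas \eqref{f-40}, \eqref{f-41} and \eqref{f-45} (it reproduces $y_7=4x_3y_4+8x_2y_5$ and $\wp_{1\cdot 1,3\cdot 2}=x_3y_4-x_2y_5$). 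Two small points: your reduction of (II) only treats $\partial_{u_{2l-1}}$ of the $j=1$ generators, but the same commutation gives $\partial_{u_{2l-1}}\wp_{1\cdot j,(2k-1)\cdot 1}=\partial_{u_1}^{\,j}\wp_{(2k-1)\cdot 1,(2l-1)\cdot 1}$ for $j=2,3$, so nothing is lost. The issue you flag at the end is real and is not closed by the paper either: the two-index functions of index sum $m+1$ that appear in the copy of \eqref{f-4} used to extract $\lambda_{2m}$ were themselves solved from instances of \eqref{f-4} containing $\lambda_{2m}$, so after substitution the net coefficient of $\lambda_{2m}$ must be verified to remain nonzero, and the same bookkeeping underlies the degree-$3$ bound; for $g\leqslant 2$ this is exactly what \eqref{f-31}--\eqref{f-34} confirm. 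So your argument is best read not as a different route but as a completion of the paper's sketch to general $g$, modulo that final coefficient check.
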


The proof method of Theorem \ref{T-7.1} will be demonstrated on the following examples:
\begin{example}\label{ex-2}
{\rm 1.} Differentiating the relation \eqref{ex-1-1} with respect to $u_1$, we obtain
\begin{equation}\label{ex-2-1}
\wp_{1\cdot 5} = 12\wp_{1\cdot 2}\wp_{1\cdot 3} + 4\wp_{1\cdot 2,3\cdot 1}.
\end{equation}
{\rm 2.} According to formula \eqref{ex-1-1}, we obtain
\begin{equation}\label{ex-2-2}
2\lambda_4 = \wp_{1\cdot 4} - 6\wp_{1\cdot 2}^2 - 4\wp_{1\cdot 1,3\cdot 1}.
\end{equation}
{\rm 3.} According to formula \eqref{ex-1-2}, we obtain
\begin{equation}\label{ex-2-3}
2\wp_{3\cdot 2} = 6(\wp_{1\cdot 2} \wp_{1\cdot 1,3\cdot 1} + \wp_{1\cdot 1,5\cdot 1}) - \wp_{1\cdot 3,3\cdot 1}.
\end{equation}
{\rm 4.} Substituting expressions for $\lambda_4$ (see \eqref{ex-2-2}) and $\wp_{3\cdot 2}$ (see \eqref{ex-2-3})
  into formula \eqref{ex-1-3}, we obtain an expression for the polynomial $\lambda_6$.
\end{example}

The derivation of formulas \eqref{ex-2-1} - \eqref{ex-2-3} and the method of obtaining the polynomial $\lambda_6$ demonstrate
the method of proving Theorem \ref{T-7.1}. Below, this method will be set out in detail in cases $g=1$ (see Section 3.3)
and $g=2$ (see Section 3.4).

\begin{corollary}\label{cor-7-0}
The operator $L$ of differentiation with respect to $u=(u_1,\ldots, u_{2g-1})$ and $\lambda = (\lambda_1,\ldots, \lambda_{4g+2})$
is a derivation of the ring $\mathcal{P}$ if and only if $L\wp_{1\cdot 1,(2k-1)\cdot 1}\in \mathcal{P}$ for $k = 1,\ldots,g$.
\end{corollary}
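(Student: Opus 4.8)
The plan is to read off the result from Theorem~\ref{T-7.1}, which presents $\mathcal{P}$ as the polynomial ring generated by the $3g$ functions
\[
W_\wp=\{\wp_{1\cdot j,(2k-1)\cdot 1}\,:\,1\le j\le 3,\ 1\le k\le g\}.
\]
Since $L$ is a first-order differential operator it already satisfies the Leibniz rule on the ambient field of meromorphic functions on $\mathcal{B}_g\times\mathbb{C}^g$, and a derivation maps a polynomial ring into itself exactly when it maps each generator into it. Thus $L$ is a derivation of $\mathcal{P}$ iff $Lw\in\mathcal{P}$ for every $w\in W_\wp$. The ``only if'' direction is then immediate, because $\wp_{1\cdot 1,(2k-1)\cdot 1}\in W_\wp\subset\mathcal{P}$. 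All the content lies in the converse: I must upgrade the $g$ hypotheses $L\wp_{1\cdot 1,(2k-1)\cdot 1}\in\mathcal{P}$ to control of $L$ on the remaining $2g$ generators, those with $j=2,3$.

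First I would record two structural facts that cut $3g$ conditions down to $g$. By the defining formula \eqref{f-2} the extra generators are $u_1$-derivatives of the base ones,
\[
\wp_{1\cdot 2,(2k-1)\cdot 1}=\partial_{u_1}\wp_{1\cdot 1,(2k-1)\cdot 1},\qquad
\wp_{1\cdot 3,(2k-1)\cdot 1}=\partial_{u_1}^2\wp_{1\cdot 1,(2k-1)\cdot 1}.
\]
Moreover each $\partial_{u_{2i-1}}$ is itself a derivation of $\mathcal{P}$: applied to a generator it yields $\wp_{1\cdot(j+1),(2k-1)\cdot 1}$, which is again a $\wp_\omega$ and hence a polynomial in $W_\wp$ by Theorem~\ref{T-7.1}(1) (for $j=3$ one gets $\wp_{1\cdot 4,(2k-1)\cdot 1}$, still lying in $\mathcal{P}$). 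In particular $\partial_{u_1}$ preserves $\mathcal{P}$.

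The decisive step is a commutator computation. Writing $P_k=\wp_{1\cdot 1,(2k-1)\cdot 1}$ and commuting $L$ past $\partial_{u_1}$,
\[
L\,\wp_{1\cdot 2,(2k-1)\cdot 1}=L\,\partial_{u_1}P_k=\partial_{u_1}(LP_k)+[L,\partial_{u_1}]P_k .
\]
The first summand lies in $\mathcal{P}$ because $LP_k\in\mathcal{P}$ by hypothesis and $\partial_{u_1}$ preserves $\mathcal{P}$, so everything reduces to showing that $[L,\partial_{u_1}]$ maps $\mathcal{P}$ into $\mathcal{P}$. Writing $L=\sum_i a_{2i-1}\partial_{u_{2i-1}}+\sum_k b_{2k}\partial_{\lambda_{2k}}$ gives
\[
[L,\partial_{u_1}]=-\sum_i(\partial_{u_1}a_{2i-1})\partial_{u_{2i-1}}-\sum_k(\partial_{u_1}b_{2k})\partial_{\lambda_{2k}} .
\]
Here I use the structure of the admissible operators $\ell_{2k}-\widehat{H}_{2k}$: the $\lambda$-part has coefficients depending on $\lambda$ only, so $\partial_{u_1}b_{2k}=0$, while the $u$-part comes from $\widehat{H}_{2k}$ whose coefficients satisfy $\partial_{u_1}a_{2i-1}\in\mathcal{P}$. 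Hence $[L,\partial_{u_1}]=-\sum_i(\partial_{u_1}a_{2i-1})\partial_{u_{2i-1}}$ has coefficients in $\mathcal{P}$ and each $\partial_{u_{2i-1}}$ preserves $\mathcal{P}$, so $[L,\partial_{u_1}]$ maps $\mathcal{P}$ into $\mathcal{P}$; in particular $[L,\partial_{u_1}]P_k\in\mathcal{P}$. This gives $L\wp_{1\cdot 2,(2k-1)\cdot 1}\in\mathcal{P}$, and the same argument applied to $\wp_{1\cdot 3,(2k-1)\cdot 1}=\partial_{u_1}\wp_{1\cdot 2,(2k-1)\cdot 1}$, now feeding in the case just proved, yields $L\wp_{1\cdot 3,(2k-1)\cdot 1}\in\mathcal{P}$. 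Thus $L$ maps all of $W_\wp$ into $\mathcal{P}$, completing the converse.

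I expect the commutator step to be the main obstacle, and it is the only place where the hypothesis ``$L$ is an operator of differentiation with respect to $u$ and $\lambda$'' must be read in the precise sense of the admissible class: the $\partial_\lambda$-part must have $u$-independent coefficients and the $\partial_u$-part must have coefficients whose $u_1$-derivatives already lie in $\mathcal{P}$. Without this the identity $[L,\partial_{u_1}]P_k\in\mathcal{P}$ can break down, since $\partial_{\lambda_{2k}}P_k$ is generically not an Abelian function—this non-closedness under $\partial_\lambda$ is exactly the differentiation problem being solved in Section~3. I would therefore make these two coefficient properties explicit at the outset, after which the reduction from $3g$ to $g$ generators proceeds as above.
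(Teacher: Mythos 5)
Your proposal is correct and follows essentially the same route as the paper: reduce to the $3g$ generators via Theorem~\ref{T-7.1}, write $L\wp_{1\cdot j,(2k-1)\cdot 1}=(L_1L+[L,L_1])\wp_{1\cdot(j-1),(2k-1)\cdot 1}$ with $L_1=\partial_{u_1}$, and induct on $j$. The only difference is that where the paper simply invokes $[L,L_1]\in\mathcal{L}^*$, you verify this by an explicit coefficient computation for the admissible operators, which is a reasonable elaboration of the same step rather than a different argument.
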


\begin{proof}
According to part 1 of Theorem \ref{T-7.1}, it suffices to prove that $L\wp_{1\cdot j,(2k-1)\cdot 1}\in \mathcal{P}$ for $j = 2$ and 3,
$k = 1,\ldots,g$. We have $L\wp_{1\cdot j,(2k-1)\cdot 1} = LL_1\wp_{1\cdot (j-1),(2k-1)\cdot 1} = (L_1L + [L,L_1])\wp_{1\cdot (j-1),(2k-1)\cdot 1}$.
Using now that $[L,L_1]\in \mathcal{L}^*$  and the assumption of Theorem \ref{T-7.1}, we complete the proof by induction.
\end{proof}

Set $\mathcal{A} = \mathbb{C}[X]$, where $X = \{ x_{i,2j-1},\; 1 \leqslant i \leqslant 3,\; 1 \leqslant j \leqslant g \}$,
$\deg x_{i,2j-1} = i + 2j - 1$.
\begin{corollary}\label{cor-7-1}
{\rm 1.} The birational isomorphism $J \colon \mathcal{U}_g \to \mathbb{C}^{3g}$ is given by the polynomial isomorphism
\[
J^* \colon \mathcal{A} \longrightarrow \mathcal{P}\; : \; J^*X = W_{\wp}.
\]

{\rm 2.} There is a polynomial map
\[
p \colon \mathbb{C}^{3g} \longrightarrow \mathbb{C}^{2g},\quad p(X) = \lambda,
\]
such that
\[
p^* \lambda_{2k} = \lambda_{2k}(X),\; k=2,\ldots,2g+1,
\]
where $\lambda_{2k}(X)$ are the polynomials from Theorem \ref{T-7.1}, item 2, obtained by substituting \; $W_{\wp}~\longmapsto~X$.

\end{corollary}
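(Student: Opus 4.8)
The plan is to read part~1 as the assertion that the graded ring homomorphism $J^*\colon\mathcal{A}\to\mathcal{P}$ sending $x_{i,2j-1}\mapsto\wp_{1\cdot i,(2j-1)\cdot 1}$ is an isomorphism, and then to deduce part~2 formally. The map is well defined because $\mathcal{A}$ is free on the $3g$ symbols $X$, and it preserves the grading since $\deg x_{i,2j-1}=i+2j-1=\deg\wp_{1\cdot i,(2j-1)\cdot 1}$. Surjectivity is immediate from part~1 of Theorem~\ref{T-7.1}: by definition $\mathcal{P}$ is generated over $\mathbb{C}$ by all $\wp_\omega$ with $j_1+\cdots+j_s\geqslant 2$, and each such $\wp_\omega$ is a polynomial in the $3g$ generators $W_\wp=J^*(X)$, whence $\mathcal{P}=\mathbb{C}[W_\wp]=J^*(\mathcal{A})$.

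The crux is injectivity, i.e.\ the algebraic independence over $\mathbb{C}$ of the $3g$ functions $W_\wp$. I would prove it by a Krull-dimension count after exhibiting $3g$ independent elements of $\mathcal{P}$. The $2g$ coefficients $\lambda_4,\ldots,\lambda_{4g+2}$ are coordinates on the open set $\mathcal{B}_g\subset\mathbb{C}^{2g}$, hence algebraically independent, and they lie in $\mathcal{P}$ by part~2 of Theorem~\ref{T-7.1}. The $g$ functions $\wp_{1\cdot 1,(2k-1)\cdot 1}$, after the relabelling $u_i\leftrightarrow u_{2(g-i)+1}$, are exactly the coefficients $\wp_{g,1},\ldots,\wp_{g,g}$ of the Jacobi-inversion polynomial recalled in Section~2, i.e.\ the elementary symmetric functions of the $x$-coordinates of a generic degree-$g$ divisor; restricted to a fixed generic fibre $J_\lambda$ they are algebraically independent, because the Abel map realises $\mathrm{Sym}^g\mathcal{C}\to J_\lambda$ birationally and $(e_1,\ldots,e_g)$ is dominant onto $\mathbb{C}^g$. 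A polynomial relation among $\{\lambda_{2k}\}\cup\{\wp_{1\cdot 1,(2k-1)\cdot 1}\}$, restricted to a generic fibre where the $\lambda$ are constant, must then have all its $\mathbb{C}[\lambda]$-coefficients vanishing on a dense subset of $\mathcal{B}_g$, hence identically; so these $3g$ elements of $\mathcal{P}$ are algebraically independent and $\mathrm{tr.deg}_{\mathbb{C}}\mathcal{P}=3g=\dim\mathcal{A}$. Since $J^*$ is a surjection of finitely generated $\mathbb{C}$-domains and $\mathcal{A}$ is a polynomial ring of dimension $3g$, the prime $\ker J^*$ cannot be nonzero without lowering the dimension; thus $\ker J^*=0$ and $J^*$ is an isomorphism $\mathcal{A}\cong\mathcal{P}$. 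Passing to fraction fields (the function field of $\mathcal{U}_g$ being recovered from $\mathcal{P}$ via property~(3) together with Theorem~\ref{T-7.1}), this is the comorphism of the birational isomorphism $J\colon\mathcal{U}_g\to\mathbb{C}^{3g}$, consistently with the already known equality $\dim\mathcal{U}_g=3g$.

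For part~2, let $\lambda_{2k}(X)\in\mathcal{A}$ be obtained from the expression $\lambda_{2k}(W_\wp)$ of part~2 of Theorem~\ref{T-7.1} by the substitution $W_\wp\mapsto X$, and let $p\colon\mathbb{C}^{3g}\to\mathbb{C}^{2g}$ have these as its components, so that $p^*\lambda_{2k}=\lambda_{2k}(X)$ holds by construction; applying $J^*$ then gives $J^*(\lambda_{2k}(X))=\lambda_{2k}(W_\wp)=\pi^*\lambda_{2k}$, which is precisely the factorisation $\pi=p\circ J$ of the projection. Everything past Theorem~\ref{T-7.1} is thus bookkeeping, and I expect the only genuine difficulty to be the injectivity step: the delicate point is to secure the fibrewise algebraic independence of the coordinates $\wp_{1\cdot 1,(2k-1)\cdot 1}$ and to upgrade it uniformly in $\lambda$, so that the dimension count is valid for generic rather than merely special parameters.
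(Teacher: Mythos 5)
Your proposal is correct, but it does considerably more than the paper, which states this corollary with no proof at all: in the text it is presented as an immediate repackaging of Theorem \ref{T-7.1} in the language of graded polynomial rings, with the underlying birational isomorphism $\mathcal{U}_g\cong\mathbb{C}^{3g}$ taken as known from Dubrovin--Novikov \cite{dn974} and from \cite{bel97,bel997}, where these very coordinates $W_\wp$ were introduced. What you add is precisely the part the paper outsources to those references, namely the injectivity of $J^*$, i.e.\ the algebraic independence of the $3g$ functions $W_\wp$. Your route --- exhibiting the $3g$ elements $\lambda_4,\dots,\lambda_{4g+2}$ and $\wp_{1\cdot 1,(2k-1)\cdot 1}$, using Jacobi inversion to get fibrewise independence of the latter as elementary symmetric functions of the divisor's $x$-coordinates, separating variables to get global independence, and closing with the Krull-dimension count for a surjection of domains --- is sound, and the identification $\wp_{1\cdot 1,(2k-1)\cdot 1}\leftrightarrow\wp_{g,g-k+1}$ under the relabelling $u_i\leftrightarrow u_{2(g-i)+1}$ is the right bridge to the Jacobi-inversion formula of Section 2. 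The trade-off is that your argument is self-contained modulo Theorem \ref{T-7.1} and the birationality of $\mathrm{Sym}^g\mathcal{C}\to J_\lambda$, whereas the paper's (implicit) argument simply inherits the statement from the cited construction of the coordinates; your version makes explicit that surjectivity is the content of Theorem \ref{T-7.1} while injectivity is a separate geometric fact, and part 2 is, as you say, bookkeeping giving the factorisation $\pi=p\circ J$.
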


The isomorphism $J^*$ defines the Lie $\mathcal{A}$-algebra $\mathcal{L} = \mathcal{L}_g$ with $3g$ generators $L_{2k-1},\; k=1,\ldots,g$
and $L_{2l},\; l=0,\ldots,2g-1$.
In terms of the coordinates $x_{i,2j-1}$, we obtain the following description of the $g$-th stationary flow of KdV system.

\begin{theorem}\label{T-7.2}
{\rm 1.} The commuting operators $L_{2k-1},\; k=1,\ldots,g,$ define on $\mathbb{C}^{3g}$ a polynomial dynamical system
\begin{equation}\label{f-sist}
L_{2k-1}X = G_{2k-1}(X),\; k=1,\ldots,g,
\end{equation}
where $G_{2k-1}(X) = \{ G_{2k-1,i,2j-1}(X) \}$ and $G_{2k-1,i,2j-1}(X)$ is a polynomial that uniquely defines the expression
for the function $\wp_{1\cdot i,(2j-1)\cdot 1,(2k-1)\cdot 1}$ in the form of a polynomial from the functions $\wp_{1\cdot i,(2q-1)\cdot 1}$.

{\rm 2.} System \eqref{f-sist} has $2g$ polynomial integrals $\lambda_{2k} = \lambda_{2k}(X),\; k=2,\ldots,2g+1$.
\end{theorem}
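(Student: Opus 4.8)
The plan is to obtain both assertions as formal consequences of Theorem~\ref{T-7.1} together with the polynomial isomorphism $J^*\colon\mathcal{A}\to\mathcal{P}$ of Corollary~\ref{cor-7-1}, exploiting that each generator $L_{2k-1}$ is, by construction, the transport through $J^*$ of the partial derivative $\partial_{u_{2k-1}}$; that is, $J^*\circ L_{2k-1}=\partial_{u_{2k-1}}\circ J^*$.

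For part~1, I would compute the vector field generator by generator. Under $J^*$ we have $J^*x_{i,2j-1}=\wp_{1\cdot i,(2j-1)\cdot 1}$, hence
\[
J^*\!\left(L_{2k-1}x_{i,2j-1}\right)=\partial_{u_{2k-1}}\wp_{1\cdot i,(2j-1)\cdot 1}=\wp_{1\cdot i,(2j-1)\cdot 1,(2k-1)\cdot 1},
\]
since applying one more $\partial_{u_{2k-1}}$ to $-\ln\sigma$ simply appends the index $(2k-1)\cdot 1$ to the multi-index (to be brought to normal form). By part~1 of Theorem~\ref{T-7.1} this hyperelliptic function is a polynomial in the $3g$ generators $W_\wp=J^*X$; writing it as $J^*\Phi$ with $\Phi\in\mathcal{A}$ and using injectivity of $J^*$ gives $L_{2k-1}x_{i,2j-1}=\Phi(X)=:G_{2k-1,i,2j-1}(X)$, which is exactly the desired polynomial dynamical system~\eqref{f-sist}. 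Homogeneity is automatic: since $\deg u_{2k-1}=-(2k-1)$, the field $L_{2k-1}$ raises degree by $2k-1$, so $\deg G_{2k-1,i,2j-1}=(i+2j-1)+(2k-1)$. Finally the flows commute because $[L_{2k-1},L_{2l-1}]$ is transported to $[\partial_{u_{2k-1}},\partial_{u_{2l-1}}]=0$.

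For part~2, the point is that the $2g$ polynomials $\lambda_{2m}(X)$, $m=2,\ldots,2g+1$, of Corollary~\ref{cor-7-1} are pulled back from the base of the bundle. On $\mathcal{U}_g$ the identity $\lambda_{2m}=\lambda_{2m}(W_\wp)$ exhibits each curve coefficient, which depends only on $\lambda\in\mathcal{B}_g$, as a function constant along every Jacobian fibre. Because the flows $L_{2k-1}=\partial_{u_{2k-1}}$ move only within a fibre (fixing $\lambda$), we get $\partial_{u_{2k-1}}\lambda_{2m}(W_\wp)=0$; transporting through the injective homomorphism $J^*$ yields $L_{2k-1}\lambda_{2m}(X)=0$ in $\mathcal{A}$ for all $k=1,\ldots,g$. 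Hence the $\lambda_{2m}(X)$ are integrals of~\eqref{f-sist}, giving exactly $2g$ of them.

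The whole content is carried by Theorem~\ref{T-7.1}; granting it, the two claims are essentially formal. The only step that genuinely needs care is the conservation law in part~2, i.e.\ justifying that the $u$-flows preserve the $\lambda$'s. Conceptually this is immediate, since the $\lambda_{2m}$ are coordinates on the base $\mathcal{B}_g$ of $\pi\colon\mathcal{U}_g\to\mathcal{B}_g$; but a self-contained algebraic proof would instead differentiate the identity $\lambda_{2m}=\lambda_{2m}(W_\wp)$ by $u_{2k-1}$ and verify the cancellation using the fundamental relations~\eqref{f-3}, \eqref{f-4} and their consequences in Corollary~\ref{ex-1}. I expect this degree-by-degree cancellation---of the same flavour as the manipulations in Example~\ref{ex-2}---to be the only nontrivial computation, everything else being bookkeeping under the isomorphism $J^*$.
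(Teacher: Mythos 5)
Your argument is correct and coincides with the paper's intended derivation: the paper states Theorem~\ref{T-7.2} as an immediate consequence of Theorem~\ref{T-7.1} and the isomorphism $J^*$ of Corollary~\ref{cor-7-1}, exactly as you do (transporting $\partial_{u_{2k-1}}$ through $J^*$ for part~1, and using that the $\lambda_{2m}$ are $u$-independent together with the identity $\lambda_{2m}=\lambda_{2m}(W_\wp)$ for part~2), with the method spelled out concretely only for $g=1$ and $g=2$. The extra ``cancellation check'' you flag at the end is not needed, since the left-hand side of $\lambda_{2m}=\lambda_{2m}(W_\wp)$ is constant in $u$ and the generators $W_\wp$ are algebraically independent.
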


\subsection{\underline{Elliptic functions}}
Consider the curve
\[
V_\lambda = \{ (x,y) \in \mathbb{C}^2\;:\; y^2 = x^3 + \lambda_4 x + \lambda_6 \}.
\]
The discriminant of the family of curves $V_\lambda$ is
\[
\Delta = \{ \lambda = (\lambda_4,\lambda_6)\in \mathbb{C}^2\;:\; 4\lambda_4^3 + 27\lambda_6^2 = 0 \}.
\]
We have the universal bundle $\pi \colon \mathcal{U}_1 \to \mathcal{B}_1 = \mathbb{C}^2\setminus \Delta$
and the mapping
\[
\varphi \colon \mathcal{B}_1\times\mathbb{C} \to \mathcal{U}_1\;:\; \lambda \times\mathbb{C}
\to \mathbb{C}/\Gamma_1(\lambda).
\]

Consider the field $F = F_1$ of functions on $\mathcal{U}_1$ such that the function $\varphi^*(f)$
is meromorphic, and its restriction to the fiber $\mathbb{C}/\Gamma_1(\lambda)$ is an elliptic function
for any point $\lambda \in \mathcal{B}_1$. Using the Weierstrass sigma function $\sigma(u;\lambda)$ for
$\partial = \frac{\partial}{\partial u}$, we obtain
\[
\zeta(u) = \partial\ln \sigma(u;\lambda)\quad \text{and}\quad \wp(u;\lambda) = -\partial\zeta(u;\lambda).
\]

The ring of polynomials $\mathcal{P} = \mathcal{P}_1$ in $F$ is generated by the elliptic functions
$\wp_{1\cdot i},\; i\geqslant 2$. Set $\wp_{1\cdot i} = \wp_{i}$. We have $\wp_{2} = \wp$ and $\wp_{i+1} =
\partial\wp_{i} = \wp_{i}'$. All the algebraic relations between the functions $\wp_{i}$ follow from the relations
\begin{align}
 \wp_{4} &= 6\wp_{2}^2 + 2\lambda_4 \quad (\text{see}\;\eqref{f-3}),\label{f-12}  \\
 \wp_{3}^2 &= 4[\wp_{2}^3 + \lambda_4\wp_{2} + \lambda_6] \quad (\text{see}\;\eqref{f-4}).\label{f-13}
\end{align}
Thus, we obtain a classical result:

\begin{theorem}
{\rm 1.} There is the isomorphism $\mathcal{P}\simeq \mathbb{C}[\wp,\wp',\wp'']$.

{\rm 2.} The projection $\pi \colon \mathcal{U}_1 \to \mathbb{C}^2$ is given by the polynomials
\begin{align}
\label{f-14} & \frac{1}{2}\wp'' - 3\wp^2 = \lambda_4,\\
\label{f-15} & \left( \frac{\wp'}{2} \right)^2 + 2\wp^3 - \frac{1}{2}\wp''\wp = \lambda_6.
\end{align}
\end{theorem}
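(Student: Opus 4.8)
The plan is to derive part 2 first by a direct elimination of the curve parameters and then to use those formulas to settle part 1. For part 2 I would work in the notation $\wp_2=\wp$, $\wp_3=\wp'$, $\wp_4=\wp''$ and start from the two defining relations of the genus-one theory: $\wp''=6\wp^2+2\lambda_4$ from \eqref{f-12} and $(\wp')^2=4(\wp^3+\lambda_4\wp+\lambda_6)$ from \eqref{f-13}. Solving the first for $\lambda_4$ gives $\lambda_4=\tfrac12\wp''-3\wp^2$, which is exactly \eqref{f-14}. Substituting this into the second and solving for $\lambda_6$ yields $\lambda_6=(\wp'/2)^2+2\wp^3-\tfrac12\wp''\wp$, which is \eqref{f-15}; the only point to watch is that the cubic terms $\wp^3$ combine with the correct coefficients. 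This step is routine algebra.

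For the generation half of part 1, the key observation is that the ring $\mathbb{C}[\wp,\wp',\wp'']$ is closed under the derivation $\partial_u$. Indeed $\partial_u\wp=\wp'$ and $\partial_u\wp'=\wp''$ by definition, while differentiating \eqref{f-12} and using $\partial_u\lambda_4=0$ (the parameters are constant along the fibres) gives $\partial_u\wp''=12\wp\wp'$, again a member of $\mathbb{C}[\wp,\wp',\wp'']$. Since $\mathcal{P}$ is generated by the iterated derivatives $\wp_i=\partial_u^{\,i-2}\wp$ for $i\geq 2$, all of them lie in $\mathbb{C}[\wp,\wp',\wp'']$, so $\mathcal{P}\subseteq\mathbb{C}[\wp,\wp',\wp'']$; the reverse inclusion is immediate because $\wp,\wp',\wp''\in\mathcal{P}$. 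Hence $\mathcal{P}=\mathbb{C}[\wp,\wp',\wp'']$ as subrings of $F$.

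The substantive point, and the step I expect to be the main obstacle, is to show that $\wp,\wp',\wp''$ are algebraically independent over $\mathbb{C}$, so that the isomorphism of part 1 is with a \emph{free} polynomial ring on three generators. Here part 2 supplies the inverse: I would consider the map $\Phi\colon\mathcal{U}_1\to\mathbb{C}^3$ sending $(\lambda;u)$ to $(\wp(u;\lambda),\wp'(u;\lambda),\wp''(u;\lambda))$ and construct a rational inverse. From a triple $(p_1,p_2,p_3)$ the formulas \eqref{f-14}--\eqref{f-15} recover $\lambda_4$ and $\lambda_6$, and a short check shows that $(x,y)=(p_1,p_2/2)$ then satisfies $y^2=x^3+\lambda_4x+\lambda_6$; the Abel map returns the fibre coordinate $u$ from this point up to the lattice. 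Thus $\Phi$ is dominant, and since $\mathcal{U}_1$ has dimension three the three pulled-back coordinate functions must be algebraically independent. As a consistency check I would verify that substituting \eqref{f-14}--\eqref{f-15} back into \eqref{f-13} collapses it to the identity $(\wp')^2=(\wp')^2$, confirming that no residual algebraic relation survives among $\wp,\wp',\wp''$. Combining this independence with the equality $\mathcal{P}=\mathbb{C}[\wp,\wp',\wp'']$ established above completes the proof of both parts.
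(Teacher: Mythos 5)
Your proposal is correct and follows essentially the same route as the paper: part 2 by solving \eqref{f-12} for $\lambda_4$ and substituting into \eqref{f-13}, and part 1 by observing that differentiating \eqref{f-12} yields $\wp_{5}=12\wp_{2}\wp_{3}$, so that every $\wp_i$ is a polynomial in $\wp,\wp',\wp''$ --- which is exactly the argument the paper records in the proof of Corollary~\ref{cor-7-7}. Your extra step establishing the algebraic independence of $\wp,\wp',\wp''$ (via dominance of the map to $\mathbb{C}^3$ and $\dim\mathcal{U}_1=3$) addresses a point the paper leaves implicit and correctly justifies that the isomorphism is with a free polynomial ring.
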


Consider the linear space $\mathbb{C}^3$ with the graded coordinates $x_2,x_3,x_4,\; \deg x_k=k$.
Set $\mathcal{A}_1 = \mathbb{C}[x_2,x_3,x_4]$.

\begin{corollary}\label{cor-7-7}
The birational isomorphism $J\colon \mathcal{U}_1 \to \mathbb{C}^3$ is given by the ring isomorphism
\[
J^* \colon \mathcal{A}_1 \to \mathcal{P}_1 \;:\; J^*(x_2,x_3,x_4) = (\wp,\wp',\wp'').
\]
\end{corollary}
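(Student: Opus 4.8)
The plan is to read Corollary~\ref{cor-7-7} as the ring-theoretic dual of part~1 of the preceding Theorem, and to make the birational inverse of $J$ explicit. Since $\mathcal{A}_1=\mathbb{C}[x_2,x_3,x_4]$ is the free graded commutative $\mathbb{C}$-algebra on $x_2,x_3,x_4$ (with $\deg x_k=k$), there is a unique graded $\mathbb{C}$-algebra homomorphism $J^*\colon\mathcal{A}_1\to\mathcal{P}_1$ sending $x_2\mapsto\wp$, $x_3\mapsto\wp'$, $x_4\mapsto\wp''$; the gradings match because $\deg\wp=2$, $\deg\wp'=3$, $\deg\wp''=4$. It then remains to show that $J^*$ is bijective, i.e. that $\wp,\wp',\wp''$ generate $\mathcal{P}_1$ and are algebraically independent over $\mathbb{C}$.

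First I would prove surjectivity. By definition $\mathcal{P}_1$ is generated by the $\wp_i$, $i\geqslant 2$, so it suffices to put each $\wp_i$ into $\mathbb{C}[\wp,\wp',\wp'']$. The cases $i=2,3,4$ are tautological. For $i\geqslant 5$ I would induct on $i$: differentiating \eqref{f-12} along the fibre, where $\partial\lambda_4=0$, gives $\wp_5=12\wp_2\wp_3$, and each further application of $\partial=\partial_u$ turns $\wp_{i+1}=\partial\wp_i$ into a polynomial in $\wp,\wp',\wp''$. Moreover the base coordinates already lie in the image, since \eqref{f-14} and \eqref{f-15} read $\lambda_4=\tfrac12\wp''-3\wp^2$ and $\lambda_6=(\wp'/2)^2+2\wp^3-\tfrac12\wp''\wp$; hence $J^*$ is onto.

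The heart of the matter, and the step I expect to be the main obstacle, is injectivity, equivalently the algebraic independence of $\wp,\wp',\wp''$. The cleanest route is to construct a rational section of $J$, which at the same time exhibits birationality. For a generic $(X_2,X_3,X_4)\in\mathbb{C}^3$, set $\lambda_4=\tfrac12 X_4-3X_2^2$ and $\lambda_6=(X_3/2)^2+2X_2^3-\tfrac12 X_4X_2$, as prescribed by \eqref{f-14}--\eqref{f-15}; off a proper closed subset this point lies in $\mathcal{B}_1=\mathbb{C}^2\setminus\Delta$. A direct check using these values shows $X_3^2=4X_2^3+4\lambda_4X_2+4\lambda_6$, so by \eqref{f-13} the pair $(X_2,X_3)$ lies on $V_\lambda$; Weierstrass uniformization then yields a unique fibre point $u\in\mathbb{C}/\Gamma_1(\lambda)$ with $(\wp(u),\wp'(u))=(X_2,X_3)$, whereupon \eqref{f-14} forces $\wp''(u)=6X_2^2+2\lambda_4=X_4$. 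Thus $(X_2,X_3,X_4)\mapsto(\lambda;u)$ is a rational left inverse of $J$ on a dense open set, so $J$ is dominant and no nonzero polynomial annihilates $(\wp,\wp',\wp'')$.

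Combining the steps, $J^*$ is a bijective graded $\mathbb{C}$-algebra map, hence a graded isomorphism $\mathcal{A}_1\xrightarrow{\sim}\mathcal{P}_1$, and dually $J\colon\mathcal{U}_1\to\mathbb{C}^3$ is the asserted birational isomorphism. The only delicate bookkeeping is to confirm that $J$ followed by the rational section is the identity on a dense open of $\mathcal{U}_1$, i.e. that the generic fibre point is carried back to itself; this is precisely part~2 of the preceding Theorem read in reverse, so no new computation is required.
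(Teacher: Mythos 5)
Your proof is correct, and its first half coincides with the paper's own argument: the paper proves the corollary solely by differentiating \eqref{f-12} (equivalently \eqref{f-14}) to get $\wp_5=12\wp_2\wp_3$ and then inducting to show every $\wp_i$, $i\geqslant 5$, is a polynomial in $\wp_2,\wp_3,\wp_4$ — i.e.\ it only establishes surjectivity of $J^*$ and leaves injectivity (the algebraic independence of $\wp,\wp',\wp''$, hence birationality) implicit as classical. Your additional step, constructing the rational section $(X_2,X_3,X_4)\mapsto(\lambda;u)$ via \eqref{f-14}--\eqref{f-15} and Weierstrass uniformization, is sound and genuinely fills that gap; the identity $X_3^2=4(X_2^3+\lambda_4X_2+\lambda_6)$ does hold automatically for your choice of $\lambda$, so the section is well defined on a dense open set. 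The only bookkeeping slip is normalization: for the curve $y^2=x^3+\lambda_4x+\lambda_6$ the point on $V_\lambda$ is $(X_2,X_3/2)$ rather than $(X_2,X_3)$, since \eqref{f-13} carries the factor $4$; this does not affect the argument. So your proposal subsumes the paper's proof and is, if anything, more complete.
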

\begin{proof}
The ring $\mathcal{P}_1$ is generated by elliptic functions $\wp_{i},\; i \geqslant 2,$ where $\wp_{i+1} = \wp_{i}'$.
It follows from formula \eqref{f-14} that $\wp_{5} = 12\wp_{2}\wp_{3}$. Hence, each function $\wp_{i}$ is a polynomial
in $\wp_{2},\,\wp_{3}$ and $\wp_{4}$ for all $i \geqslant 5$.
\end{proof}

\begin{corollary}\label{cor-7-8}
{\rm 1.} The operator $L_1 = \partial$ defines on $\mathbb{C}^3$ a polynomial dynamical system
\begin{equation} \label{F-7-sist}
x_2' = x_3, \quad x_3' = x_4, \quad x_4' = 12x_2x_3.
\end{equation}

{\rm 2.} The system \eqref{F-7-sist} has 2 polynomial integrals
\[
\lambda_4 = \frac{1}{2}x_4 - 3x_2^2 \quad \text{and}\quad \lambda_6 = \frac{1}{4}x_3^2 + 2x_2^3 - \frac{1}{2}x_4x_2.
\]
\end{corollary}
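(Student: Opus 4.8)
The plan is to transport everything through the ring isomorphism $J^*$ of Corollary \ref{cor-7-7}, under which $x_2\mapsto\wp$, $x_3\mapsto\wp'$, $x_4\mapsto\wp''$, and then simply read off the action of the operator $L_1=\partial=\partial/\partial u$. For the first two relations of the system \eqref{F-7-sist} there is nothing to prove beyond the definitions: since $x_3$ corresponds to $\wp'=\partial\wp$ and $x_4$ to $\wp''=\partial\wp'$, applying $L_1$ gives $L_1 x_2=\wp'=x_3$ and $L_1 x_3=\wp''=x_4$. The only genuine computation in part~1 is the third relation $x_4'=12x_2x_3$, which is equivalent to expressing $\wp'''=\wp_5$ as a polynomial in $\wp_2,\wp_3,\wp_4$.

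For that step I would differentiate the defining relation \eqref{f-12} (equivalently \eqref{f-14}) with respect to $u$. Writing $\wp_4=6\wp_2^2+2\lambda_4$ and applying $\partial$, the constant $\lambda_4$ drops out and one obtains $\wp_5=12\wp_2\wp_3$, which is exactly the identity already recorded in the proof of Corollary \ref{cor-7-7}. Translating back through $J^*$ yields $L_1 x_4=12x_2x_3$, completing part~1. For part~2 there are two equivalent routes. Conceptually, the two displayed polynomials are the images under $J^*$ of the right-hand sides of \eqref{f-14} and \eqref{f-15}, and those right-hand sides are the curve parameters $\lambda_4$ and $\lambda_6$, which do not depend on $u$; hence $L_1=\partial_u$ annihilates them and they are integrals. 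More self-containedly, I would verify directly from \eqref{F-7-sist} that $L_1\!\left(\tfrac12 x_4-3x_2^2\right)=\tfrac12(12x_2x_3)-6x_2x_3=0$ and that $L_1\!\left(\tfrac14 x_3^2+2x_2^3-\tfrac12 x_4 x_2\right)=\tfrac12 x_3 x_4+6x_2^2 x_3-\tfrac12\!\left(12x_2^2 x_3+x_4 x_3\right)=0$, where every term cancels immediately.

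I do not expect a substantial obstacle here: the entire content reduces to the single differentiation producing $\wp_5=12\wp_2\wp_3$, and that identity is itself inherited from the relation \eqref{f-12}, established for all $g\geqslant 1$ in the discussion preceding Theorem \ref{T-7.1}. The remaining verifications are routine polynomial algebra, and the corollary thus records the $g=1$ instance of the general mechanism of Theorem \ref{T-7.2}, where the commuting flow $L_1$ and its polynomial integrals $\lambda_4,\lambda_6$ specialize to the classical Weierstrass system.
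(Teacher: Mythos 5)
Your proposal is correct and follows exactly the route the paper intends: part 1 reduces to the identity $\wp_5=12\wp_2\wp_3$ obtained by differentiating \eqref{f-14}, which is precisely the computation recorded in the proof of Corollary \ref{cor-7-7}, and part 2 reads the integrals off \eqref{f-14}--\eqref{f-15} (the direct cancellation check you add is a harmless bonus). No gaps.
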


Let us consider the standard Weierstrass model of an elliptic curve
\[
V_g = \{ (x,y) \in \mathbb{C}^2\;:\; y^2 = 4x^3 - g_2x - g_3 \}.
\]
The discriminant of this curve has the form $\Delta(g_2,g_3) = g_2^3 - 27g_3^2$.
We have $V_\lambda = V_g$ where $g_2 = -4\lambda_4$ and $g_3 = -4\lambda_6$.

The elliptic sigma function $\sigma(u;\lambda)$ satisfies the system of equations
\begin{equation} \label{f-16}
\ell_{2i}\,\sigma = H_{2i}\,\sigma, \; i=0,1,
\end{equation}
where
\begin{align*}
  \ell_0 &= 4\lambda_4 \partial_{\lambda_4} + 6\lambda_6 \partial_{\lambda_6};\,\quad H_0 = u\partial - 1;\\
  \ell_2 &= 6\lambda_6 \partial_{\lambda_4} -  \frac{4}{3}\lambda_4^2 \partial_{\lambda_6};\quad H_2 =
\frac{1}{2} \partial^2 + \frac{1}{6} \lambda_4 u^2.
\end{align*}

The operators $\ell_0,\,\ell_2$ and $H_0,\,H_2$, characterizing the sigma function $\sigma(u;g_2,g_3)$
of the curve $V_g$, were constructed in the work of Weierstrass \cite{weier894}. The operators
$L_i \in Der(F_1),\; i=0,1$ and 2, were first found by Frobenius and Stickelberger (see \cite{fs882}).
Below, following work \cite{bl08}, we present the construction of the operators $L_0$ and $L_2$
on the basis of equations \eqref{f-16}.

Let us construct the linear differential operators $\widehat{H}_{2i},\; i=0,1$, of first order,
such that $L_{2i} = \ell_{2i} - \widehat{H}_{2i},\; i=0,1,$ are the differentiations of the ring
$\mathcal{P}_1 = \mathbb{C}[\wp,\wp',\wp'']$.

\vskip .2cm
{\rm 1.} \underline{The formula for $L_0$.}

We have $\ell_0 \sigma = (u\partial - 1)\sigma$. Therefore, $\ell_0\ln \sigma = u\zeta(u)-1$.
Applying the operators $\partial$, $\partial^2$ and using the fact that operators $\partial$,
$\ell_0$ commute, we obtain:
\[
\ell_0\zeta = \zeta - u\wp, \quad \ell_0\wp = 2\wp + u\partial\wp.
\]
Setting $\widehat{H}_0 = u\partial$, we obtain $L_0 = \ell_0 - u\partial$. Consequently
\[
L_0\zeta = \zeta, \quad L_0\wp = 2\wp.
\]

\vskip .2cm
{\rm 2.} \underline{The formula for $L_2$.}

We have $\ell_2 \sigma = \frac{1}{2} \partial^2\sigma - \frac{1}{6} \lambda_4 u^2 \sigma$.
Therefore $\ell_2 \ln \sigma = \frac{1}{2} \frac{\partial^2\sigma}{\sigma} - \frac{1}{6} \lambda_4 u^2$.
We have $\frac{\partial^2\sigma}{\sigma} = -\wp_2 + \zeta^2$. Thus
\begin{equation} \label{F-18}
 \ell_2 \ln \sigma = -\frac{1}{2}\wp_2 + \frac{1}{2}\zeta^2 - \frac{1}{6} \lambda_4 u^2.
\end{equation}
Applying the operators $\partial$ and $\partial^2$ to \eqref{F-18}, we obtain
\[
  \ell_2 \zeta = -\frac{1}{2}\wp_3 + \zeta\partial\zeta - \frac{1}{3}\lambda_4 u, \qquad
  -\ell_2 \wp_2 = -\frac{1}{2}\wp_4 + \wp_2^2 - \zeta\partial \wp_2 - \frac{1}{3}\lambda_4.
\]
Setting $\widehat{H}_2 = \zeta\partial$, we obtain $L_2 = \ell_2 - \zeta\partial$. Consequently,
\[
  L_2 \zeta = -\frac{1}{2}\wp_3 - \frac{1}{3}\lambda_4 u, \qquad
  L_2 \wp_2 = \; \frac{1}{2}\wp_4 - \wp_2^2 + \frac{1}{3}\lambda_4 = \frac{2}{3}\wp_4 - 2\wp_2^2.
\]
Thus, we get the following result:

\begin{theorem}\label{T-7.9}
The Lie $\mathcal{P}_1$-algebra $\mathcal{L}_1$ is generated by operators $L_0,\, L_1$ and $L_2$
such that
\begin{equation}\label{f-19}
  [L_0,\, L_k] = kL_k, \; k=1,2, \qquad
  [L_1,\, L_2] = \wp_2 L_1,
\end{equation}
\begin{equation}\label{f-20}
  L_0 \wp_2 = 2\wp_2;\quad L_1 \wp_2 = \wp_3;\quad L_2 \wp_2 = \frac{2}{3}\wp_4 - 2\wp_2^2.
\end{equation}
\end{theorem}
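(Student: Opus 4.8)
The plan is to observe first that the action formulas \eqref{f-20} require no further work: the identities $L_0\wp_2 = 2\wp_2$, $L_1\wp_2 = \wp_3$ and $L_2\wp_2 = \tfrac{2}{3}\wp_4 - 2\wp_2^2$ were already produced in the construction preceding the statement, the middle one being just the definition $\wp_3 = \partial\wp_2$ together with $L_1 = \partial$. Hence the real content is the set of commutation relations \eqref{f-19}, which are identities between first-order differential operators in the coordinates $(u,\lambda_4,\lambda_6)$. I would establish them directly at the operator level, using the explicit forms $L_0 = \ell_0 - u\partial$, $L_1 = \partial$ and $L_2 = \ell_2 - \zeta\partial$.

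For the two relations $[L_0, L_k] = kL_k$, $k=1,2$, I would invoke the graded structure. With the weights $\deg\lambda_4 = 4$, $\deg\lambda_6 = 6$, $\deg u = -1$ (so $\deg\sigma = -1$, $\deg\zeta = 1$), the operator $\ell_0 = 4\lambda_4\partial_{\lambda_4} + 6\lambda_6\partial_{\lambda_6}$ is the Euler operator in the parameters and $-u\partial$ is the Euler operator in $u$; thus $L_0$ is the total degree-counting derivation, acting on any homogeneous element as multiplication by its degree (one checks $L_0\lambda_4 = 4\lambda_4$, $L_0 u = -u$, $L_0\sigma = -\sigma$, $L_0\zeta = \zeta$). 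One then verifies that $L_1 = \partial$ and $L_2 = \ell_2 - \zeta\partial$ are homogeneous operators of degrees $1$ and $2$: indeed $\partial$ raises degree by $1$, $\ell_2$ raises it by $2$, and $\zeta\partial$ raises it by $1+1=2$. The standard identity $[E, D] = (\deg D)\,D$ for the Euler derivation $E$ then gives $[L_0, L_1] = L_1$ and $[L_0, L_2] = 2L_2$ at once.

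The remaining relation $[L_1, L_2] = \wp_2 L_1$ I would obtain by a one-line operator computation. Since $\ell_2$ involves only the $\lambda$-derivatives it commutes with $\partial$, so
\[
[L_1, L_2] = [\partial,\, \ell_2 - \zeta\partial] = -[\partial,\, \zeta\partial] = -(\partial\zeta)\,\partial = \wp_2\,\partial = \wp_2 L_1,
\]
where I use $\partial\zeta = -\wp = -\wp_2$. This is the only relation carrying nontrivial content, and it becomes immediate once the non-polynomial term $\zeta\partial$ of $L_2$ is handled, because $\zeta$ enters solely through its derivative.

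As an independent cross-check I would re-derive \eqref{f-19} from the action on the generators $\wp_2,\wp_3,\wp_4$ of $\mathcal{P}_1 = \mathbb{C}[\wp_2,\wp_3,\wp_4]$, since a commutator of derivations is again a derivation and is determined by its values on ring generators. This needs the extended table $L_0\wp_i = i\wp_i$, $L_1\wp_3 = \wp_4$, $L_1\wp_4 = \wp_5 = 12\wp_2\wp_3$, together with $L_2\wp_3 = 3\wp_2\wp_3$ and $L_2\wp_4 = 3\wp_3^2 + 2\wp_2\wp_4$. The main obstacle on this second route is precisely the computation of $L_2$ on $\wp_3$ and $\wp_4$: since $\zeta\notin\mathcal{P}_1$, one must check that the non-polynomial contributions cancel. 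They do, because $\ell_2\wp_i = \partial\,\ell_2\wp_{i-1}$ generates a $\zeta\,\partial\wp_i$ term that is exactly cancelled by the $-\zeta\partial$ part of $L_2$, via $\partial\zeta = -\wp_2$. The operator-level argument of the previous two paragraphs avoids this cancellation altogether, which is why I regard it as the cleaner path and would present it as the main proof.
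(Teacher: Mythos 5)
Your proposal is correct, but it distributes its effort differently from the paper. The paper's own proof is a single observation: taking the relations \eqref{f-19}--\eqref{f-20} as already established by the computation preceding the theorem, it notes that they determine the action of $L_0,L_1,L_2$ on all of $\mathcal{P}_1$ through the inductive formula $L_k\wp_{i+1}=[L_k,L_1]\wp_i+L_1L_k\wp_i$; the commutator $[L_1,L_2]=\wp_2L_1$ itself is never explicitly derived there. You invert the emphasis: you verify \eqref{f-19} at the operator level --- the Euler-grading argument for $[L_0,L_k]=kL_k$ (which is legitimate once one knows $L_0\sigma=-\sigma$, hence $L_0\zeta=\zeta$, from the heat equation $\ell_0\sigma=H_0\sigma$) and the one-line computation $[\partial,\ell_2-\zeta\partial]=-(\partial\zeta)\partial=\wp_2\partial$ --- and you relegate the ``the data determines the whole action'' point to a cross-check on the generators $\wp_2,\wp_3,\wp_4$ of $\mathcal{P}_1$. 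Both halves are needed for the full statement, and your version supplies the half the paper leaves implicit while compressing the half the paper spells out; your remark that the $\zeta$-terms must cancel for $L_2$ to preserve $\mathcal{P}_1$ is exactly the right point to flag. Your explicit values $L_2\wp_3=3\wp_2\wp_3=\tfrac12\wp_5-3\wp_2\wp_3$ and $L_2\wp_4=3\wp_3^2+2\wp_2\wp_4$ are correct and consistent with the paper's inductive formula (the paper's own worked example writes $\tfrac43\wp_5-5\wp_2\wp_3$ where the recursion actually yields $\tfrac23\wp_5-5\wp_2\wp_3$, which reduces to your $3\wp_2\wp_3$ via $\wp_5=12\wp_2\wp_3$, so your numbers are the reliable ones).
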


\begin{proof}
Formulas \eqref{f-19}--\eqref{f-20} completely determine the actions of the operators $L_k, \; k=0,1,2,$
on the ring $\mathcal{P}_1$ by the following inductive formula:
\begin{equation}\label{f-21}
L_k \wp_{i+1} = [L_k,\, L_1]\wp_{i} + L_1 L_k \wp_{i}.
\end{equation}
\end{proof}

\begin{example}
Substituting $k=2$ and $i=2$ in \eqref{f-21}, we obtain
\[
L_2 \wp_3 = [L_2,\, L_1]\wp_{2} + L_1 L_2 \wp_{2} = -5\wp_{2}\wp_{3} + \frac{4}{3}\wp_5.
\]
\end{example}

\subsection{\underline{Hyperelliptic functions of genus $g=2$}}

For each curve with affine part of the form
$$
V_{\lambda} = \left\{ (x,y) \in \mathbb{C}^2\,|\, y^2 = x^5 + \lambda_4 x^3 +
\lambda_6 x^2 + \lambda_8 x + \lambda_{10} \right\},
$$
one can construct a sigma-function $\sigma(u; \lambda)$ (see \cite{bel97}).
This function is an entire function in $u = (u_1, u_3) \in \mathbb{C}^2$ with parameters $\lambda =
(\lambda_4, \lambda_6, \lambda_8, \lambda_{10})\in \mathbb{C}^4$.
It has a series expansion in $u$ over the polynomial ring
$\mathbb{Q}[\lambda_4, \lambda_6, \lambda_8, \lambda_{10}]$ in the vicinity of $0$.
The initial segment of the expansion has the form
\begin{multline}\label{F-22}
\sigma(u; \lambda) = u_3 - {1 \over 3}\, u_1^3 + {1 \over 6}\,
\lambda_6 u_3^3 - {1 \over 12}\, \lambda_4 u_1^4 u_3 - {1 \over 6}\,
\lambda_6 u_1^3 u_3^2 - \\ - {1 \over 6}\, \lambda_8  u_1^2 u_3^3 -
{1 \over 3}\, \lambda_{10} u_1 u_3^4 + \left({ 1 \over 60}\,
\lambda_4 \lambda_8 + {1 \over 120}\, \lambda_6^2\right) u_3^5 +
(u^7).
\end{multline}
Here $(u^k)$ denotes the ideal generated by monomials
$u_1^i u_3^j$, $i+j = k$.

The sigma-function is an odd function in $u$, i.e. $\sigma(-u;\lambda)=-\sigma(u;\lambda)$.

Set
\[
\nabla_{\lambda} = \left(
{\partial \over \partial \lambda_4}, \;   {\partial \over \partial
\lambda_6},\;   {\partial \over \partial \lambda_8}, \;   {\partial
\over \partial \lambda_{10}} \right)\quad \text{and}\quad
\partial_{u_1}={\partial \over \partial u_1}, \;  \partial_{u_3}= {\partial \over \partial u_3}.
\]

We need the following properties of the two-dimensional sigma-function \\ (see \cite{bel997, bl05} for details)\,:

{\bf 1.} The following system of equations holds:
\begin{equation} \label{F-23}
 \ell_i\sigma = H_i\sigma, \quad i = 0,2,4,6,\qquad
\end{equation}
where \;$(\ell_0\;\ell_2\;\ell_4\;\ell_6)^\top=T \, \nabla_\lambda$\,,
\[T =
\begin{pmatrix}
4 \lambda_4 & 6 \lambda_6 & 8 \lambda_8 & 10 \lambda_{10} \\[5pt]
6 \lambda_6 & 8 \lambda_8 - {12 \over 5} \lambda_4^2 & 10 \lambda_{10}
- {8 \over 5} \lambda_4 \lambda_6 & - {4 \over 5} \lambda_4 \lambda_8 \\[5pt]
8 \lambda_8 & 10 \lambda_{10} - {8 \over 5} \lambda_4 \lambda_6 &
4 \lambda_4 \lambda_8 - {12 \over 5} \lambda_6^2 & 6 \lambda_4 \lambda_{10}
- {6 \over 5} \lambda_6 \lambda_8 \\[5pt]
10 \lambda_{10} & - {4 \over 5} \lambda_4 \lambda_8 &
6 \lambda_4 \lambda_{10} - {6 \over 5} \lambda_6 \lambda_8 &
4 \lambda_6 \lambda_{10} - {8 \over 5} \lambda_8^2 \\
\end{pmatrix} \qquad\qquad\quad\quad
\]
and
\begin{align*}
H_0 &= u_1\partial_{u_1}+3u_3\partial_{u_3}-3, \\
H_2 &= {1 \over 2}\,\partial_{u_1}^2 - {4 \over 5}\lambda_4 u_3 \partial_{u_1}+u_1\partial_{u_3} - {3 \over 10}\lambda_4 u_1^2 + {1 \over 10}(15\lambda_8-4\lambda_4^2)u_3^2, \\
H_4 &= \partial_{u_1}\partial_{u_3} - {6\over 5}\,\lambda_6u_3 \partial_{u_1} + \lambda_4 u_3 \partial_{u_3} - {1 \over 5}\,\lambda_6u_1^2 + \lambda_8u_1u_3 + {1 \over 10}(30\lambda_{10} -
6\lambda_6\lambda_4)u_3^2 - \lambda_4, \\
H_6 &= {1 \over 2}\,\partial_{u_3}^2 - {3 \over 5}\lambda_8 u_3 \partial_{u_1} - {1 \over 10}\,\lambda_8u_1^2 + 2\lambda_{10}u_1u_3 - {3 \over 10}\,\lambda_8\lambda_4 u_3^2 - {1 \over 2}\,\lambda_6.
\end{align*}

{\bf 2.} The equation $\ell_0 \, \sigma = H_0 \sigma$ implies that  $\sigma$ is a homogeneous function of degree $-3$ in $u_1$, $u_3$, $\lambda_j$.

{\bf 3.} The discriminant of the hyperelliptic curve $V_\lambda$ of genus 2 is equal to $\Delta= {16 \over 5}\, \det T$.
It is a homogeneous polynomial in $\lambda$ of degree $40$.
Set $\mathcal{B} = \{ \lambda \in \mathbb{C}^4\, :\, \Delta(\lambda) \ne 0 \}$; then the curve $V_\lambda$
 is smooth for $\lambda \in \mathcal{B}$.

 We have
$$
\ell_0\,\Delta = 40 \Delta, \quad \ell_2\,\Delta = 0,\quad \ell_4\,\Delta =
12 \lambda_4 \Delta,\quad \ell_6\,\Delta = 4 \lambda_6 \Delta.
$$
Thus, the fields $\ell_0,\ell_2,\ell_4$ and $\ell_6$ are tangent
to the variety $\{ \lambda\in \mathbb{C}^4\;:\; \Delta(\lambda)=0 \}$.

The present study is based on the following results.

\begin{theorem}[uniqueness conditions for the two-dimensional sigma-function]
\text{ }\\
The entire function $\sigma(u;\lambda)$
is uniquely determined by the system of equations {\rm(\ref{F-23})} and initial condition
$\sigma(u;0)=u_3-\frac{1}{3}u_1^3$.
\end{theorem}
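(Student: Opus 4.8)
The plan is to prove uniqueness by showing that the difference $\delta = \sigma - \tilde\sigma$ of any two entire solutions of the system \eqref{F-23} sharing the initial value $u_3-\frac13 u_1^3$ vanishes identically. First I would exploit the equation $\ell_0\sigma = H_0\sigma$ on its own. Since $\ell_0 = \sum_k 2k\,\lambda_{2k}\partial_{\lambda_{2k}}$ is the Euler operator in $\lambda$ and $H_0 = u_1\partial_{u_1}+3u_3\partial_{u_3}-3$, this equation says precisely that $\sigma$, and hence $\delta$, is homogeneous of weighted degree $-3$ for the grading $\deg\lambda_{2k}=2k$, $\deg u_1 = -1$, $\deg u_3 = -3$. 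Decomposing by the degree $D\ge 0$ in $\lambda$, I write $\delta = \sum_{D\ge 0}\delta_{[D]}$, where $\delta_{[D]}$ collects the terms of $\lambda$-degree $D$. Homogeneity forces each $\delta_{[D]}$ to be a \emph{polynomial} in $u$ supported on monomials $u_1^a u_3^b$ with $a+3b=D+3$; in particular each graded piece has only finitely many terms, so the argument becomes purely algebraic, and the initial condition gives $\delta_{[0]}=0$.

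Next I would run an induction on $D$ using the three remaining equations $\ell_i\sigma = H_i\sigma$, $i=2,4,6$. The operators $\ell_2,\ell_4,\ell_6$ raise the $\lambda$-degree by $2,4,6$ respectively, while each $H_i$ splits into a $\lambda$-degree-zero part plus parts of strictly positive $\lambda$-degree. Extracting the $\lambda$-degree-$D$ component of each equation applied to $\delta$ and moving the positive-degree contributions of $H_i$ to the right yields
\begin{equation*}
M_2\,\delta_{[D]} = g_2,\qquad M_4\,\delta_{[D]} = g_4,\qquad M_6\,\delta_{[D]} = g_6,
\end{equation*}
where $M_2 = \tfrac12\partial_{u_1}^2 + u_1\partial_{u_3}$, $M_4 = \partial_{u_1}\partial_{u_3}$, $M_6 = \tfrac12\partial_{u_3}^2$ are the leading $u$-operators and the right-hand sides $g_i$ are built from $\ell_i\delta_{[D-i]}$ together with the positive-$\lambda$-degree parts of $H_i$ applied to components $\delta_{[D']}$ with $D'<D$. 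The crucial bookkeeping point is that every term on the right involves only components $\delta_{[D']}$ with $D'<D$, so the inductive hypothesis $\delta_{[D']}=0$ makes them all vanish, leaving the homogeneous system $M_2\delta_{[D]} = M_4\delta_{[D]} = M_6\delta_{[D]} = 0$.

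Finally I would compute the common kernel of $M_2,M_4,M_6$ on $u$-polynomials. From $\partial_{u_3}^2 f=0$ any kernel element is at most linear in $u_3$; then $\partial_{u_1}\partial_{u_3}f=0$ forces its $u_3$-coefficient to be constant; writing $f = A(u_1)+Bu_3$ with $B$ constant and imposing $M_2 f = \tfrac12 A''(u_1)+Bu_1 = 0$ gives $A(u_1) = -\tfrac13 B u_1^3 + c_1 u_1 + c_0$, so the common kernel is spanned by $\{\,1,\ u_1,\ u_3-\tfrac13 u_1^3\,\}$, of weighted $u$-degrees $a+3b=0,1,3$. Since $\delta_{[D]}$ lives in weighted degree $a+3b=D+3\ge 4$ for every $D\ge 1$, the kernel meets that graded piece only in $0$, forcing $\delta_{[D]}=0$ and closing the induction; the base case $D=0$ is exactly where the surviving kernel direction $u_3-\tfrac13 u_1^3$ is eliminated by the initial condition $\delta_{[0]}=0$. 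I expect the main obstacle to be not any single step but the careful $\lambda$-degree bookkeeping ensuring that each $g_i$ depends only on strictly lower components, combined with the verification that these three leading operators have exactly this small, low-degree common kernel; once both are secured, the recursion closes and uniqueness follows.
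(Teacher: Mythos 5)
The paper states this theorem without proof (deferring to \cite{bel997, bl05}), so there is no in-paper argument to compare against; judged on its own, your proof is correct and is essentially the standard graded-recursion argument underlying those references. The key verifications all check out: $\ell_0-H_0$ acts on a monomial $\lambda^{\alpha}u_1^au_3^b$ of $\lambda$-weight $D$ as multiplication by $D-a-3b+3$, so each $\lambda$-homogeneous piece of $\delta$ is indeed a polynomial in $u$ concentrated in weighted degree $D+3$; the operators $\ell_2,\ell_4,\ell_6$ read off from the matrix $T$ are homogeneous of degrees $2,4,6$; the $\lambda$-degree-zero parts of $H_2,H_4,H_6$ are exactly your $M_2,M_4,M_6$, with all remaining terms of strictly positive $\lambda$-weight; and the common kernel computation giving $\operatorname{span}\{1,\,u_1,\,u_3-\tfrac13u_1^3\}$ is right. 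One point worth making explicit: using $M_2$ alone would not suffice, since its kernel contains weighted-homogeneous polynomials of arbitrarily high degree (e.g.\ $u_1u_3-\tfrac16u_1^4$ in degree $4$), so the simultaneous use of $M_4$ and $M_6$ is genuinely needed, and your argument does use all three. The only hypothesis you should state explicitly is that $\delta$ admits a convergent expansion in powers of $\lambda$ with coefficients entire in $u$, which is what "entire in $(u;\lambda)$" supplies and what justifies the decomposition $\delta=\sum_D\delta_{[D]}$.
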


We have the universal bundle $\pi \colon \mathcal{U}_2 \to \mathcal{B}_2 = \mathbb{C}^4\setminus\mathcal{D}$
and the mapping
\[
\varphi \colon \mathcal{B}_2\times\mathbb{C}^2 \to \mathcal{U}_2\;:\; \lambda \times\mathbb{C}^2
\to \mathbb{C}^2/\Gamma_2(\lambda).
\]

Consider the field $F = F_2$ of functions on $\mathcal{U}_2$ such that the function $\varphi^*(f)$
is meromorphic, and its restriction to the fiber $\mathbb{C}^2/\Gamma_2(\lambda)$ is an hyperelliptic function
for any point $\lambda \in \mathcal{B}_2$.

All the algebraic relations between the hyperelliptic functions of genus 2 follow from the relations,
which in our notations have the form:
\begin{align}
\wp_{1\cdot 4} &= 6\wp_{1\cdot 2}^2 + 4\wp_{1\cdot 1,3\cdot 1} + 2\lambda_4,\label{f-26} \\
\wp_{1\cdot 3,3\cdot 1} &= 6\wp_{1\cdot 2} \wp_{1\cdot 1,3\cdot 1} - 2\wp_{3\cdot 2},\label{f-27}
\end{align}
(see \eqref{f-3} for $i=1$ and $i=2$) and
\begin{align}
\wp_{1\cdot 3}^2 &= 4\left[ \wp_{1\cdot 2}^3 + (\wp_{1\cdot 1,3\cdot 1} + \lambda_4)\wp_{1\cdot 2} + \wp_{3\cdot 2}
+ \lambda_6 \right],\label{f-28} \\
\wp_{1\cdot 3}\wp_{1\cdot 2,3\cdot 1} &= 4\wp_{1\cdot 2}^2\wp_{1\cdot 1,3\cdot 1} + 2\wp_{1\cdot 1,3\cdot 1}^2 -
2\wp_{1\cdot 2}^2\wp_{3\cdot 2} + 2\lambda_4\wp_{1\cdot 1,3\cdot 1} + 2\lambda_8, \label{f-29} \\
\wp_{1\cdot 2,3\cdot 1}^2 &= 4(\wp_{1\cdot 2}\wp_{1\cdot 1,3\cdot 1}^2 - \wp_{1\cdot 1,3\cdot 1}\wp_{3\cdot 2} + \lambda_{10}) \label{f-30}
\end{align}
(see \eqref{f-4} for $(i,k) = (1,1),\, (1,2)$ and $(2,2)$).

Consider the linear space $\mathbb{C}^6$ with the graded coordinates $X = (x_2,x_3,x_4),\; Y = (y_4,y_5,y_6)$, $\deg x_k = k,\; \deg y_k = k$.
Set $\mathcal{A}_2 = \mathbb{C}[X,Y]$.

\begin{theorem}\label{T-7.13}
{\rm 1.} The birational isomorphism $J_2 \colon \mathcal{U}_2 \to \mathbb{C}^6$ is given by the isomorphism of polynomial rings
\[
J_2^* \colon \mathcal{A}_2 \longrightarrow \mathcal{P}_2\;:\; J_2^*X = (\wp_{1\cdot 2},\wp_{1\cdot 3},\wp_{1\cdot 4}),\;
J_2^*Y = (\wp_{1\cdot 1,3\cdot 1},\wp_{1\cdot 2,3\cdot 1},\wp_{1\cdot 3,3\cdot 1}).
\]
{\rm 2.} The projection $\pi_2 \colon \mathbb{C}^6 \to \mathbb{C}^4$ is given by the polynomials
\begin{align}
  \lambda_4 &= -3x_2^2 + \frac{1}{2}x_4 - 2y_4, \label{f-31}  \\
  \lambda_6 &= 2x_2^3 + \frac{1}{4}x_3^2 - \frac{1}{2}x_2x_4 - 2x_2y_4 + \frac{1}{2}y_6, \label{f-32} \\
  \lambda_8 &= (4x_2^2 + y_4)y_4 - \frac{1}{2}(x_4y_4 - x_3y_5 + x_2y_6), \label{f-33} \\
  \lambda_{10} &= 2x_2y_4^2 + \frac{1}{4}y_5^2 -\frac{1}{2}y_4y_6.  \label{f-34}
\end{align}
\end{theorem}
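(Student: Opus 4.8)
The plan is to read Theorem~\ref{T-7.13} as the explicit $g=2$ instance of the general Theorem~\ref{T-7.1} and Corollary~\ref{cor-7-1}, and to prove its two parts separately. Part~1 asserts that the six graded functions
\[
(\wp_{1\cdot 2},\wp_{1\cdot 3},\wp_{1\cdot 4};\,\wp_{1\cdot 1,3\cdot 1},\wp_{1\cdot 2,3\cdot 1},\wp_{1\cdot 3,3\cdot 1})
\]
form a free polynomial generating set of $\mathcal{P}_2$, so that $J_2^*$ is an isomorphism of polynomial rings; part~2 pins down the four projection polynomials $\lambda_{2k}(X,Y)$. Throughout I would lean on the fact, quoted from \cite{bel12} just before the theorem, that every algebraic relation among genus-two hyperelliptic functions is a consequence of the five relations \eqref{f-26}--\eqref{f-30}.

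For part~1 I would first establish surjectivity of $J_2^*$, i.e.\ that every $\wp_\omega$ is a polynomial in the six listed functions. This is the $g=2$ case of Theorem~\ref{T-7.1}, and I would prove it by the grading induction of Corollary~\ref{cor-7-0} and Example~\ref{ex-2}: using $\wp_{1\cdot j,1\cdot 1}=\wp_{1\cdot(j+1)}$ and applying $\partial_{u_1},\partial_{u_3}$ to \eqref{f-26}--\eqref{f-30} expresses each function of higher order as a polynomial in functions of strictly controlled degree, so the reduction terminates by the grading. For instance, differentiating \eqref{f-26} in $u_1$ gives $\wp_{1\cdot 5}=12\wp_{1\cdot 2}\wp_{1\cdot 3}+4\wp_{1\cdot 2,3\cdot 1}$, already a polynomial in generators. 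Injectivity of $J_2^*$, equivalently algebraic independence of the six functions, I would then obtain from a transcendence-degree count: the total space $\mathcal{U}_2$ has dimension $3g=6$, hence its field of functions has transcendence degree $6$ over $\mathbb{C}$; since the six functions generate this field and $\mathcal{A}_2=\mathbb{C}[X,Y]$ is a polynomial ring in six variables, any nontrivial element of $\ker J_2^*$ would force the transcendence degree below $6$, a contradiction.

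For part~2 I would solve \eqref{f-26}--\eqref{f-30} for the parameters. The one auxiliary quantity that is not itself a generator is $\wp_{3\cdot 2}$ of degree $6$; the key preliminary step is to eliminate it via \eqref{f-27}, which gives $\wp_{3\cdot 2}=3\wp_{1\cdot 2}\wp_{1\cdot 1,3\cdot 1}-\tfrac12\wp_{1\cdot 3,3\cdot 1}$, that is $3x_2y_4-\tfrac12 y_6$ under $J_2^*$. Then \eqref{f-26} yields $\lambda_4=\tfrac12 x_4-3x_2^2-2y_4$ directly; substituting this expression and the one for $\wp_{3\cdot 2}$ into \eqref{f-28} gives \eqref{f-32}, into \eqref{f-29} gives \eqref{f-33}, and into \eqref{f-30} gives \eqref{f-34}. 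Each is a routine homogeneous polynomial computation, and the grading $\deg\lambda_{2k}=2k$ serves as a running consistency check.

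The main obstacle is not the algebra of part~2 but justifying the two structural inputs to part~1. Completeness of the relation set \eqref{f-26}--\eqref{f-30} is the deep fact imported from \cite{bel12}; granting it, the genuine work is the careful grading bookkeeping that guarantees the inductive reduction closes on exactly the six generators, so that no new independent function is produced at any order, together with the transcendence-degree argument that upgrades surjectivity to an isomorphism. I would expect the algebraic-independence step to be the subtlest point, and if a more self-contained argument is preferred it can be supplied by inspecting the leading Laurent terms of the six functions read off from the $\sigma$-expansion \eqref{F-22}.
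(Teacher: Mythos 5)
Your proposal follows essentially the same route as the paper's proof: part 2 is obtained by eliminating $\wp_{3\cdot 2}$ via \eqref{f-27} (giving $\wp_{3\cdot 2}=3x_2y_4-\tfrac12 y_6$, exactly the paper's auxiliary quantity) and back-substituting into \eqref{f-26}, \eqref{f-28}--\eqref{f-30}, while part 1 is handled by the same grading induction, applying $\partial_{u_1}$ and $\partial_{u_3}$ to the basic relations to express all higher $\wp_\omega$ polynomially in the six generators. The only difference is that you add an explicit algebraic-independence (injectivity) argument via a transcendence-degree count, a point the paper's proof leaves implicit; that addition is sound, since $\dim\mathcal{U}_2=3g=6$ is known independently of the theorem, so there is no circularity.
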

\begin{proof}
Using the isomorphism $J_2^*$, we rewrite the relations \eqref{f-26} - \eqref{f-30} in the form
\begin{align}
x_4 &= 6x_2^2 + 4y_4 + 2\lambda_4, \label{f-35} \\
y_6 &= 6x_2y_4 - 2\wp_{3\cdot 2}, \label{f-36} \\
x_3^2 &= 4\left[ x_2^3 + (y_4 + \lambda_4)x_2 + \wp_{3\cdot 2} + \lambda_6 \right], \label{f-37} \\
x_3y_5 &= 2\left[ 2x_2^2y_4 + y_4^2 - x_2^2\wp_{3\cdot 2} + \lambda_4y_4 + \lambda_8 \right], \label{f-38} \\
y_5^2 &= 4\left[ x_2y_4^2 - y_4\wp_{3\cdot 2} + \lambda_{10} \right]. \label{f-39}
\end{align}
Directly from relations \eqref{f-35} - \eqref{f-39}, we obtain the formula for the polynomial mapping $\pi_2$,
that is, the proof of assertion 2 of the theorem.

Set $x_{i+1} = \wp_{1\cdot(i+1)},\; y_{i+3} = \wp_{1\cdot i,3\cdot 1},\; i\geqslant 1$.
Applying the operator $\partial_{u_1}$ to formula \eqref{f-35}, we obtain
\begin{equation}\label{f-40}
  x_5 = 12x_2x_3 + 4y_5 = x_5(X,Y).
\end{equation}
Substituting the expression for $\lambda_4$ from \eqref{f-35} and the expression for $\wp_{3\cdot 2}$ from \eqref{f-36}
into the formula \eqref{f-37} and then applying the operator $\partial_{u_1}$, we obtain
\begin{equation}\label{f-41}
  y_7 = 4x_3y_4 + x_2(x_5 + 4y_5 -12x_2x_3) = y_7(X,Y).
\end{equation}
By induction from formulas \eqref{f-40} and \eqref{f-41}, we obtain the polynomial formulas
\begin{equation}\label{f-42}
x_{i+1} = x_{i+1}(X,Y),\qquad y_{i+3} = y_{i+3}(X,Y).
\end{equation}
From formula \eqref{f-36} we obtain
\begin{equation}\label{f-43}
\wp_{3\cdot 2} = 3x_2y_4 - \frac{1}{2}y_6 = z_{6}(X,Y), \qquad \wp_{3\cdot (i+2)} = \partial_{u_3}^i z_{6}(X,Y) = z_{3i+6}.
\end{equation}
The following formulas complete the proof of assertion 1 of the theorem
\begin{align}
  \partial_{u_3}x_{i+1} &= \partial_{u_1}\wp_{1\cdot i,3\cdot 1} = \partial_{u_1}y_{i+3}(X,Y), \label{f-44} \\
  \partial_{u_3}y_{i+3} &= \partial_{u_3}\wp_{1\cdot i,3\cdot 1} = \wp_{1\cdot i,3\cdot 2} =
  \partial_{u_1}^iz_{6}(X,Y). \label{f-45}
\end{align}
\end{proof}

In the course of the proof of Theorem \ref{T-7.13}, we obtained a detailed proof of Theorem \ref{T-7.1} in the case $g=2$.

Set $L_1=\partial_{u_1}$ and $L_3=\partial_{u_3}$.
We introduce the operators $L_i\in\Der(F_2),\; i=0,2,4,6$, based on the operators $\ell_i-H_i$.

\begin{theorem}\label{T-7.14}
The generators of the $F_2$-module $\Der(F_2)$ are given by the formulas
\[
  L_{2k-1} = \partial_{u_{2k-1}},\; k=1,2, \qquad L_{2k} = \ell_{2k} - \widehat H_{2k},\; k=0,1,2,3,
\]
where
\begin{align*}
  \widehat H_{0} &= u_1\partial_{u_1} + 3u_3\partial_{u_3}, \qquad
  \widehat H_{2} = \left( \zeta_1 - \frac{4}{5}\lambda_4u_3 \right)\partial_{u_1} + u_1\partial_{u_3}, \\
  \widehat H_{4} &= \left( \zeta_3 - \frac{6}{5}\lambda_6u_3 \right)\partial_{u_1} + (\zeta_1 + \lambda_4u_3)\partial_{u_3}, \qquad
  \widehat H_{6} = -\frac{3}{5}\lambda_8u_3\partial_{u_1} + \zeta_3\partial_{u_3}.
\end{align*}
\end{theorem}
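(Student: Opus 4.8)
The plan is to establish two assertions: that each listed operator maps $F_2$ into itself, so it is a bona fide element of $\Der(F_2)$, and that the six operators $L_0,L_2,L_4,L_6$ together with $L_1=\partial_{u_1}$, $L_3=\partial_{u_3}$ are $F_2$-independent and hence form a basis of the module. For $L_1,L_3$ the first assertion is immediate, since a $u$-derivative of a hyperelliptic function is again hyperelliptic. For the remaining four operators I would invoke Corollary \ref{cor-7-0}: checking from the explicit shape of $\widehat H_{2k}$ that $[L_{2k},L_1]$ and $[L_{2k},L_3]$ lie in the $\mathcal{P}_2$-span of $L_1,L_3$ (that is, in $\mathcal{L}^*$), it suffices to verify that $L_{2k}\wp_{1,1}$ and $L_{2k}\wp_{1,3}$ belong to $\mathcal{P}_2$ for $k=0,1,2,3$; these are the functions $\wp_{1\cdot 1,(2k-1)\cdot 1}$, $k=1,2$, of Corollary \ref{cor-7-0}.

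The core is a uniform reason why the $\widehat H_{2k}$ are exactly the right corrections. Write each heat operator of \eqref{F-23} in the normal form
\[
H_{2k}=\tfrac12\sum_{i,j\in\{1,3\}}a^{(k)}_{ij}\,\partial_{u_i}\partial_{u_j}+\sum_{i\in\{1,3\}}b^{(k)}_i\,\partial_{u_i}+c^{(k)},
\]
with constant symmetric $a^{(k)}$, coefficients $b^{(k)}_i$ linear in $u$, and $c^{(k)}$ at most quadratic in $u$. Reading off the four tableaux one checks that the stated operators are precisely $\widehat H_{2k}=\sum_{i}\bigl(\sum_{j}a^{(k)}_{ij}\zeta_j+b^{(k)}_i\bigr)\partial_{u_i}$, where $\zeta_i=\partial_{u_i}\ln\sigma$ (for $H_2$, $a^{(2)}_{11}=1$, $b^{(2)}_1=-\tfrac45\lambda_4u_3$, $b^{(2)}_3=u_1$, giving back $\widehat H_2$). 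Since $\ell_{2k}$ differentiates only in $\lambda$, it commutes with $\partial_u$, so $\ell_{2k}\wp_{p,q}=-\partial_{u_p}\partial_{u_q}(\ell_{2k}\ln\sigma)$, and the heat equation together with $\partial_{u_i}\partial_{u_j}\sigma/\sigma=\zeta_i\zeta_j-\wp_{i,j}$ gives
\[
\ell_{2k}\ln\sigma=\tfrac12\sum a^{(k)}_{ij}(\zeta_i\zeta_j-\wp_{i,j})+\sum b^{(k)}_i\zeta_i+c^{(k)}.
\]
Differentiating twice, using $\partial_{u_i}\zeta_j=-\wp_{i,j}$ and $\partial_{u_i}\wp_{\cdots}=\wp_{i\cdots}$, yields exactly a $\zeta$-linear part $\sum a^{(k)}_{ij}\wp_{p,q,i}\zeta_j$, an explicitly $u$-dependent part $\sum b^{(k)}_i\wp_{p,q,i}$, and a part polynomial in the $\wp$'s and $\lambda$'s. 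A direct evaluation of $\widehat H_{2k}\wp_{p,q}=\sum_i(\sum_j a^{(k)}_{ij}\zeta_j+b^{(k)}_i)\wp_{p,q,i}$ reproduces the first two parts verbatim, so in $L_{2k}\wp_{p,q}=\ell_{2k}\wp_{p,q}-\widehat H_{2k}\wp_{p,q}$ all $\zeta$-linear and explicitly $u$-dependent terms cancel, leaving a polynomial in the $\wp$'s and $\lambda$'s, which lies in $\mathcal{P}_2$ by Theorem \ref{T-7.1}. Taking $(p,q)=(1,1),(1,3)$ settles the first assertion; for instance one gets $L_2\wp_{1,1}=-\wp_{1,1}^2+\tfrac12\wp_{1,1,1,1}+2\wp_{1,3}+\tfrac35\lambda_4\in\mathcal{P}_2$.

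For generation, note that $\mathcal{U}_2$ is birational to $\mathbb{C}^6$ by Theorem \ref{T-7.13}, so $F_2$ has transcendence degree $6$ over $\mathbb{C}$ and $\Der(F_2)$ is free of rank $6$. In the frame $(\lambda_4,\lambda_6,\lambda_8,\lambda_{10},u_1,u_3)$ the symbols of $(L_0,L_2,L_4,L_6,L_1,L_3)$ assemble into the block matrix $\left(\begin{smallmatrix}T&B\\0&I_2\end{smallmatrix}\right)$ of determinant $\det T=\tfrac{5}{16}\Delta$, nonzero on $\mathcal{B}_2$; hence the operators are $F_2$-independent and form a basis. Concretely, a relation $\sum f_{2k}L_{2k}+f_1L_1+f_3L_3=0$ applied to $\lambda_4,\lambda_6,\lambda_8,\lambda_{10}$ forces $T^{\top}(f_0,f_2,f_4,f_6)^{\top}=0$, whence all $f_{2k}=0$, and applying the remainder to $\wp_{1,1},\wp_{1,3}$ forces $f_1=f_3=0$.

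The main obstacle is not the cancellation, which is automatic once the $\widehat H_{2k}$ are recognised in the uniform form above, but making the two nondegeneracies precise: that $\det T=\tfrac{5}{16}\Delta$ vanishes exactly on the discriminant, so the four $\lambda$-directions are spanned off $\mathcal{B}_2$, and that the Wronskian-type matrix with entries $\wp_{1,1,1},\wp_{1,1,3},\wp_{1,3,3}$ is a nonzero element of $F_2$, so the two $u$-directions are detected. Both follow from the smoothness of $V_\lambda$ over $\mathcal{B}_2$ and the rank-$6$ rationality of $\mathcal{U}_2$ furnished by Theorem \ref{T-7.13}.
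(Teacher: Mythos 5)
Your proposal is correct, and it reaches the theorem by a genuinely more structural route than the paper. The paper's proof is a four-fold explicit computation: for each $k$ it writes out $\ell_{2k}\ln\sigma$ from the heat equation, applies $\partial_{u_1}$ and $\partial_{u_3}$ twice by hand, and reads off the correction $\widehat H_{2k}$ from the resulting formulas for $\ell_{2k}\zeta_i$, $\ell_{2k}\wp_{1\cdot 2}$, $\ell_{2k}\wp_{1\cdot 1,3\cdot 1}$ (these explicit expressions are then reused in Lemma \ref{lem2} and Theorems \ref{T-7.17} and \ref{t-26}). You instead extract the uniform rule $\widehat H_{2k}=\sum_i\bigl(\sum_j a^{(k)}_{ij}\zeta_j+b^{(k)}_i\bigr)\partial_{u_i}$ from the normal form of $H_{2k}$ and prove the cancellation of all $\zeta$-linear and explicitly $u$-dependent terms once and for all; I checked that this rule does reproduce all four $\widehat H_{2k}$ of the statement and that your sample value $L_2\wp_{1\cdot 2}$ agrees with Theorem \ref{T-7.17}. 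What your approach buys is an a priori explanation of where the corrections come from and an argument that scales to general $g$ (it is essentially item (e) of the programme in Section 3.1); what the paper's approach buys is the stock of explicit intermediate formulas needed later. You also supply two things the paper's proof leaves tacit: the appeal to Corollary \ref{cor-7-0} (with the verification that $[L_{2k},L_1],[L_{2k},L_3]\in\mathcal{L}^*$, which is needed for that corollary to apply) to reduce to the two generators $\wp_{1\cdot 2},\wp_{1\cdot 1,3\cdot 1}$, and the independence/generation argument via the block symbol matrix with $\det T=\tfrac{5}{16}\Delta\ne 0$ on $\mathcal{B}_2$ — the paper only verifies membership in $\Der(F_2)$ and never argues that the six operators generate the module. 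The one place you should be slightly more careful is the last step of independence: the nonvanishing of the fibrewise Wronskian $\wp_{1\cdot 3}\wp_{1\cdot 1,3\cdot 2}-\wp_{1\cdot 2,3\cdot 1}^2$ deserves a one-line justification (e.g. that $\wp_{1\cdot 2},\wp_{1\cdot 1,3\cdot 1}$ are local coordinates on a generic fibre, as follows from the birational isomorphism of Theorem \ref{T-7.13}), but this is a polish issue, not a gap.
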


\begin{proof}
We will use the methods of \cite{bl08} to obtain the explicit form of operators $L_i$
and to describe their action on the ring $\mathcal{P}_2$.
Note here that this theorem corrects misprints made in \cite{bl08, buch16}.

We have $L_1=\partial_{u_1} \in \Der(F_2)$ and $L_3=\partial_{u_3} \in \Der(F_2)$.

Below we use the fact that $[\partial_{u_k},\ell_q]=0$ for $k=1,3$ and $q=0,2,4,6$.

\vskip .2cm
1). \underline{Derivation of the formula for $L_0$.}

Using \eqref{F-23}, we have $\ell_0\sigma=H_0\sigma=(u_1\partial_{u_1}+3u_3\partial_{u_3}-3)\sigma$. Therefore
\begin{equation}\label{F-26}
\ell_0\ln\sigma = u_1\partial_{u_1}\ln\sigma + 3u_3\partial_{u_3}\ln\sigma - 3.
\end{equation}
Applying the operators $\partial_{u_1}$ and $\partial_{u_3}$ to (\ref{F-26}), we obtain
\begin{align}
\ell_0\zeta_1 &= \zeta_1 - u_1\wp_{1\cdot 2} - 3u_3\wp_{1\cdot 1,3\cdot 1}, \label{F-27} \\
\ell_0\zeta_3 &= 3\zeta_3 - u_1\wp_{1\cdot 1,3\cdot 1} - 3u_3\wp_{3\cdot 2}. \label{F-28}
\end{align}
We apply the operator $\partial_{u_1}$ to (\ref{F-27}) to obtain
\[
-\ell_0\wp_{1\cdot 2} = -2\wp_{1\cdot 2} - u_1\wp_{1\cdot 3} - 3u_3\wp_{1\cdot 2,3\cdot 1}.
\]
Therefore
\[
(\ell_0 - u_1\partial_{u_1} - 3u_3\partial_{u_3})\wp_{1\cdot 2} = 2\wp_{1\cdot 2}.
\]
Applying the operator $\partial_{u_1}$  to (\ref{F-28}), we obtain
\[
-\ell_0\wp_{1\cdot 1,3\cdot 1} = -\wp_{1\cdot 1,3\cdot 1} - u_1\wp_{1\cdot 2,3\cdot 1} - 3\wp_{1\cdot 1,3\cdot 1} - 3u_3\wp_{1\cdot 1,3\cdot 2}.
\]
Therefore,
\[
(\ell_0 - u_1\partial_{u_1} - 3u_3\partial_{u_3})\wp_{1\cdot 1,3\cdot 1} = 4\wp_{1\cdot 1,3\cdot 1}.
\]
Thus, we have proved that
\[
L_0 = \ell_0-u_1\partial_{u_1}-3u_3\partial_{u_3} \in \Der(F_2).
\]

\vskip .2cm
2). \underline{Derivation of the formula for $L_2$.}

Using \eqref{F-23}, we have
\[
\ell_2\sigma = H_2\sigma = \left(\frac{1}{2}\,\partial_{u_1}^2 - \frac{4}{5}\,\lambda_4u_3\partial_{u_1} +
u_1\partial_{u_3} + w_2\right)\sigma
\]
where
\[
w_2 = w_2(u_1,u_3) = -\frac{3}{10}\,\lambda_4u_1^2 + \frac{1}{10}\,(15\lambda_8 - 4\lambda_4^2)u_3^2.
\]
Therefore
\[
\ell_2\ln\sigma = \frac{1}{2}\,\frac{\partial_{u_1}^2\sigma}{\sigma} - \frac{4}{5}\,\lambda_4u_3\partial_{u_1}\ln\sigma  + u_1\partial_{u_3} \ln\sigma + w_2.
\]
It holds that
\[
\frac{\partial_{u_1}^2\sigma}{\sigma} = -\wp_{1\cdot 2,0} + \zeta_1^2.
\]
We get
\begin{equation}\label{F-29}
\ell_2\ln\sigma = -\frac{1}{2}\,\wp_{1\cdot 2} + \frac{1}{2}\,\zeta_1^2 - \frac{4}{5}\,\lambda_4u_3\zeta_1 + u_1\zeta_3 + w_2.
\end{equation}
Applying the operators $\partial_{u_1}$ and $\partial_{u_3}$ to (\ref{F-29}), we obtain
\begin{align*}
\ell_2\zeta_1 &= - \frac{1}{2}\,\wp_{1\cdot 3} - \zeta_1 \wp_{1\cdot 2} + \frac{4}{5}\,\lambda_4u_3\wp_{1\cdot 2} + \zeta_3 - u_1\wp_{1\cdot 1,3\cdot 1} + \partial_{u_1}w_2, \\
\ell_2\zeta_3 &= - \frac{1}{2}\,\wp_{1\cdot 2,3\cdot 1} -  \zeta_1\wp_{1\cdot 1,3\cdot 1} - \frac{4}{5}\,\lambda_4\zeta_1 + \frac{4}{5}\,\lambda_4u_3 \wp_{1\cdot 1,3\cdot 1} -
u_1 \wp_{3\cdot 2} + \partial_{u_3}w_2.
\end{align*}
Applying the operator $\partial_{u_1}$  again, we obtain
\begin{equation*}
-\ell_2\wp_{1\cdot 2} = -\frac{1}{2}\,\wp_{1\cdot 4} + \wp_{1\cdot 2}^2 - \zeta_1\wp_{1\cdot 3} + \frac{4}{5}\,\lambda_4u_3\wp_{1\cdot 3} - 2\wp_{1\cdot 1,3\cdot 1} -
u_1\wp_{1\cdot 2,3\cdot 1} + \partial_{u_1}^2w_2,
\end{equation*}
\begin{multline*}
-\ell_2\wp_{1\cdot 1,3\cdot 1} = -\frac{1}{2}\,\wp_{1\cdot 3,3\cdot 1} + \wp_{1\cdot 2}\wp_{1\cdot 1,3\cdot 1} - \zeta_1\wp_{1\cdot 2,3\cdot 1} + \frac{4}{5}\,\lambda_4\wp_{1\cdot 2} +
\frac{4}{5}\,\lambda_4u_3\wp_{1\cdot 2,3\cdot 1} - \\
- \wp_{3\cdot 2} - u_1\wp_{1\cdot 1,3\cdot 2} + \partial_{u_1}\partial_{u_3}w_2.
\end{multline*}
Thus, we have proved that
\[
L_2=\left(\ell_2-\zeta_1\partial_{u_1} - u_1\partial_{u_3} + \frac{4}{5}\,\lambda_4 u_3\partial_{u_1}\right) \in \Der(F_2).
\]
We have $\partial_{u_1}^2w_2=-\frac{3}{5}\lambda_4$ and $\partial_{u_1}\partial_{u_3}w_2=0$.

\vskip .2cm
3). \underline{Derivation of the formula for $L_4$.}

Using \eqref{F-23}, we have
\[
\ell_4\sigma = H_4\sigma = \left(\partial_{u_1}\partial_{u_3} - \frac{6}{5}\,\lambda_6u_3\partial_{u_1} +
\lambda_4u_3\partial_{u_3} + w_4\right)\sigma
\]
where
\[
w_4 =  -\frac{1}{5}\,\lambda_6u_1^2 + \lambda_8u_1u_3 + \frac{1}{10}\,(30\lambda_{10} - 6\lambda_6\lambda_4)u_3^2 - \lambda_4.
\]
Therefore,
\[
\ell_4\ln\sigma = \frac{\partial_{u_1}\partial_{u_3}\sigma}{\sigma} - \frac{6}{5}\,\lambda_6u_3\partial_{u_1}\ln\sigma  + \lambda_4u_3\partial_{u_3} \ln\sigma + w_4.
\]
It holds that
\[
\frac{\partial_{u_1}\partial_{u_3}\sigma}{\sigma} = - \wp_{1\cdot 1,3\cdot 1} + \zeta_1\zeta_3.
\]
We obtain
\begin{equation}\label{F-30}
\ell_4\ln\sigma = - \wp_{1\cdot 1,3\cdot 1} + \zeta_1\zeta_3 - \frac{6}{5}\,\lambda_6u_3\zeta_1 + \lambda_4u_3\zeta_3 + w_4.
\end{equation}
Applying the operators $\partial_{u_1}$ and $\partial_{u_3}$ to (\ref{F-30}), we obtain
\begin{align*}
\ell_4\zeta_1 &= - \wp_{1\cdot 2,3\cdot 1} -  \wp_{1\cdot 2}\zeta_3 - \zeta_1 \wp_{1\cdot 1,3\cdot 1} + \frac{6}{5}\,\lambda_6u_3\wp_{1\cdot 2} - \lambda_4u_3\wp_{1\cdot 1,3\cdot 1} + \partial_{u_1}w_4, \\
\ell_4\zeta_3 &= - \wp_{1\cdot 1,3\cdot 2} - \wp_{1\cdot 1,3\cdot 1}\zeta_3 - \zeta_1\wp_{3\cdot 2} - \frac{6}{5}\,\lambda_6\zeta_1 + \frac{6}{5}\,\lambda_6u_3 \wp_{1\cdot 1,3\cdot 1} + \lambda_4\zeta_3 -
\lambda_4u_3 \wp_{3\cdot 2} + \partial_{u_3}w_4.
\end{align*}
Applying the operator $\partial_{u_1}$ again, we obtain
\begin{multline*}
-\ell_4\wp_{1\cdot 2} = - \wp_{1\cdot 3,3\cdot 1} - \wp_{1\cdot 3}\zeta_3 + \wp_{1\cdot 2}\wp_{1\cdot 1,3\cdot 1} + \wp_{1\cdot 2}\wp_{1\cdot 1,3\cdot 1} - \zeta_1\wp_{1\cdot 2,3\cdot 1} +
\frac{6}{5}\,\lambda_6u_3\wp_{1\cdot 3} - \\ - \lambda_4 u_3\wp_{1\cdot 2,3\cdot 1} + \partial_{u_1}^2w_4,
\end{multline*}
\begin{multline*}
-\ell_4\wp_{1\cdot 1,3\cdot 1} = - \wp_{1\cdot 2,3\cdot 2} - \wp_{1\cdot 2,3\cdot 1}\zeta_3 + \wp_{1\cdot 1,3\cdot 1}^2 + \wp_{1\cdot 2}\wp_{3\cdot 2} -
\zeta_1\wp_{1\cdot 1,3\cdot 2} + \frac{6}{5}\,\lambda_6\wp_{1\cdot 2} + \\
+ \frac{6}{5}\,\lambda_6u_3\wp_{1\cdot 2,3\cdot 1} - \lambda_4\wp_{1\cdot 1,3\cdot 1} - \lambda_4u_3\wp_{1\cdot 1,3\cdot 2} + \partial_{u_1}\partial_{u_3}w_4.
\end{multline*}
Therefore, we have proved that
\[
L_4 = \left(\ell_4-\zeta_3\partial_{u_1}-\zeta_1\partial_{u_3} + \frac{6}{5}\,\lambda_6 u_3\partial_{u_1} -
\lambda_4u_3\partial_{u_3}\right) \in \Der(F_2).
\]
We have $\partial_{u_1}^2w_4= - \frac{2}{5}\,\lambda_6$ and $\partial_{u_1}\partial_{u_3}w_4=\lambda_8$.

\vskip .2cm
4). \underline{Derivation of the formula for $L_6$.}

Using \eqref{F-23}, we have
\[
\ell_6\sigma = H_6\sigma = \left(\frac{1}{2}\,\partial_{u_3}^2 - \frac{3}{5}\,\lambda_8u_3\partial_{u_1} + w_6\right)\sigma
\]
where
\[
w_6 = -\frac{1}{10}\,\lambda_8u_1^2 + 2\lambda_{10}u_1u_3 - \frac{3}{10}\,\lambda_8 \lambda_4 u_3^2 - \frac{1}{2}\,\lambda_6.
\]
Therefore,
\[
\ell_6\ln\sigma = \frac{1}{2}\,\frac{\partial_{u_3}^2\sigma}{\sigma} - \frac{3}{5}\,\lambda_8u_3\partial_{u_1}\ln\sigma + w_6.
\]
We obtain
\begin{equation}\label{F-31}
\ell_6\ln\sigma = -\frac{1}{2}\,\wp_{3\cdot 2} + \frac{1}{2}\,\zeta_3^2 - \frac{3}{5}\,\lambda_8u_3\zeta_1 + w_6.
\end{equation}
Applying the operators $\partial_{u_1}$ and $\partial_{u_3}$ to (\ref{F-31}), we obtain
\begin{align*}
\ell_6\zeta_1 &= - \frac{1}{2}\,\wp_{1\cdot 1,3\cdot 2} - \zeta_3 \wp_{1\cdot 1,3\cdot 1} + \frac{3}{5}\,\lambda_8u_3\wp_{1\cdot 2} + \partial_{u_1}w_6, \\
\ell_6\zeta_3 &= - \frac{1}{2}\,\wp_{3\cdot 3} -  \zeta_3\wp_{3\cdot 2} - \frac{3}{5}\,\lambda_8\zeta_1 + \frac{3}{5}\,\lambda_8u_3 \wp_{1\cdot 1,3\cdot 1} + \partial_{u_3}w_6.
\end{align*}
Applying the operator $\partial_{u_1}$ again, we obtain
\begin{align*}
-\ell_6\wp_{1\cdot 2} &= -\frac{1}{2}\,\wp_{1\cdot 2,3\cdot 2} + \wp_{1\cdot 1,3\cdot 1}^2 - \zeta_3\wp_{1\cdot 2,3\cdot 1} + \frac{3}{5}\,\lambda_8u_3\wp_{1\cdot 3} + \partial_{u_1}^2w_6,\\
-\ell_6\wp_{1\cdot 1,3\cdot 1} &= -\frac{1}{2}\,\wp_{1\cdot 1,3\cdot 3} + \wp_{1\cdot 1,3\cdot 1}\wp_{3\cdot 2} - \zeta_3\wp_{1\cdot 1,3\cdot 2} + \frac{3}{5}\,\lambda_8\wp_{1\cdot 2} +
\frac{3}{5}\,\lambda_8u_3\wp_{1\cdot 2,3\cdot 1} + \partial_{u_1}\partial_{u_3}w_6.
\end{align*}
Therefore, we have proved that
\[
L_6 = \left(\ell_6-\zeta_3\partial_{u_3} + \frac{3}{5}\,\lambda_8 u_3\partial_{u_1}\right) \in \Der(F_2).
\]
We have $\partial_{u_1}^2w_6= - \frac{1}{5}\,\lambda_8$ and $\partial_{u_1}\partial_{u_3}w_6 = 2\lambda_{10}$.
This completes the proof.
\end{proof}

The description of commutation relations in the differential algebra of Abelian functions of genus $2$
was given in \cite{bl08, buch16}, see also \cite{bel12}.
We obtain this result directly from Theorem \ref{T-7.14} and correct some misprints made in \cite{bl08, buch16}.
To simplify the calculations, we use the following results:

\begin{lemma} \label{lem1}
The following commutation relations hold for $\ell_k$:
\begin{align*}
&[\partial_{u_1}, \ell_k] =0, \quad k = 0, 2, 4, 6,
&
&[\partial_{u_3}, \ell_k] =0, \quad k = 0, 2, 4, 6,
\\
&[\ell_0, \ell_k] = k \ell_k, \quad k = 2, 4, 6,
&
&[\ell_2, \ell_4] = {8 \over 5} \lambda_6 \ell_0 - {8 \over 5} \lambda_4 \ell_2 + 2 \ell_6,
\\
&[\ell_2, \ell_6] = {4 \over 5} \lambda_8 \ell_0 - {4 \over 5} \lambda_4 \ell_4,
&
&[\ell_4, \ell_6] = - 2 \lambda_{10} \ell_0  + {6 \over 5} \lambda_8 \ell_2 - {6 \over 5} \lambda_6  \ell_4 + 2 \lambda_4 \ell_6.
\end{align*}
\end{lemma}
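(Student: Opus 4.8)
The plan is to treat the operators $\ell_{2i}$ as polynomial vector fields on the parameter space $\mathbb{C}^4$ with coordinates $\lambda=(\lambda_4,\lambda_6,\lambda_8,\lambda_{10})$, writing $\ell_a=\sum_j T_{a,j}\,\partial_{\lambda_j}$ with the entries $T_{a,j}$ read directly off the matrix $T$. Since each $T_{a,j}$ depends on $\lambda$ only and the operators $\partial_{u_1},\partial_{u_3}$ differentiate in the independent variables $u_1,u_3$, the fields $\partial_{u_k}$ commute with every $\lambda_j$ and with every $\partial_{\lambda_j}$; hence $[\partial_{u_k},\ell_{2i}]=0$ for $k=1,3$ and $i=0,1,2,3$ is immediate, disposing of the first two families of relations.

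For the relations $[\ell_0,\ell_{2i}]=2i\,\ell_{2i}$ I would first observe that $\ell_0=\sum_{m=2}^{5}2m\,\lambda_{2m}\partial_{\lambda_{2m}}$ is exactly the weighted Euler operator for the grading $\deg\lambda_{2m}=2m$, equivalently $\deg\partial_{\lambda_{2m}}=-2m$. A direct inspection of $T$ shows that the coefficient $T_{2i,2m}$ of $\partial_{\lambda_{2m}}$ in $\ell_{2i}$ is homogeneous of degree $2m+2i$, so each $\ell_{2i}$ is a homogeneous vector field of weight $2i$. The elementary identity
\[
[\,\ell_0,\,g\,\partial_{\lambda_{2m}}\,]=(\ell_0 g-2m\,g)\,\partial_{\lambda_{2m}}
\]
then gives $[\ell_0,V]=wV$ for any homogeneous field $V$ of weight $w$; applying this with $w=2,4,6$ yields the three relations at once.

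The substance of the lemma is the three remaining brackets, which I would obtain from the vector-field commutator
\[
[\ell_a,\ell_b]=\sum_{j}\bigl(\ell_a T_{b,j}-\ell_b T_{a,j}\bigr)\,\partial_{\lambda_j},
\]
each $\ell_a T_{b,j}$ being the derivation $\ell_a$ applied to the polynomial entry $T_{b,j}$. Homogeneity sharply constrains the outcome: $[\ell_a,\ell_b]$ has weight $a+b$, so it must be a $\mathbb{C}$-linear combination of the weight-$(a+b)$ fields $\lambda^{\alpha}\ell_{2i}$ built from $\{\ell_0,\ell_2,\ell_4,\ell_6\}$ (for instance $[\ell_2,\ell_4]$ can only involve $\lambda_6\ell_0$, $\lambda_4\ell_2$, $\ell_6$). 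I would therefore compute a minimal set of components of each bracket, read off the coefficients, and then verify that the remaining components are consistent, which pins the constants down. This explicit polynomial bookkeeping is the only genuine labor in the lemma and is where sign and fraction errors are most likely, so I expect it to be the main obstacle.

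As an independent safeguard I would exploit the eigen-relations $\ell_0\Delta=40\Delta$, $\ell_2\Delta=0$, $\ell_4\Delta=12\lambda_4\Delta$, $\ell_6\Delta=4\lambda_6\Delta$ established just above. If $\ell_a\Delta=\mu_a\Delta$ then $[\ell_a,\ell_b]\Delta=(\ell_a\mu_b-\ell_b\mu_a)\Delta$, so each claimed right-hand side, applied to $\Delta$, must reproduce $\ell_a\mu_b-\ell_b\mu_a$. For $[\ell_2,\ell_4]$ this requires the value $72\lambda_6$, for $[\ell_2,\ell_6]$ the value $32\lambda_8-\tfrac{48}{5}\lambda_4^2$, and for $[\ell_4,\ell_6]$ the value $-80\lambda_{10}-\tfrac{32}{5}\lambda_4\lambda_6$; checking that the proposed coefficient combinations hit these targets confirms the computed constants without repeating the full component calculation, and also re-verifies that the brackets remain tangent to the discriminant, in accordance with the identification of $\mathcal{L}_\lambda$ with $Vect_{\mathcal{B}}$.
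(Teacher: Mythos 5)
Your proposal is correct and follows essentially the same route as the paper, which simply asserts that the relations follow directly from the explicit form of the $\ell_{2i}$ as polynomial vector fields in $\lambda$ given by the matrix $T$ in \eqref{F-23}. Your additions --- the Euler-operator/grading argument for $[\ell_0,\ell_k]=k\ell_k$ and the consistency check against the eigenvalues $\ell_a\Delta=\mu_a\Delta$ (all three target values $72\lambda_6$, $32\lambda_8-\tfrac{48}{5}\lambda_4^2$, $-80\lambda_{10}-\tfrac{32}{5}\lambda_4\lambda_6$ do match the claimed right-hand sides) --- are sound refinements of that same direct computation rather than a different method.
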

\begin{proof}
This relations follow directly from \eqref{F-23}.
\end{proof}

\begin{lemma} \label{lem2} The operators $L_i$, $i = 0, 1, 2, 3, 4, 6$, act on $- \zeta_1$ and $- \zeta_3$
according to the formulas
\begin{align*}
L_0(- \zeta_1) &= - \zeta_1 , & L_0(- \zeta_3) &= - 3 \zeta_3,\\
L_1(- \zeta_1) &= \wp_{1\cdot 2}, & L_1(- \zeta_3) &= \wp_{1\cdot 1,3\cdot 1},\\
L_2(- \zeta_1) &= {1 \over 2} \wp_{1\cdot 3} - \zeta_3 + {3 \over 5} \lambda_4 u_1, &
L_2(- \zeta_3) &= {1 \over 2} \wp_{1\cdot 2,3\cdot 1} + {4 \over 5} \lambda_4 \zeta_1 + \left({4 \over 5} \lambda_4^2 - 3 \lambda_8\right) u_3,\\
L_3(- \zeta_1) &= \wp_{1\cdot 1,3\cdot 1}, & L_3(- \zeta_3) &= \wp_{3\cdot 2},
\end{align*}
\begin{multline*}
L_4(- \zeta_1) = \wp_{1\cdot 2,3\cdot 1} + {2 \over 5} \lambda_6 u_1 - \lambda_8 u_3, \qquad
L_4(- \zeta_3) = \wp_{1\cdot 1,3\cdot 2} + {6 \over 5} \lambda_6 \zeta_1 - \lambda_4 \zeta_3 - \lambda_8 u_1 +\quad \\
+ 6 \left({1 \over 5} \lambda_4 \lambda_6 - \lambda_{10}\right) u_3,
\end{multline*}
\[
L_6(- \zeta_1) = {1 \over 2} \wp_{1\cdot 1,3\cdot 2} + {1 \over 5} \lambda_8 u_1 - 2 \lambda_{10} u_3, \;\;
L_6(- \zeta_3) = {1 \over 2} \wp_{3\cdot 3} + {3 \over 5} \lambda_8 \zeta_1 - 2 \lambda_{10} u_1 + {3 \over 5} \lambda_4 \lambda_8 u_3.
\]
\end{lemma}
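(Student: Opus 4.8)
The plan is to observe that this lemma simply re-packages data already produced inside the proof of Theorem~\ref{T-7.14}. Writing each even operator as $L_{2k} = \ell_{2k} - \widehat H_{2k}$ with $\widehat H_{2k}$ as displayed there, the action of $L_{2k}$ on $\zeta_1$ and $\zeta_3$ decomposes into the action of $\ell_{2k}$, which was computed step by step while deriving Theorem~\ref{T-7.14}, and the action of the first-order operator $\widehat H_{2k}$, which is elementary once the second logarithmic derivatives of $\sigma$ are known.

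First I would record the four identities obtained by differentiating $\zeta_1 = \partial_{u_1}\ln\sigma$ and $\zeta_3 = \partial_{u_3}\ln\sigma$:
\[
\partial_{u_1}\zeta_1 = -\wp_{1\cdot 2}, \qquad \partial_{u_3}\zeta_1 = \partial_{u_1}\zeta_3 = -\wp_{1\cdot 1,3\cdot 1}, \qquad \partial_{u_3}\zeta_3 = -\wp_{3\cdot 2}.
\]
Since every $\widehat H_{2k}$ is a linear combination of $\partial_{u_1}$ and $\partial_{u_3}$ with coefficients polynomial in $u$, $\lambda$ and in $\zeta_1,\zeta_3$, these identities turn $\widehat H_{2k}\zeta_1$ and $\widehat H_{2k}\zeta_3$ into explicit polynomials. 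The two rows for $L_1 = \partial_{u_1}$ and $L_3 = \partial_{u_3}$ are then immediate.

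For the four even operators the quantities $\ell_{2k}\zeta_1$ and $\ell_{2k}\zeta_3$ are exactly the intermediate formulas obtained in the proof of Theorem~\ref{T-7.14} by applying $\partial_{u_1}$ and $\partial_{u_3}$ to the relations $\ell_{2k}\ln\sigma=\dots$ (see \eqref{F-26}--\eqref{F-31}), after inserting the recorded values of $\partial_{u_1}w_{2k}$ and $\partial_{u_3}w_{2k}$. Subtracting $\widehat H_{2k}\zeta_j$ from $\ell_{2k}\zeta_j$ then gives the asserted formulas. For example, in $L_0 = \ell_0 - u_1\partial_{u_1} - 3u_3\partial_{u_3}$ the two terms $-u_1\wp_{1\cdot 2} - 3u_3\wp_{1\cdot 1,3\cdot 1}$ occurring in \eqref{F-27} cancel against $-\widehat H_0\zeta_1$, leaving $L_0\zeta_1 = \zeta_1$, whence $L_0(-\zeta_1)=-\zeta_1$; the same cancellation in \eqref{F-28} gives $L_0(-\zeta_3) = -3\zeta_3$.

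The entire argument is thus reduced to sign and term bookkeeping, and I expect the only delicate point to be the $L_4$ and $L_6$ rows. There the drift coefficients $\frac{6}{5}\lambda_6 u_3$, $\lambda_4 u_3$ and $\frac{3}{5}\lambda_8 u_3$ in $\widehat H_4,\widehat H_6$ multiply the derivatives of $\zeta_1,\zeta_3$ and must be seen to cancel precisely the matching $u_3$-linear terms produced by $\ell_4,\ell_6$, so that the surviving expression consists only of a single $\wp$-function together with the $\zeta$- and $u$-linear remainders coming from $\partial_{u_j}w_{2k}$. A direct check of $L_4(-\zeta_3)$ confirms that the products $\zeta_3\wp_{1\cdot 1,3\cdot 1}$ and $\zeta_1\wp_{3\cdot 2}$ drop out and the $u_3$-coefficient collapses to $6(\frac{1}{5}\lambda_4\lambda_6 - \lambda_{10})$, exactly as stated; the remaining rows follow identically.
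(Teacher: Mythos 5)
Your proposal is correct and follows exactly the route of the paper, whose proof of this lemma consists precisely of reading off the formulas from the definitions (for $L_1,L_3$) and from the intermediate identities $\ell_{2k}\zeta_j$ and the expressions $L_{2k}=\ell_{2k}-\widehat H_{2k}$ established in the proof of Theorem \ref{T-7.14}. The cancellation bookkeeping you describe (e.g.\ for $L_0$ and the $u_3$-linear terms in $L_4,L_6$) is exactly what that proof supplies.
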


\begin{proof}
For the operators $L_1, L_3$ this result follows from definitions.
For the operators $L_0, L_2, L_4$ and $L_6$ this result follows from  the proof of Theorem \ref{T-7.14}.
\end{proof}

The following theorem completes the description of the action of generators of the Lie $\mathcal{P}_2$-algebra $\mathcal{L}_2$
on the ring of polynomials $\mathcal{P}_2$.

\begin{theorem} \label{T-7.17}
The operators $L_i,\; i=0,1,2,3,4,6$ act on $\wp_{1\cdot 2}$ and $\wp_{1\cdot 1,3\cdot 1}$ according to the formulas
\begin{align*}
  L_0 \wp_{1\cdot 2} &= 2\wp_{1\cdot 2}, \qquad  L_0 \wp_{1\cdot 1,3\cdot 1} = 4\wp_{1\cdot 1,3\cdot 1},\\
  L_1 \wp_{1\cdot 2} &= \wp_{1\cdot 3}, \qquad \;\; L_1 \wp_{1\cdot 1,3\cdot 1} = \wp_{1\cdot 2,3\cdot 1},\\
  L_2 \wp_{1\cdot 2} &= \frac{1}{2}\wp_{1\cdot 4} - \wp_{1\cdot 2}^2 + 2\wp_{1\cdot 1,3\cdot 1} + \frac{3}{5}\lambda_4,\\
  L_2 \wp_{1\cdot 1,3\cdot 1} &= \frac{1}{2}\wp_{1\cdot 3,3\cdot 1} - \wp_{1\cdot 2}\wp_{1\cdot 1,3\cdot 1} - \frac{4}{5}\lambda_4\wp_{1\cdot 2} + \wp_{3\cdot 2},\\
  L_4 \wp_{1\cdot 2} &= \wp_{1\cdot 3,3\cdot 1} - 2\wp_{1\cdot 2}\wp_{1\cdot 1,3\cdot 1} + \frac{2}{5}\lambda_6,\\
  L_4 \wp_{1\cdot 1,3\cdot 1} &= \wp_{1\cdot 2,3\cdot 2} - \wp_{1\cdot 1,3\cdot 1}^2 - \wp_{1\cdot 2}\wp_{3\cdot 2} - \frac{6}{5}\lambda_6\wp_{1\cdot 2} +
\lambda_4\wp_{1\cdot 1,3\cdot 1} - \lambda_8,\\
  L_6 \wp_{1\cdot 2} &= \frac{1}{2}\wp_{1\cdot 2,3\cdot 2} - \wp_{1\cdot 1,3\cdot 1}^2 + \frac{1}{5}\lambda_8,\\
  L_6 \wp_{1\cdot 1,3\cdot 1} &= \frac{1}{2}\wp_{1\cdot 1,3\cdot 3} - \wp_{1\cdot 1,3\cdot 1}\wp_{3\cdot 2} - \frac{3}{5}\lambda_8\wp_{1\cdot 2} - 2\lambda_{10}.
  \end{align*}
\end{theorem}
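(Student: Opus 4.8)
The plan is to obtain all the displayed formulas by collecting intermediate results that are already assembled in the proof of Theorem~\ref{T-7.14}; the present statement is in essence a reorganization of that computation into the form required by Corollary~\ref{cor-7-0}, where only the action of a derivation on $\wp_{1\cdot 2}$ and $\wp_{1\cdot 1,3\cdot 1}$ is needed to pin it down on all of $\mathcal{P}_2$ (note that $\wp_{1\cdot 1,1\cdot 1}=\wp_{1\cdot 2}$, so these are exactly the two generators singled out there).

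First I dispose of the odd operators $L_1 = \partial_{u_1}$ and $L_3 = \partial_{u_3}$. By the definition \eqref{f-2} one has $\wp_{1\cdot 2} = -\partial_{u_1}^2\ln\sigma$ and $\wp_{1\cdot 1,3\cdot 1} = -\partial_{u_1}\partial_{u_3}\ln\sigma$, so a single further differentiation in $u_1$ or $u_3$ gives the stated index shifts, for instance $L_1\wp_{1\cdot 2} = \wp_{1\cdot 3}$ and $L_1\wp_{1\cdot 1,3\cdot 1} = \wp_{1\cdot 2,3\cdot 1}$; no further input is required.

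Next I treat the even operators $L_0, L_2, L_4, L_6$, which carry the substance of the statement. For each $k\in\{0,1,2,3\}$ the proof of Theorem~\ref{T-7.14} already produced, by applying $\partial_{u_1}$ (and $\partial_{u_3}$) twice to $\ell_{2k}\ln\sigma$, closed expressions for $-\ell_{2k}\wp_{1\cdot 2}$ and $-\ell_{2k}\wp_{1\cdot 1,3\cdot 1}$. I would solve each of these for $\ell_{2k}\wp$ and then subtract $\widehat H_{2k}\wp$, using the definition $L_{2k} = \ell_{2k} - \widehat H_{2k}$ together with the elementary identities $\partial_{u_1}\wp_{1\cdot 2} = \wp_{1\cdot 3}$, $\partial_{u_3}\wp_{1\cdot 2} = \wp_{1\cdot 2,3\cdot 1}$, $\partial_{u_1}\wp_{1\cdot 1,3\cdot 1} = \wp_{1\cdot 2,3\cdot 1}$, $\partial_{u_3}\wp_{1\cdot 1,3\cdot 1} = \wp_{1\cdot 1,3\cdot 2}$. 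The decisive feature is that every term in $\ell_{2k}\wp$ carrying an explicit $\zeta_1$, $\zeta_3$, $u_1$ or $u_3$ is matched precisely by the corresponding first-order term of $\widehat H_{2k}$, so all non-Abelian contributions cancel; what survives is a polynomial in the $\wp$-functions together with the constant coming from $\partial_{u_1}^2 w_{2k}$ or $\partial_{u_1}\partial_{u_3}w_{2k}$, whose explicit values $\partial_{u_1}^2 w_2 = -\tfrac35\lambda_4$, $\partial_{u_1}^2 w_4 = -\tfrac25\lambda_6$, $\partial_{u_1}^2 w_6 = -\tfrac15\lambda_8$, $\partial_{u_1}\partial_{u_3}w_2 = 0$, $\partial_{u_1}\partial_{u_3}w_4 = \lambda_8$, $\partial_{u_1}\partial_{u_3}w_6 = 2\lambda_{10}$ are recorded at the close of each step of that proof. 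Collecting the surviving terms yields the eight formulas; for $k=0$ this returns the grading statements $L_0\wp_{1\cdot 2} = 2\wp_{1\cdot 2}$ and $L_0\wp_{1\cdot 1,3\cdot 1} = 4\wp_{1\cdot 1,3\cdot 1}$, consistent with the degrees of these functions.

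The only delicate point is the cancellation of the $\zeta$- and $u$-terms, but this is guaranteed a priori: Theorem~\ref{T-7.14} asserts $L_{2k}\in\Der(F_2)$, so $L_{2k}\wp$ must be an Abelian function on the fibre and hence free of the non-periodic quantities $\zeta_1,\zeta_3,u_1,u_3$. The cancellation therefore must occur, and the work reduces to extracting the Abelian remainder. I expect the main risk to be purely arithmetical --- tracking the rational coefficients and the constant terms $\partial^2 w_{2k}$ --- which is precisely where the misprints in \cite{bl08, buch16} corrected by Theorem~\ref{T-7.14} arose; as a safeguard I would check each formula against the homogeneity weight $\deg L_{2k} = 2k$, so that the two $L_2$-formulas output degree $4$ and $6$ respectively, and similarly for $L_4$ and $L_6$.
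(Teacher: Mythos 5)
Your proposal is correct and follows essentially the same route as the paper: the paper's own proof of this theorem is a one-line reference back to the computations in the proof of Theorem~\ref{T-7.14}, where $\ell_{2k}\wp_{1\cdot 2}$ and $\ell_{2k}\wp_{1\cdot 1,3\cdot 1}$ are derived and $\widehat H_{2k}$ is then subtracted, exactly as you describe. Your extra remarks (the a priori cancellation of $\zeta$- and $u$-terms because $L_{2k}\in\Der(F_2)$, and the homogeneity check $\deg L_{2k}=2k$) are sound sanity checks but add nothing the paper's argument lacks.
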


Note, that in these formulas the parameters $\lambda_{2k},\; k=4,\ldots,10,$ are considered as polynomials $\lambda_{2k}(W_\wp)$ (see \ref{f-31} - \ref{f-34}).

\begin{proof}
See the derivation of the formulas for the operators  $L_{2k},\; k=0,2,3,4,$ in the proof of Theorem \ref{T-7.14}.
\end{proof}

The following result is based on the formulas of Theorem \ref{T-7.14}.

\begin{theorem} \label{t-26}
The commutation relations in the Lie $F_2$-algebra $\Der(F_2)$ of derivations
of the field $F_2$ have the form
\begin{align*}
[L_0, L_k] &= kL_k, \quad k=1,2,3,4,6; &
[L_1, L_2] &= \wp_{1\cdot 2} L_1 - L_3;\\
[L_1, L_3] &= 0; &
[L_1, L_4] &= \wp_{1\cdot 1,3\cdot 1} L_1 + \wp_{1\cdot 2} L_3; \\
[L_1, L_6] &= \wp_{1\cdot 1,3\cdot 1} L_3; &
[L_3, L_2] &= \left(\wp_{1\cdot 1,3\cdot 1} + {4 \over 5} \lambda_4 \right) L_1; \\
[L_3, L_4] &= \left(\wp_{3\cdot 2} + {6 \over 5} \lambda_6 \right) L_1 + \left(\wp_{1\cdot 1,3\cdot 1} - \lambda_4\right) L_3; &
[L_3, L_6] &= {3 \over 5} \lambda_8 L_1 + \wp_{3\cdot 2} L_3; \\
[L_2, L_4] &=\frac{8}{5}\lambda_6 L_0 -\frac{1}{2}\wp_{1\cdot 2,3\cdot 1}L_1 -\frac{8}{5}\lambda_4 L_2 +\frac{1}{2}\wp_{1\cdot 3}L_3 + 2L_6; \hspace{-20mm} & \\
[L_2, L_6] &= \frac{4}{5}\lambda_8 L_0 -\frac{1}{2}\wp_{1\cdot 1,3\cdot 2}L_1 +\frac{1}{2}\wp_{1\cdot 2,3\cdot 1}L_3 -\frac{4}{5}\lambda_4 L_4; & \\
[L_4, L_6] &=-2\lambda_{10} L_0 - \frac{1}{2}\wp_{3\cdot 3}L_1 +\frac{6}{5}\lambda_8 L_2 +\frac{1}{2}\wp_{1\cdot 1,3\cdot 2}L_3 -\frac{6}{5}\lambda_6 L_4 + 2\lambda_4 L_6.
\hspace{-100mm}&
\end{align*}
\end{theorem}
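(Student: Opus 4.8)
The plan is to treat every $L_i$ as a first-order differential operator, that is a vector field, on the total space $\mathcal{B}_2\times\mathbb{C}^2$ equipped with the global coordinates $(\lambda_4,\lambda_6,\lambda_8,\lambda_{10},u_1,u_3)$. Indeed $L_1=\partial_{u_1}$, $L_3=\partial_{u_3}$, and by Theorem \ref{T-7.14} each $L_{2k}=\ell_{2k}-\widehat H_{2k}$ is the sum of the $\lambda$-vector field $\ell_{2k}$ and a $u$-vector field $\widehat H_{2k}$. Hence each commutator $[L_i,L_j]$ is again a vector field, and the claimed identities assert that it coincides with a specified $F_2$-linear combination of the six frame fields $L_0,L_1,L_2,L_3,L_4,L_6$. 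Since a derivation of the function field is zero as soon as it annihilates each of the six coordinate functions (their differentials span the cotangent space everywhere), it suffices to verify every relation after applying both sides to $\lambda_4,\lambda_6,\lambda_8,\lambda_{10}$ and to $u_1,u_3$: six scalar identities per relation.

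First I would dispose of the four $\lambda$-equations. The operators $L_1,L_3$ and all the $\widehat H_{2k}$ differentiate only in $u$, so on any function of $\lambda$ alone one has $L_{2k}\lambda_m=\ell_{2k}\lambda_m$ and $L_1\lambda_m=L_3\lambda_m=0$. Consequently $[L_i,L_j]\lambda_m=[\ell_i,\ell_j]\lambda_m$, and by Lemma \ref{lem1} this equals $\sum_k c_k\,\ell_k\lambda_m$ with the structure functions displayed there. Matching with the right-hand sides, where the $L_1,L_3$ terms carry $\wp$-coefficients and therefore annihilate the $\lambda_m$, pins down precisely the coefficients of $L_0,L_2,L_4,L_6$ in each bracket, reproducing the modular part $\frac{8}{5}\lambda_6L_0-\frac{8}{5}\lambda_4L_2+2L_6$ of $[L_2,L_4]$ and the analogous terms in $[L_2,L_6]$ and $[L_4,L_6]$.

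Next I would treat the two $u$-equations, which detect the coefficients of $L_1$ and $L_3$. Here $L_{2k}u_1$ and $L_{2k}u_3$ are, up to sign, the coefficients of $\partial_{u_1}$ and $\partial_{u_3}$ in $\widehat H_{2k}$, while $L_1u_1=1$, $L_1u_3=0$, $L_3u_1=0$, $L_3u_3=1$. Computing $[L_i,L_j]u_m=L_i(L_ju_m)-L_j(L_iu_m)$ then requires applying the $L_i$ to the $\zeta$-bearing coefficients of the $\widehat H$, which is supplied verbatim by Lemma \ref{lem2} (the action on $-\zeta_1,-\zeta_3$), supplemented by the Leibniz rule on the explicit terms $\lambda_m u_n$ whose ingredients $L_i\lambda_m$ and $L_iu_n$ are already in hand. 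Equating the resulting functions with the right-hand sides evaluated on $u_1,u_3$ reads off the $\wp$-valued coefficients of $L_1$ and $L_3$; the simplest instance is $[L_1,L_2]$, where the $u_1$-equation yields $\wp_{1\cdot 2}$ and the $u_3$-equation yields $-1$, matching $[L_1,L_2]=\wp_{1\cdot 2}L_1-L_3$, and this is the model for the general case.

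The main obstacle is the bookkeeping in these $u$-equations. After substituting Lemma \ref{lem2}, the raw output of $[L_i,L_j]u_m$ carries both the transcendental quantities $\zeta_1,\zeta_3$ and explicit polynomials in $u_1,u_3$, and one must see these collapse into the single combination produced by the frame fields acting on $u_m$. The collapse is forced: every $L_k$ is a genuine derivation preserving $\mathcal{P}_2$ by Theorem \ref{T-7.14}, so $[L_i,L_j]$ is automatically a derivation expressible in the frame with coefficients in $F_2$, and the $\zeta$- and $u$-dependent pieces can only cancel. Making the cancellation explicit uses the algebraic relations \eqref{f-26}--\eqref{f-30} together with the substitutions \eqref{f-31}--\eqref{f-34} expressing each $\lambda_{2k}$ through $W_\wp$, exactly as in Theorem \ref{T-7.17}. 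As an independent consistency check one may verify that the resulting structure functions satisfy the Jacobi identity; this also confirms that the brackets $[L_1,\cdot]$ and $[L_3,\cdot]$ land in $\langle L_1,L_3\rangle$, so that $\mathcal{L}^*$ is an ideal, in agreement with the general statement recalled at the start of Section 3.
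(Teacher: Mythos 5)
Your proposal is correct and follows essentially the same route as the paper: the authors likewise write each bracket as an $F_2$-linear combination of the frame $L_0,L_1,L_2,L_3,L_4,L_6$, determine the coefficients of $L_0,L_2,L_4,L_6$ by applying both sides to the $\lambda_k$ and invoking Lemma \ref{lem1}, and then determine the coefficients of $L_1,L_3$ by comparing the actions on $u_1,u_3$ via Theorem \ref{T-7.14} and Lemma \ref{lem2}. Your only additions are the (valid) explicit remark that testing a vector-field identity on the six coordinates suffices, and a slight overstatement that the relations \eqref{f-26}--\eqref{f-34} are needed for the matching, which the paper's sample computation of $a_{2,4,-1}$ shows is direct.
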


\begin{proof}
Due to linearity, the relation $[L_0, L_k] = k L_k, \quad k=1,2,3,4,6,$
can be checked independently for every summand in the expression for $L_k$.

The expressions for $[L_m, L_n]$, where $m$ or $n$ is equal to $1$ or $3$,
can be obtained by simple calculations using Theorem \ref{T-7.14}.

It remains to prove the commutation relations among $L_2, L_4$ and $L_6$.
We express $[L_m, L_n]$, where $m < n$ and $m, n = 2, 4, 6$, in the form
\[
[L_m,L_n]= a_{m,n,0} L_0 + a_{m,n,-1} L_1 + a_{m,n,-2} L_2 + a_{m,n,-3} L_3 + a_{m,n,-4} L_4 + a_{m,n,-6} L_6.
\]
We have $\deg a_{i,j,-k} = i+j-k$.
Applying both sides of this equation to $\lambda_k$ and using the explicit expressions for $L_k$, we get
\[
[\ell_m, \ell_n] \lambda_k = (a_{m,n,0} \ell_0 + a_{m,n,-2} \ell_2 + a_{m,n,-4} \ell_4 + a_{m,n,-6} \ell_6) \lambda_k.
\]
This formula and Lemma \ref{lem1} yield the values of the coefficients $a_{m,n,-k}$, $k = 0, 2, 4, 6$:
\begin{align}
[L_2, L_4] &=\frac{8}{5}\lambda_6 L_0 + a_{2,4,-1} L_1 -\frac{8}{5}\lambda_4 L_2 + a_{2,4,-3} L_3 + 2L_6; & \label{bbb24} \\
[L_2, L_6] &= \frac{4}{5}\lambda_8 L_0 + a_{2,6,-1} L_1 + a_{2,6,-3} L_3 -\frac{4}{5}\lambda_4 L_4; & \label{bbb26} \\
[L_4, L_6] &=-2\lambda_{10} L_0 + a_{4,6,-1} L_1 +\frac{6}{5}\lambda_8 L_2 + a_{4,6,-3} L_3 -\frac{6}{5}\lambda_6 L_4 + 2\lambda_4 L_6. \label{bbb46}
\end{align}

In subsequent calculations we compare the actions of the left- and right-hand sides of the expressions
\eqref{bbb24}--\eqref{bbb46} on the coordinates $u_1$ and $u_3$.
To this end we use the expressions \eqref{F-23}, Theorem \ref{T-7.14} and Lemma~\ref{lem2}.

We present the calculation of the coefficient $a_{2,4,-1}$. The left-hand side of \eqref{bbb24} gives
\begin{multline*}
[L_2, L_4] u_1 = L_2(-\zeta_3 + \frac{6}{5}\, \lambda_6 u_3) - L_4(-\zeta_1 + \frac{4}{5}\lambda_4u_3) = \\
= L_2(-\zeta_3) + \frac{6}{5} \ell_2(\lambda_6) u_3 - \frac{6}{5} \lambda_6 u_1
- L_4(-\zeta_1) - \frac{4}{5} \ell_4(\lambda_4) u_3 - \frac{4}{5} \lambda_4 (-\zeta_1 - \lambda_4u_3) = \\
= - {1 \over 2} \wp_{1\cdot 2,3\cdot 1} + {8 \over 5} \lambda_4 \zeta_1 - \frac{8}{5} \lambda_6 u_1
+ {2 \over 5} \left(3 \lambda_8 - {16 \over 5} \lambda_4^2\right) u_3.
\end{multline*}
The right-hand side of \eqref{bbb24} gives
\[
[L_2, L_4] u_1 = a_{2,4,-1} + \frac{8}{5}\lambda_4 \zeta_1 - \frac{8}{5}\lambda_6 u_1
+ {2 \over 5} \left( 3 \lambda_8 -\frac{16}{5}\lambda_4^2\right) u_3.
\]
By equating them, we obtain $a_{2,4,-1} = - {1 \over 2} \wp_{1\cdot 2,3\cdot 1}$.

The coefficients $a_{2,4,-3}$, $a_{2,6,-1}$, $a_{2,6,-3}$, $a_{4,6,-1}$ and $a_{4,6,-3}$
are calculated in a similar way.
\end{proof}

\vskip .2cm


\end{document}